\title{Counting Homomorphisms to Trees \\ Modulo a Prime}
\author{Andreas G\"obel, J. A. Gregor Lagodzinski, Karen Seidel}
\date{\today}
\definecolor{mycolor1}{RGB}{0,5,200}
\definecolor{mycolor2}{RGB}{0,150,5}
\newcommand{\agob}[1]{\marginpar{\tiny{agob: #1}}}
\DeclareMathOperator{\Aut}{Aut}
\DeclareMathOperator{\Orb}{Orb}
\DeclareMathOperator{\dom}{dom}
\DeclareMathOperator{\Bigop}{\bigoplus\nolimits^\mathnormal{p}}
\DeclareMathOperator*{\Crossp}{\sideset{}{^\mathnormal{p}}{\bigoplus}}
\newcommand{\classfont}{\mathsf}
\newcommand{\class}[1]{\mbox{{\(\classfont{\,#1}\)}}}
\newcommand{\pol}{\ensuremath{\class{P}}}
\newcommand{\parp}{\ensuremath{\#_2\class{P}}}
\newcommand{\np}{\ensuremath{\class{NP}}}
\newcommand{\fp}{\ensuremath{\class{FP}}}
\newcommand{\shp}{\ensuremath{\#\class{P}}}
\newcommand{\numpp}{\ensuremath{\#_p\class{P}}}
\newcommand{\numkp}{\ensuremath{\#_k\class{P}}}
\newcommand{\prb}[1]{\textsc{#1}}
\newcommand{\nkp}[1][k]{\ensuremath{\#_{#1}\class{P}}}
\newcommand{\nhcol}[1]{\ensuremath{\#\textsc{HomsTo}{#1}}}
\newcommand{\khcol}[1][H]{\ensuremath{\#_k\prb{HomsTo}{#1}}}
\newcommand{\csp}{\ensuremath{\textsc{CSP}}}
\newcommand{\kcsp}[1][k]{\ensuremath{\#_{#1}\csp}}
\newcommand{\numpsat}{\ensuremath{\#_p\prb{SAT}}}
\newcommand{\ksat}{\ensuremath{\#_k\prb{SAT}}}
\newcommand{\calG}{\ensuremath{\mathcal{G}}}
\newcommand{\calI}{\ensuremath{\mathcal{I}}}
\newcommand{\calF}{\ensuremath{\mathcal{F}}}
\newcommand{\kkhcol}[2]{\ensuremath{\#_{#1}\prb{HomsTo}{#2}}}
\newcommand{\kbis}[1]{\ensuremath{\#_{#1}\prb{BIS}}}
\newcommand{\conbis}[1]{\ensuremath{\#_{#1}\prb{ConBIS}}}
\DeclareMathOperator{\sat}{sat}
\newcommand{\eqclass}[1]{[\hspace{-0.2em}[{#1}]\hspace{-0.2em}]}
\newcommand{\Z}{\ensuremath{\mathbb{Z}}}	
\newcommand{\zp}{\ensuremath{\mathbb{Z}_p}}
\newcommand{\zk}{\ensuremath{\mathbb{Z}_k}}
\newcommand{\zsp}{\ensuremath{\mathbb{Z}^*_p}}
\newcommand{\ints}{\ensuremath{\mathbb{N}}}
\newcommand{\lweight}{\ensuremath{\lambda_\ell}}
\newcommand{\rweight}{\ensuremath{\lambda_r}}
\DeclareDocumentCommand \wISet {m o o}{
	\IfNoValueTF{#2}{	
		\IfNoValueTF{#3}{\ensuremath{Z_{\lweight, \rweight}(#1)}}{}
	}{
	\IfNoValueTF{#3}{}{\ensuremath{Z_{#2, #3}(#1)}}
	}
}
\renewcommand{\L}{\ensuremath{L}}
\newcommand{\R}{\ensuremath{R}}
\newcommand{\lpart}{\ensuremath{V_{\L}}}
\newcommand{\rpart}{\ensuremath{V_{\R}}}
\newcommand{\bipG}{\ensuremath{(\lpart, \rpart, E)}}
\DeclareDocumentCommand \Path {o o}{	
	\IfNoValueTF{#1}{	
		\IfNoValueTF{#2}{\ensuremath{Q}}{\ensuremath{Q_{(#1)}}}	
	}{
	\IfNoValueTF{#2}{\ensuremath{Q_#1}}{\ensuremath{Q_{#1}}}}	
}
\newcommand{\lNeigh}{\ensuremath{W_{\L}}} 
\newcommand{\rNeigh}{\ensuremath{W_{\R}}} 
\newcommand{\IO}[1]{\mathfrak{#1}}	
\newcommand{\rel}[2]{#1 \sim_{\IO{I}} #2}	
\newcommand{\IOclass}[1]{\ensuremath{\eqclass{#1}}_\IO{I} } 
\newcommand{\parhcol}[1][H]{\ensuremath{\#_2\textsc{HomsTo}{#1}}}
\newcommand{\phcol}[1][H]{\ensuremath{\#_p\prb{HomsTo}{#1}}}
\newcommand{\hcol}[1][H]{\ensuremath{\prb{HomsTo}{#1}}}
\newcommand{\partlabphcol}{\ensuremath{\#_p \prb{PartLabHomsTo}H}}
\newcommand{\partlabparhcol}{\ensuremath{\#_2 \prb{PartLabHomsTo}H}}
\newcommand{\pbislr}[2]{\ensuremath{\#_p\prb{BIS}_{#1,#2}}}  
\newcommand{\pcsp}{\ensuremath{\#_p\csp}}
\newcommand{\Homs}[2]{\ensuremath{\mathrm{Hom}\left({#1}\to {#2}\right)}}
\newcommand{\isoto}{\cong}
\newcommand{\Lovasz}{Lov\'asz}
\newcommand{\Nesetril}{Ne\v{s}et\v{r}il}
\newcommand\probpar[4]{
 \begin{problem}
  \begin{description}
    \item \emph{Name.} #1
    \item \emph{Parameter.} #2
    \item \emph{Input.} #3
    \item \emph{Output.} #4

  \end{description}
  \end{problem}
}
\newcommand{\paris}{\ensuremath{\#_2\prb{IS}}}
\newcommand{\Homst}[2]{\mathrm{Hom}({#1}\to {#2})}
\newcommand{\dist}{\ensuremath{d}}
\renewcommand{\Aut}{\mathrm{Aut}}
\renewcommand{\Orb}{\mathrm{Orb}}
\newcommand{\vecu}{\mathbf{u}}
\newcommand{\vecv}{\mathbf{v}}
\newcommand{\vecw}{\mathbf{w}}
\newcommand{\ubar}{\bar{u}}
\newcommand{\vbar}{\bar{v}}
\newcommand{\wbar}{\bar{w}}
\newcommand{\xbar}{\bar{x}}
\newcommand{\InjHoms}[2]{\mathrm{InjHom}({#1}\to {#2})}
\newcommand{\kis}{\ensuremath{\#_k\prb{IS}}}
\newcommand{\seqclass}[1]{[\hspace{-0.27em}[{#1}]\hspace{-0.28em}]}
\theoremstyle{plain}
\newtheorem{theorem}{Theorem}[section]
\newtheorem{corollary}[theorem]{Corollary}
\newtheorem{lemma}[theorem]{Lemma}
\newtheorem{proposition}[theorem]{Proposition}
\theoremstyle{definition}
\newtheorem{definition}[theorem]{Definition}
\newtheorem{conjecture}[theorem]{Conjecture}
\newtheorem{problem}[theorem]{Problem}
\begin{document}

\maketitle

\begin{abstract}
	Many important graph theoretic notions can be encoded as counting graph homomorphism problems, such as partition 
functions in statistical physics, in particular independent sets and colourings.
In this article we study the complexity of~$\#_p\textsc{HomsTo}H$, the problem of counting graph homomorphisms from an input graph to a graph $H$ modulo a prime number~$p$.
Dyer and Greenhill proved a dichotomy stating that the tractability of non-modular counting graph homomorphisms depends on the structure of the target graph.
Many intractable cases in non-modular counting become tractable in modular counting due to the common phenomenon of cancellation.
In subsequent studies on counting modulo~$2$, however, the influence of the structure of~$H$ on the tractability was shown to persist, which yields similar dichotomies.

Our main result states that for every tree~$H$ and every prime~$p$ the problem $\#_p\textsc{HomsTo}H$ is either polynomial time computable or $\#_p\mathsf{P}$-complete.
This relates to the conjecture of Faben and Jerrum stating that this dichotomy holds for every graph $H$ when counting modulo~2.
In contrast to previous results on modular counting, the tractable cases of $\#_p\textsc{HomsTo}H$ are essentially the same for all values of the modulo when $H$ is a tree.
To prove this result, we study the structural properties of a homomorphism.
As an important interim result, our study yields a dichotomy for the problem of counting weighted independent sets in a bipartite graph modulo some prime~$p$.
These results are the first suggesting that such dichotomies hold not only for the one-bit functions of the modulo~2 case but also for the modular counting functions of all primes~$p$.

\end{abstract}	

\section{Introduction}
\label{sec:intro}
Graph homomorphisms generate a powerful language expressing important notions; examples include constraint satisfaction 
problems and partition functions in statistical physics. As such, the computational complexity of graph homomorphism 
problems has been studied extensively from a wide range of views.
Early results include that of Hell and \Nesetril{}~\cite{HN90}, who study the complexity of~\hcol{}, the problem 
of deciding if there exists a homomorphism from an input graph $G$ to a fixed graph $H$. They show the following 
dichotomy: if $H$ is bipartite or has a loop, the problem is in $\pol$ and in every other case \hcol{} is 
$\np$-complete. In particular, this is of interest since a result of Ladner~\cite{Lad75} shows that if $\pol\neq\np$, then there 
exist problems that are neither in $\pol$ nor $\np$-hard.

Dyer and Greenhill~\cite{DG00} show a dichotomy for the problem \nhcol{H}, the problem of counting the 
homomorphisms from an input graph~$G$ to~$H$. Their theorem states that \nhcol{H} is tractable if $H$ is a complete 
bipartite graph 
or a complete graph with loops on all vertices; otherwise \nhcol{H} is $\shp$-complete. This dichotomy was 
progressively extended 
to weighted sums of homomorphisms with integer weights, by Bulatov and Gohe~\cite{BG05}; with real weights, by Goldberg 
et al.~\cite{GGJT}; finally, with complex weights, by Cai, Chen and Lu~\cite{CCL13}. 

We study the complexity of counting homomorphisms modulo a prime~$p$. The set of homomorphisms from 
the input graph~$G$ to the target graph~$H$ is denoted by $\Homs{G}{H}$. For each pair of fixed parameters $p$ and $H$,
 we study the computational problem \phcol{}, that is the problem 
of computing $|\Homs{G}{H}|$ modulo~$p$. The value of~$p$ and the structure of the target 
graph~$H$ influence the complexity of \phcol{}. Consider the graph $H$ in Figure~\ref{fig:intro_tree}. Our results show 
that \phcol{} is computable in polynomial time when $p=2,3$ while it is hard for any other prime~$p$.

\begin{figure}[t]
	\centering
	\includegraphics[width=0.3\textwidth]{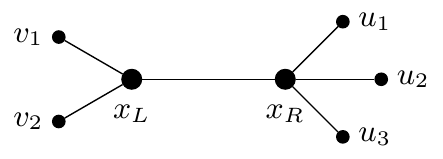}
	\caption{The graph $H$ will be our recurring example and the labelling of the vertices is justified later 
in the introduction.}
	\label{fig:intro_tree}
\end{figure}

Our main goal is to fully characterise the complexity of \phcol{} in a dichotomy theorem. In this manner we aim to 
determine for which pair of parameters~$(H,p)$ the problem is tractable and show that for every other pair of 
parameters the problem is hard. As the theorem of Ladner~\cite{Lad75} extends to the modular counting problems, it is 
not obvious that there are no instances of \phcol{} with an intermediate complexity.

The first study of graph homomorphisms under the setting of modular counting has been conducted by Faben and 
Jerrum~\cite{FJ13}. Their work is briefly described in the following and we assume the reader to be familiar 
with the notion of an automorphism and its order. We provide the formal introduction in Section~\ref{sec:prelim}.
Given a graph $H$ and an automorphism~$\varrho$ of $H$, $H^\varrho$ denotes the subgraph of $H$ induced by the 
fixpoints of $\varrho$.
We write $H \Rightarrow_k H'$ if there is an automorphism~$\varrho$ of order~$k$ of~$H$ such that $H^\varrho=H'$ and we 
write $H \Rightarrow_k^* H'$ if either $H$ is isomorphic to~$H'$ (written
$H\isoto H'$) or, for some positive integer~$t$, there are graphs $H_1, \dots, H_t$ such that 
$H \isoto H_1$, $H_1 \Rightarrow_k \cdots \Rightarrow_k H_t$, and $H_t  \isoto H'$.

Faben and Jerrum showed \cite[Lemma~3.3]{FJ13} that if the order of $\varrho$ is a prime~$p$, then
$|\Homs{G}{H}|$ is equivalent to $|\Homs{G}{H^\varrho}|$ modulo~$p$. 
Furthermore, they showed~\cite[Theorem~3.7]{FJ13} that there is (up to isomorphism) 
exactly one graph~$H^{*p}$ without automorphisms of order~$p$, such that  $H\Rightarrow_p^* H^{*p}\!$. This 
graph~$H^{*p}$ is called the \emph{order~$p$ reduced form} of~$H$. If $H^{*p}$ falls
into the polynomial computable cases of the theorem of Dyer and Greenhill, then 
\phcol{} is computable in polynomial time as well. For $p=2$, Faben and Jerrum conjectured that these are 
the only instances computable in polynomial time.

\pagebreak

\begin{conjecture}[Faben and Jerrum~\cite{FJ13}] \label{conj:FJ}
  Let $H$ be a  graph.  If its order~2 reduced form~$H^{*2}$ has at most one vertex, then \parhcol{} is in
    $\fp$; otherwise, \parhcol{} is $\parp$-complete.
\end{conjecture}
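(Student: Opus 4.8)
\emph{The plan.} The easy direction is essentially immediate from the Faben--Jerrum reduction. If $H^{*2}$ has at most one vertex then, by \cite[Lemma~3.3]{FJ13}, $|\Homs{G}{H}| \equiv |\Homs{G}{H^{*2}}| \pmod 2$ for every input $G$, and $|\Homs{G}{H^{*2}}| \le 1$ is trivially computable: there is a unique candidate map into a single vertex, and it is a homomorphism iff that vertex carries a loop or $G$ has no edge, while if $H^{*2}$ is empty the count is $1$ exactly when $G$ is empty. Hence everything reduces to the hardness direction: assuming $H^{*2}$ has at least two vertices, show $\parhcol{}$ is $\parp$-complete. Since $|\Homs{G}{H}| \equiv |\Homs{G}{H^{*2}}| \pmod 2$ is an identity of values, it suffices to prove hardness under the extra hypotheses that $H$ has no automorphism of order $2$ (i.e. $H = H^{*2}$) and $|V(H)| \ge 2$. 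A graph with two isomorphic connected components admits an order-$2$ automorphism interchanging them, so such an $H$ has pairwise non-isomorphic components, each again order-$2$-automorphism-free; a component-isolation argument in the style of \cite{FJ13} then reduces us to the case that $H$ is connected. A connected order-$2$-automorphism-free $H$ with $|V(H)| \ge 2$ is never $K_2$ (whose edge-swap has order $2$), nor any other complete bipartite graph or reflexive clique, so $H$ is not among the polynomial-time targets of \cite{DG00}.

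\emph{Pinning.} The engine of the hardness proof is to show that $\#_2\prb{PinnedHomsTo}H$ --- counting modulo $2$ the homomorphisms $G \to H$ that extend a prescribed partial map from $V(G)$ to $V(H)$ --- reduces to $\parhcol{}$. Since $H$ has no automorphism of order $2$, the group $\Aut(H)$ has odd order by Cauchy's theorem, hence every orbit of $\Aut(H)$ on $V(H)$ has odd size; this is precisely what prevents the mod-$2$ cancellation that obstructs pinning. One first realises ``pinning a vertex of $G$ to an $\Aut(H)$-orbit'' by attaching rigid gadgets, and then separates an orbit into individual vertices with further gadgets that distinguish them. This is the mod-$2$ counterpart of the classical pinning lemmas for exact counting, with order-$2$-automorphism-freeness playing the role that asymmetry plays there.

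\emph{The two structural cases.} With pinning available, one encodes a $\parp$-complete problem. If $H$ is bipartite with parts $\lpart, \rpart$: a homomorphism from a connected bipartite $G = \bipG$ to $H$ sends $\L$ into one part and $\R$ into the other, and contracting the $\rpart$-side into weights expresses the count as the partition function of a bipartite weighted independent-set (hardcore-type) model whose interaction data are common-neighbourhood sizes in $H$; applying the modular dichotomy for counting weighted independent sets in bipartite graphs --- the interim result the paper develops --- yields $\parp$-completeness, the hard side applying because $H$ being connected, bipartite and not complete bipartite forces a sub-pattern in these data that is not of the tractable (``block'') form. If $H$ is non-bipartite, fix a shortest odd cycle together with two non-adjacent vertices having a common neighbour on it, and use pinned walks as gadgets to simulate the disequality/non-edge relation; this gives a reduction from $\paris$, the parity of the number of independent sets, which is $\parp$-complete.

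\emph{The main obstacle.} The real difficulty --- and the reason Conjecture~\ref{conj:FJ} is still open in general --- is to carry out the gadget constructions \emph{uniformly} over all connected order-$2$-automorphism-free $H$ with $|V(H)| \ge 2$. For restricted target classes such as trees one can induct on the structure of $H$, as the present paper does; but in general one needs either a structural decomposition that always exposes a hard sub-pattern, or a single gadget family parametrised by invariants of $H$ (its girth, or a pair of vertices at a fixed distance whose common-neighbourhood sizes are suitably ``unbalanced'' modulo $2$). Simultaneously controlling these neighbourhood counts modulo $2$ and the global combinatorics of $H$ is where I expect the argument to stall, and is exactly what a proof of the conjecture must overcome.
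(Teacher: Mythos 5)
The statement you are asked about is a \emph{conjecture}, not a theorem of this paper: neither the paper nor anyone else proves it, and the paper only establishes the corresponding dichotomy when the ($p$-)reduced form of $H$ is a tree (or forest). Your proposal reflects this honestly --- the tractability direction (via \cite[Lemma~3.3]{FJ13} and the trivial count into at most one vertex) is correct, and your reductions to the connected, order-$2$-automorphism-free case and your pinning discussion (Cauchy's theorem giving odd automorphism group, hence odd orbits) match the known machinery of \cite{FJ13,GGR15} and of Section~\ref{sec:pinning} here. But what remains is precisely the entire hardness direction for general $H$, which you yourself flag as the open obstacle; so as a proof of Conjecture~\ref{conj:FJ} the proposal has a genuine and unavoidable gap.

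Beyond the admitted gap, one concrete step in your sketch would fail as stated. In the bipartite case you claim that contracting one side ``expresses the count as the partition function of a bipartite weighted independent-set model'' and that the paper's interim dichotomy (Theorem~\ref{thm:wbis-hardness}) then applies. That dichotomy concerns only \emph{vertex} weights $\lweight,\rweight$; counting homomorphisms into an arbitrary connected bipartite non-complete-bipartite $H$ yields a two-spin model with a general interaction matrix (common-neighbourhood counts), which is not an instance of \pbislr{\lweight}{\rweight}. In this paper the reduction to \pbislr{\lweight}{\rweight} only goes through because a non-star tree contains an $(a,b,p)$-path (Definition~\ref{def:ab-paths}, Lemmas~\ref{lem:tree_star_or_hard} and~\ref{lem:Homs_trees_hardcase}), whose internal degrees $\equiv 1$ make the edge interaction collapse to an independent-set constraint; no such collapse is available for general bipartite $H$, which is exactly why \cite{GGR14,GGR15} needed much more delicate gadgetry even for cactus and square-free graphs, and why the conjecture is still open.
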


Faben and Jerrum~\cite[Theorem~3.8]{FJ13} underlined their conjecture by proving it for the case in which~$H$ is a 
tree. In subsequent works this proof was extended to cactus graphs in~\cite{GGR14} and to square-free 
graphs in~\cite{GGR15}, by G\"obel, Goldberg and Richerby. 

The present work follows a direction orthogonal to the aforementioned. Instead 
of proving the conjecture for richer classes of graphs, we show a dichotomy for all primes, starting again by restricting the target graph $H$ to be a tree.

\begin{theorem}\label{thm:modp-trees}
 Let $p$ be a prime and let $H$ be a graph, such that its order~$p$ reduced form~$H^{*p}$ is a tree. If $H^{*p}$ is a 
star, then \phcol{} is computable in polynomial time; otherwise, \phcol{} is $\numpp$-complete.
\end{theorem}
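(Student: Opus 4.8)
The plan is to dispose of the tractable case at once and then build the hardness direction in three layers: a pinning reduction, a reduction from counting weighted independent sets in bipartite graphs, and a modular dichotomy for that weighted problem. \emph{Setup and tractability.} By \cite[Lemma~3.3]{FJ13} we have $|\Homs{G}{H}|\equiv|\Homs{G}{H^{*p}}|\pmod p$ for every input $G$, so \phcol{} and \phcol[H^{*p}] are mutually reducible via the identity map on instances; hence we may assume $H=H^{*p}$, a tree with no automorphism of order~$p$. If $H$ is a star $K_{1,m}$ then, being a complete bipartite graph (the cases $m\le 1$ being trivial directly), $G\mapsto|\Homs{G}{H}|$ is computable in polynomial time by Dyer and Greenhill~\cite{DG00}, so a fortiori \phcol{} is computable in polynomial time too. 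From now on assume $H=H^{*p}$ is a tree but not a star; membership of \phcol{} in \numpp{} is immediate, and it remains to prove \numpp-hardness.

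\emph{Pinning and reduction to weighted bipartite independent sets.} First I would reduce \partlabphcol{}, the partially labelled version, to \phcol{}. The lever is that $H=H^{*p}$ has no automorphism of order~$p$, so by Cauchy's theorem $p\nmid|\Aut(H)|$; hence every orbit of $\Aut(H)$---on single vertices and, crucially, on tuples of vertices---has size coprime to~$p$ and therefore invertible in $\zp$. This is the right replacement for the $\mathbb{F}_2$ linear algebra used for $p=2$ in~\cite{FJ13,GGR14,GGR15}: one simulates ``pinning to an orbit'' by leaving the vertex free, and then disentangles the orbits by attaching distinguishing gadgets and inverting a linear system over $\zp$ whose invertibility is exactly the coprimality of the orbit sizes. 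Next, since $H$ is a non-star tree it contains an induced path on four vertices; fixing such a path and treating the remainder of $H$ as branches hanging off its interior, I would, given an instance $G=\bipG$ of counting weighted independent sets, build a partially labelled graph---two-state selector gadgets at the vertices of $G$, ``not both selected'' gadgets at its edges, and a few pinned vertices to fix the orientation of the bipartition---whose labelled homomorphisms to $H$ correspond weight-for-weight to $\spin{\lweight}{\rweight}(G)$, where the vertex weights $\lweight,\rweight\in\zp$ are certain homomorphism counts of the branches (the quantities denoted $N_3(\cdot,\cdot)$). Combined with the pinning step this yields $\#_p\spin{\lweight}{\rweight}\le^p_T\phcol{}$ for the particular $(\lweight,\rweight)$ coming from $H$.

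\emph{A modular dichotomy for weighted bipartite independent sets, and placing $H$.} I would then prove that $\#_p\spin{\lweight}{\rweight}$ is \numpp-complete whenever $(\lweight,\rweight)$ is non-degenerate modulo~$p$ (the degenerate cases being weights that collapse the partition function, e.g.\ $\lweight\rweight\equiv 0$). This needs a base case---a standard \numpp-complete problem, such as \numpsat{} or \ISP{}, reducing to the weighted count via a gadget construction together with interpolation---and a weight-propagation step: blowing up vertices and attaching pendant paths rescales the effective vertex weights, and a Vandermonde argument over $\zp$---valid because the evaluation points produced are pairwise distinct modulo~$p$---recovers all non-degenerate weights. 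Finally I would verify that the weights extracted in the previous layer from a $p$-reduced non-star tree are never degenerate modulo~$p$: being $p$-reduced pins down the relevant branch sizes modulo~$p$, and being a non-star tree guarantees enough branches to select a ``good'' pair of anchor vertices, or else to combine two pairs.

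\emph{Main obstacle.} I expect the crux to be this last point together with the weight-propagation of the previous paragraph. In the mod-$2$ setting only the trivial one-bit weights occur, so that layer is vacuous there; modulo a general prime there is a whole spectrum of weights, and one must both run an honest polynomial interpolation over $\zp$ (keeping the evaluation points distinct modulo~$p$, which is delicate once degrees reach~$p$) and establish the purely combinatorial fact that no $p$-reduced non-star tree ``flattens'' to a degenerate weight modulo~$p$. A secondary difficulty is making the pinning step go through uniformly in~$p$: the coprimality $p\nmid|\Aut(H)|$ is the correct substitute for the $\mathbb{F}_2$ structure, but one has to check it suffices for every orbit-separation system that arises, including those indexed by tuples of vertices rather than by single vertices.
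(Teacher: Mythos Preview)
Your three-layer skeleton---pinning, then a reduction from weighted bipartite independent sets, then a modular dichotomy for that problem---is exactly the architecture the paper uses, and your pinning discussion (orbit sizes coprime to $p$ by Cauchy, linear algebra over $\zp$ on tuple vectors) matches Section~5 well. The differences lie in how the other two layers are executed.

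\emph{Where the paper is sharper than your middle layer.} You anchor the gadget construction on ``an induced $P_4$'' and then propose to sort out whatever weights fall out, possibly by combining several anchor pairs. The paper instead isolates a single clean structural notion, the $(a,b,p)$-\emph{path}: a unique path $x_0\dots x_k$ with $\deg_H(x_0)\equiv a$, $\deg_H(x_k)\equiv b$, $a,b\not\equiv 1$, and every interior vertex of degree $\equiv 1\pmod p$. Lemma~6.2 shows every $p$-reduced non-star tree contains one, and the interior-degree condition is exactly what makes the edge gadget (a copy of $P_k$) contribute a factor $\equiv 1\pmod p$ regardless of how its endpoints are mapped (Lemma~6.5, property~4). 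This gives the reduction $\pbislr{a-1}{b-1}\le_T^p\partlabphcol$ with \emph{guaranteed nonzero} weights $a-1,b-1\in\zsp$, so your feared ``degenerate weight'' case simply does not arise and no combining of pairs is needed. Your ``select a good pair of anchor vertices'' is the right instinct, but the precise condition that makes the path gadget transparent modulo $p$ is the interior degrees being $\equiv 1$, which you have not identified.

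\emph{Where the paper avoids your interpolation worry entirely.} For the dichotomy of $\pbislr{\lweight}{\rweight}$ the paper does \emph{not} interpolate or propagate weights. It gives, for every pair $\lweight,\rweight\in\zsp$, a single direct parsimonious-style reduction from $\numpsat$ (Theorem~3.1), using a family of bipartite gadgets $B(k,p)$---the complete bipartite graph $K_{2(p-1),2(p-1)}$ with $k$ matching edges removed---and choosing $k$ so that $\wISet{B(k,p)}\equiv 0$ while $\wISet{B(k,p)-u_\L},\wISet{B(k,p)-v_\R}\not\equiv 0$ (Lemma~3.7). Your Vandermonde route might be made to work, but the concern you flag about evaluation points colliding modulo $p$ is real, and the paper's direct construction sidesteps it completely.

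In short: your plan is correct in outline and would likely succeed, but the paper's $(a,b,p)$-path and $B(k,p)$ gadget replace your two acknowledged ``main obstacles'' with clean, short arguments.
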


ur results are the first to suggest that the conjecture of Faben and Jerrum might apply to counting graph 
homomorphisms modulo every prime~$p$ instead of counting modulo~2. This suggestion, however, remains hypothetical. 
Borrowing the words of Dyer, Frieze and Jerrum~\cite{DFJ02}: ``One might even rashly conjecture'' it ``(though we shall 
not do so)''.

To justify our title we give the following corollary, stating a dichotomy for all trees~$H$.

\begin{corollary} \label{cor:modp-trees}
 Let $p$ be a prime and let $H$ be a tree. If the order~$p$ reduced form~$H^{*p}$ of $H$ is a star, then \phcol{} is 
computable in polynomial time; otherwise, \phcol{} is $\numpp$-complete.
\end{corollary}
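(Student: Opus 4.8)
The plan is to deduce Corollary~\ref{cor:modp-trees} from Theorem~\ref{thm:modp-trees}. The only gap between the two statements is that Theorem~\ref{thm:modp-trees} assumes the order~$p$ reduced form $H^{*p}$ is a tree, whereas the corollary assumes only that $H$ itself is a tree. So the first and essentially only thing I would prove is the structural fact that the class of trees is closed under passing to the order~$p$ reduced form: if $H$ is a tree, then $H^{*p}$ is again a tree (allowing the degenerate trees on at most one vertex). Granting this, Theorem~\ref{thm:modp-trees} applies to $H$ verbatim and yields exactly the claimed dichotomy, since the hypothesis ``$H^{*p}$ is a star'' occurs identically in both statements.

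To prove the structural fact, by the definition of $\Rightarrow_p^{*}$ and induction on the length of a reduction sequence it suffices to show the one-step statement: if $H$ is a tree and $\varrho$ is an automorphism of $H$ of order $p$, then $H^{\varrho}$ is a tree or has at most one vertex. Since $H^{\varrho}$ is an induced subgraph of the tree $H$ it is acyclic, so the only point to check is that $H^{\varrho}$ is connected when it has at least two vertices. I would argue this via the unique-path property of trees: given fixpoints $u$ and $v$ of $\varrho$, let $P$ be the unique $u$–$v$ path in $H$; since $\varrho$ is an automorphism with $\varrho(u)=u$ and $\varrho(v)=v$, it maps $P$ to a $u$–$v$ path, which by uniqueness is $P$ itself, so $\varrho$ restricts to an automorphism of the induced path $P$ fixing both of its endpoints and is therefore the identity on $P$. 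Hence every vertex of $P$ is a fixpoint of $\varrho$ and every edge of $P$ joins two fixpoints, so $P$ lies entirely inside the induced subgraph $H^{\varrho}$; thus $u$ and $v$ are in the same component of $H^{\varrho}$, and $H^{\varrho}$ is connected, hence a tree.

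Finally I would iterate this along a witnessing sequence $H\isoto H_{1}\Rightarrow_{p}\cdots\Rightarrow_{p}H_{t}\isoto H^{*p}$ to conclude that every $H_{i}$, and in particular $H^{*p}$, is a tree or has at most one vertex. If $H^{*p}$ is a tree with at least two vertices, Theorem~\ref{thm:modp-trees} applies directly and gives the dichotomy: \phcol{} is in \fp{} if $H^{*p}$ is a star and \numpp-complete otherwise. If instead $H^{*p}$ has at most one vertex, then it is a (degenerate) star, and by the Faben--Jerrum reduction $|\Homs{G}{H}|\equiv|\Homs{G}{H^{*p}}|\pmod p$, so \phcol{} is trivially in \fp{} (the right-hand side lies in $\{0,1\}$ and is easily determined); this again matches the corollary. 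The main obstacle is Theorem~\ref{thm:modp-trees} itself; in deriving the corollary from it, the only genuinely substantive point is the connectivity argument above, which is precisely what prevents the reduced form of a tree from being a disconnected forest, and the only fiddly point is the bookkeeping for the degenerate trees on zero or one vertex — note in particular that for $p=2$ an order~$2$ automorphism of a tree may have empty fixpoint set.
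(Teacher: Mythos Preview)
Your proposal is correct and follows essentially the same route as the paper: the paper derives the corollary from Theorem~\ref{thm:modp-trees} via Proposition~\ref{prop:trees_under_automorphisms}, which is exactly your one-step closure statement that $H^\varrho$ is a tree whenever $H$ is. Your connectivity argument (the unique $u$--$v$ path is $\varrho$-invariant, and the only automorphism of a path fixing both endpoints is the identity) is in fact crisper than the paper's contradiction argument, and you are more careful than the paper about the degenerate empty-fixpoint case that can occur for $p=2$.
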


We illustrate Theorem~\ref{thm:modp-trees} using the following discussion on Figure~\ref{fig:intro_tree}. The order~2 
and the order~3 reduced form of $H$ both are the graph with one vertex, whereas for any other prime the graph stays as 
such.

The polynomial computable cases follow directly from the results of Faben and Jerrum. Thus, to prove Theorem~\ref{thm:modp-trees} it suffices to show that \phcol{} is $\numpp$-complete for every tree~$H$ that is not a star and has no automorphism of order~$p$. The reductions in~\cite{FJ13,GGR14,GGR15} show hard instances of \parhcol{} by starting from~\paris, the problem of 
computing~$|\calI(G)|\pmod 2$, where $\calI(G)$ is the set of independent sets of $G$. \paris{} was shown to be $\parp$ 
complete by Valiant~\cite{Val06}. Later on, Faben~\cite{Fab08} extended this result by proving~\kis{} to be 
$\numkp$-complete for all integers~$k$. For reasons to be explained in Section~\ref{sec:intro-our-technique}
we do not use this problem as a starting point for our reductions. 

We turn our attention to $\kbis{p}$, the problem of counting the independent sets of a bipartite graph 
modulo~$p$. In the same work Faben~\cite{Fab08} includes a construction to show hardness for $\kbis{p}$. We employ the 
weighted version~\pbislr{\lweight}{\rweight} as a starting point for our reduction extending the 
research on~\kbis{}.

\probpar{\pbislr{\lweight}{\rweight}\label{prob:bislr}.}
	{$p$ prime and $\lweight, \rweight \in\Z_p$.}
	{Bipartite graph $G=\bipG$.}
	{$\wISet{G}=\sum_{I\in\calI(G)}\lweight^{|\lpart \cap I|}\rweight^{|\rpart \cap I|} \pmod p$.}

\smallskip
In fact, we obtain the following dichotomy.
	
\begin{theorem}\label{thm:wbis-hardness}
Let $p$ be a prime and let $\lweight$, $\rweight \in\zp$.
If $\lweight \equiv 0 \pmod p$ or $\rweight \equiv 0 \pmod p$, then \pbislr{\lweight}{\rweight} is computable in polynomial time.
Otherwise, \pbislr{\lweight}{\rweight} is \numpp{}-complete.
\end{theorem}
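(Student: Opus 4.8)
I would first dispatch the tractable direction, which is immediate: using the convention $0^0=1$, if $\lweight\equiv 0\pmod p$ then every independent set $I$ with $\lpart\cap I\neq\emptyset$ contributes $0$ to $\wISet{G}$, so only subsets of the independent set $\rpart$ survive and $\wISet{G}\equiv(1+\rweight)^{|\rpart|}\pmod p$, which is computable by repeated squaring; the case $\rweight\equiv 0$ is symmetric.

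For hardness the plan is a polynomial-time Turing reduction from $\kbis{p}=\pbislr{1}{1}$, which is $\numpp$-complete by Faben's construction. Given a bipartite instance $G_0=(L_0,R_0,E_0)$, I would substitute a fixed bipartite gadget for every vertex of $G_0$ and query the $\pbislr{\lweight}{\rweight}$-oracle on the resulting graph. A gadget is a bipartite graph $\Gamma$ with weights $\lweight$ on one side, $\rweight$ on the other, and a distinguished vertex $b$; gluing $\Gamma$ in place of a left vertex $x$ of $G_0$ (identifying $b$ with $x$, so the original neighbours of $x$ become neighbours of $b$) multiplies the eventual oracle answer by $Z^{\mathrm{out}}_\Gamma$ and makes $x$ behave like a vertex of \emph{effective weight} $\nu_\Gamma:=Z^{\mathrm{in}}_\Gamma/Z^{\mathrm{out}}_\Gamma$, where $Z^{\mathrm{in}}_\Gamma$ and $Z^{\mathrm{out}}_\Gamma$ denote $\wISet{\Gamma}$ restricted to the independent sets that contain, respectively avoid, $b$ (and likewise for right vertices, with right-boundary gadgets). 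If I can exhibit left- and right-boundary gadgets $\Gamma,\Gamma'$ whose effective weights are both $\equiv 1\pmod p$ and whose factors $Z^{\mathrm{out}}_\Gamma,Z^{\mathrm{out}}_{\Gamma'}$ are units, then substituting $\Gamma$ for every left vertex and $\Gamma'$ for every right vertex of $G_0$ produces a graph whose oracle answer equals $(Z^{\mathrm{out}}_\Gamma)^{|L_0|}(Z^{\mathrm{out}}_{\Gamma'})^{|R_0|}\cdot\wISet{G_0}[1][1]$, and a single division recovers $|\calI(G_0)|\bmod p$. So it suffices to build such gadgets out of $(\lweight,\rweight)$-weighted vertices.

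The basic constructions I would use are: $\Gamma=$ the vertex $b$ together with $m$ pendant neighbours, realising $\nu=\lweight\,(1+\rweight)^{-m}$, hence any $\nu\in\lweight\cdot\langle 1+\rweight\rangle$ when $\rweight\not\equiv-1$; blowing $b$ up into $s$ twins, which sends an already-realised effective weight $w$ to $(1+w)^s-1$; and hanging a pendant path of length $n$ off $b$, which multiplies $\nu$ by a constant governed by a Fibonacci-type recurrence in $n,\lweight,\rweight$. Composing these, together with the symmetric operations for right-boundary gadgets, I expect a case analysis on $\lweight,\rweight\bmod p$ to show that effective weight $1$ is realisable whenever the quantities $1+\lweight$, $1+\rweight$, $1+\lweight+\rweight$, $\dots$ that appear as denominators are units — informally, that one can walk $(\lweight,\rweight)$ to $(1,1)$ among the realisable effective-weight pairs. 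The residues that escape this are essentially $\lweight\equiv\rweight\equiv-1\pmod p$, most acutely for $p=3$, where $1+\lweight=1+\rweight=0$ and $1+\lweight+\rweight\equiv 0$ make all of the above degenerate from the outset; for those I would use a dedicated \emph{equaliser} gadget containing a cycle. For $p=3$, take $\Gamma$ to be a six-cycle with one pendant vertex $b$, all weights $-1$: then $Z^{\mathrm{out}}_\Gamma\equiv\wISet{C_6}[-1][-1]\equiv-1$ and $Z^{\mathrm{in}}_\Gamma\equiv-\wISet{P_5}[-1][-1]\equiv-1$, so $\nu_\Gamma\equiv 1$ with the unit factor $-1$, and an analogous cycle-with-pendant serves as $\Gamma'$; I expect similar bespoke gadgets to cover the remaining degenerate residues for larger primes.

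The part I expect to be the main obstacle is exactly this case analysis: verifying that \emph{every} pair of non-zero residues modulo $p$ is covered — either by a chain of the basic operations, or by an explicit equaliser gadget — and that in each case the accumulated product of the factors $Z^{\mathrm{out}}$ is a non-zero, hence invertible, element of $\zp$. The rest is routine: membership of $\pbislr{\lweight}{\rweight}$ in $\numpp$ is clear, and once $(1,1)$ is simulated the $\numpp$-completeness of $\kbis{p}$ transfers.
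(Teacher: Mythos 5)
Your tractable direction matches the paper's Proposition~\ref{prop:BIS_easy_cases} exactly, so that part is fine.

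For hardness you take a genuinely different route. The paper does not reduce from $\kbis{p}$: it reduces directly from $\numpsat$ by building a bipartite formula graph $G_\varphi$ (Definitions~\ref{def:auxgraph}--\ref{def:gadget}) and attaching, at each variable and clause vertex, a copy of a gadget $B(k,p)$ chosen so that $\wISet{B}\equiv 0$ while $\wISet{B-u_\L}\not\equiv 0$ and $\wISet{B-v_\R}\not\equiv 0$ (Lemma~\ref{lem:bip-gadget}). This ``total $\equiv 0$, punctured total $\not\equiv 0$'' property is what makes all non-encoding independent sets cancel modulo $p$, leaving a quantity $K\cdot|\sat(\varphi)|$ with $K$ a unit. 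Your plan instead performs a weight-shifting reduction from the known-hard $\pbislr{1}{1}$: substitute, for every original vertex, a bipartite boundary gadget whose \emph{effective weight} is $1$ and whose outward partition-function factor is a unit, so the oracle answer is a unit multiple of $\wISet{G_0}[1][1]$. The framework itself is sound: your factorisation $\wISet{G'} = (Z^{\mathrm{out}}_\Gamma)^{|L_0|}(Z^{\mathrm{out}}_{\Gamma'})^{|R_0|}\,\wISet{G_0}[\nu_\Gamma][\nu_{\Gamma'}]$ is correct, the pendant formula $\nu=\lweight(1+\rweight)^{-m}$ checks out, and I verified your $p=3$, $\lweight=\rweight=-1$ cycle-with-pendant example: $\wISet{C_6}[-1][-1]=2\equiv-1$, $\wISet{P_5}[-1][-1]=1$, giving $Z^{\mathrm{in}}=Z^{\mathrm{out}}=-1$ and $\nu=1$ with a unit factor.

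The genuine gap is exactly the one you flag as ``the main obstacle'': you do not actually prove that a left-boundary gadget and a right-boundary gadget with effective weight $1$ and unit $Z^{\mathrm{out}}$ exist for \emph{every} pair $\lweight,\rweight\in\zsp$. The pendant operation only reaches the coset $\lweight\cdot\langle 1+\rweight\rangle$, which in general is a proper subgroup of $\zsp$; twinning gives $w\mapsto(1+w)^s-1$, and pendant paths a recurrence, but you never show these operations together hit $1$ starting from an arbitrary non-zero pair, nor even precisely enumerate which $(\lweight,\rweight)$ force you into ``bespoke'' territory. Your $p=3$ equaliser covers one point; ``I expect similar bespoke gadgets to cover the remaining degenerate residues for larger primes'' is a conjecture, not an argument. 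Until that existence statement is established — the analogue of the paper's Lemma~\ref{lem:bip-gadget}, which the paper handles via a concrete family $B(k,p)$ and a clean four-case analysis on $\lweight,\rweight$ against $-1$ — the hardness direction is incomplete. A secondary, smaller issue is that ``blowing $b$ up into $s$ twins'' is not fully specified as a graph operation (the new boundary is a set, not a vertex, and gluing a vertex-set onto a single original vertex needs to be spelled out), though your stated formula is the right one once the construction is pinned down.
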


In order to prove hardness for \phcol{} we employ a reduction in three phases:
(i) we reduce the ``canonical'' $\numpp$-complete problem~\numpsat{} to \pbislr{\lweight}{\rweight}; (ii) we reduce \pbislr{\lweight}{\rweight} to \partlabphcol{}, a restricted version of \phcol{} which we define in Section~\ref{sec:intro-our-technique}; (iii) we reduce \partlabphcol{} to \phcol{}.

Section~\ref{sec:intro-mod-count} provides background knowledge on modular counting. In 
Section~\ref{sec:intro-related-work} we will discuss some related work. A high level proof of our three way reduction is provided in Section~\ref{sec:intro-our-technique}. There we also explain the technical obstacles arising from values of the modulo~$p>2$ and how we overcome them by generalising the techniques used for the case $p=2$. First, we explain step~(i), the reduction from \numpsat{} to \pbislr{\lweight}{\rweight}. Afterwards, we describe step~(iii), the reduction from \partlabphcol{} to \phcol{} establishing the required notation for the subsequent illustration of step~(ii), the reduction from \pbislr{\lweight}{\rweight} to \partlabphcol{}. 
In Section~\ref{sec:intro-composites} we discuss the limits of our techniques, which do not yield a dichotomy modulo any integer~$k$.

\subsection{Modular counting}
\label{sec:intro-mod-count}
Modular counting was originally studied from the decision problem's point of view. Here, the objective is to determine 
if the number of solutions is non-zero modulo~$k$. The complexity class $\oplus\pol$ was first studied by 
Papadimitriou and Zachos~\cite{PZ82} and by Goldschlager and Parberry~\cite{GP86:Parallel}. $\oplus\pol$ consists of all
problems of the form ``is $f(x)$ odd or even?'', where $f(x)$ is a function in $\shp$. A result of 
Toda~\cite{Tod91} states that every problem in the polynomial time hierarchy reduces in polynomial time to some problem 
in~$\oplus\pol$. This result suggests that $\oplus\pol$-completeness represents strong evidence for intractability.

For an integer $k$ the complexity class~$\nkp$ consists of all problems of the form 
``compute $f(x)$ modulo~$k$'', where $f(x)$ is a function in~$\shp$. In the special case of $k=2$, $\parp=\oplus\pol$, 
as the instances of $\parp$ require a one bit answer. Throughout this paper though, instead of the more traditional 
notation $\oplus\pol$, we will use $\parp$ to emphasise our interest in computing functions.

If a counting problem can be solved in polynomial time, the corresponding decision and modular counting problems can 
also be solved in polynomial time. The converse, though, does not necessarily hold. The reason is that efficient counting 
algorithms rely usually on an exponential number of cancellations that occur in the problem, e.g. compute the 
determinant of a non-negative matrix. The modulo operator introduces a natural setting for such cancellations to 
occur.

For instance, consider the $\shp$-complete problem of counting proper $3$-colourings of a graph~$G$ in the modulo~$3$ 
(or even modulo~$6$) setting. $3$-colourings of a graph assigning all three colours can be grouped in sets of 
size~$6$, since there are $3!=6$ permutations of the colours. Thus, the answer to these instances is always a multiple 
of $6$, and therefore ``cancels out''. It remains to compute the number of $3$-colourings assigning less 
than $3$ colours. For the case of using exactly $2$ colours we distinguish the following two cases: $G$ is not 
bipartite and there are no such colourings; $G$ is bipartite and the number of $3$-colourings of $G$ that use 
exactly $2$ colours is $3(2^c)$, where $c$ is the number of components of $G$.
Finally, computing the number of proper $3$-colourings of 
$G$ that use exactly one colour is an easy task. Either $G$ has an edge and there are no such colourings, or
$G$ has no edges and for every vertex there are $3$ colours to choose from.

Valiant~\cite{Val06} observed a surprising phenomenon in the tractability of modular counting problems. He showed that 
for a restricted version of $3$-SAT computing the number of solutions modulo~$7$ is in $\fp$, but computing 
this number modulo~$2$ is $\parp$-complete. This mysterious number~$7$ was later explained by Cai and 
Lu~\cite{CL11:Holographic}, who showed that the $k$-SAT version of Valiant's problem is tractable modulo any prime 
factor of $2^k-1$.

\subsection{Related work}
\label{sec:intro-related-work}
We have already mentioned earlier work on counting graph homomorphisms. In 
this section we highlight the work of Faben~\cite{Fab08} and the work of Guo et al.~\cite{GHLX11} on the complexity of the modular counting variant 
of the constraint satisfaction problem. 

\begin{problem}\label{prb:pcsp}
\begin{description}
\item \emph{Name.} $\kcsp[k](\calF)$.
\item \emph{Parameter.} $k\in\Z_{>0}$ and a set of functions $\calF=\{f_1,\dots, f_m\}$, where for each $j\in[m]$, $f_j:\{0,1\}^{r_j}\rightarrow \zp$ and $r_j\in\Z_{>0}$.
\item \emph{Input.} Finite set of constraints over Boolean variables $x_1,\dots, x_n$ of the form \\
$f_{j_l}(x_{i_{l,1}}, x_{i_{l,2}},\dots,x_{i_{l,r_{j_l}}})$.
\item \emph{Output.}
$\sum_{x_1,\dots,x_n\in\{0,1\}}\prod_l f_{j_l}(x_{i_{l,1}}, x_{i_{l,2}},\dots,x_{i_{l,r_{j_l}}})\pmod k$.
\end{description}
\end{problem}
Faben showed a dichotomy theorem~\cite[Theorem~4.11]{Fab08} when the functions in~$\calF$ have Boolean domain and Boolean range, i.e. $f:\{0,1\}\rightarrow\{0,1\}$. Guo et al. extended this dichotomy~\cite[Theorem~4.1]{GHLX11} to \kcsp{}, when the functions in~$\calF$ have Boolean domain $\{0,1\}$ but range in $\zk$.

Constraint satisfaction problems generalise graph homomorphism problems, when the domain of the constraint functions is arbitrarily large. In order to illustrate that \kcsp{} is a generalisation of \khcol{}, let $G$ be an input for 
\khcol{}, for which we describe an equivalent \kcsp{} instance. The domain of the constraint satisfaction problem 
is $D=V(H)$ and $\calF$ contains a single binary relation $R_H$, with $R_H(u,v)=1$ whenever $(u,v)\in E(H)$ and 
$R_H(u,v)=0$ otherwise. Thus, \khcol{} is an instance of $\kcsp(\{R_H\})$. The input of $\kcsp(\{R_H\})$  
contains a variable~$x_v$ for every vertex $v\in V(G)$ and a constraint~$R_H(x_u,x_v)$ for every edge $(u,v)\in E(G)$. 
As can be observed from the construction, every valid homomorphism $\sigma:V(G)\rightarrow V(H)$ corresponds to an 
assignment of the variables $\{x_v\}_{v\in V(G)}$ satisfying every constraint in the \csp{}.

The results of Faben and Guo et al. are incomparable to ours. We consider prime values of the modulo and a single 
binary relation, however the domain of our relations is arbitrarily large. Furthermore, the results of Faben~\cite[Theorem~4.11]{Fab08} show that the constraint language~$\calF$ for which \kcsp[2] is tractable is richer than the constraint language for which \kcsp[k] is tractable, where $k>2$. In contrast, our results show that the dichotomy criterion of \phcol{} remains the same for all primes~$p$, when $H$ is a tree.

\subsection{Beyond one-bit functions}
\label{sec:intro-our-technique}

\paragraph{Weighted bipartite independent sets}
To explain how we prove Theorem~\ref{thm:wbis-hardness}, consider a bipartite graph $G=(\lpart,\rpart,E)$ and let $\lweight=0$ (the case $\rweight=0$ is symmetric). We observe 
that every independent set~$I$ which contributes a non-zero summand to $\wISet{G}$ can only contain vertices 
in~$\rpart$ ($\wISet{G}$ is defined in Problem~\ref{prob:bislr}). This yields the closed form 
$\wISet{G}=(\rweight+1)^{|\rpart|}$, which is computable in polynomial time. Regarding the case 
$\lweight,\rweight\not\equiv 0 \pmod p$, we employ a generalisation of a 
reduction used by Faben. In~\cite[Theorem~3.7]{Fab08} Faben reduces \numpsat{} to~\pbislr{1}{1}, the problem of counting independent sets of a bipartite graph. 

We have to generalise this reduction for the weighted setting, in particular allowing different vertex weights for the
vertices of each partition. Furthermore, during the construction we have to keep track of the assignment of vertices to 
their corresponding part, $\lpart$ or $\rpart$. For this purpose we need to show the existence of bipartite graphs~$B$, 
where $\wISet{B}$ takes specific 
values. These graphs are then used as gadgets in our reduction. In the unweighted setting~\pbislr{1}{1} the 
graphs~$B$ are complete bipartite graphs. However, in the weighted setting~\pbislr{\lweight}{\rweight} complete 
bipartite graphs are not sufficient. Therefore, we prove the existence of the necessary bipartite gadgets~$B$ 
constructively. The technical proofs appear in Section~$3$.

\paragraph{Pinning}
Similar to the existing hardness proofs on modular counting graph homomorphisms we deploy a ``pinning'' technique.  A 
partial function from a set~$X$ to a set~$Y$ is a function $f:X'\rightarrow Y$ for some $X'\subseteq X$. For any 
graph~$H$, a \emph{partially $H$-labelled graph} $J=(G,\tau)$ consists of an \emph{underlying graph}~$G$ and a 
\emph{pinning function}~$\tau$, which is a partial function from $V(G)$ to~$V(H)$. A homomorphism from a partially 
labelled graph~$J=(G,\tau)$ to~$H$ is a homomorphism $\sigma\colon G\to H$ such that, for all vertices $v\in 
\dom(\tau)$, $\sigma(v) = \tau(v)$. The resulting problem is denoted by~\partlabphcol{}, that is, given a prime~$p$ and 
graph~$H$, compute $|\Homs{J}{H}| \pmod p$. In Section~\ref{sec:pinning}, we show that \partlabphcol{} reduces to 
\phcol{}. This allows us to establish hardness for \phcol{}, by proving hardness for \partlabphcol{}. The reduction 
generalises the pinning reduction of G\"obel, Goldberg and Richerby~\cite{GGR15} from \partlabparhcol{} to \parhcol{}. 

We explain how to prove pinning when we restrict the value of the modulo to~2 and the pinning function $\tau(J)=\{u\mapsto v\}$ to ``pin'' a single 
vertex. Given two graphs with distinguished vertices $(G,u)$ and~$(H,v)$, let $\Homs{(G,u)}{(H,v)}$ be the set of homomorphisms from $G$ to $H$ mapping $u$ to $v$.
Given a graph with a distinguished vertex $(G,u)$ and a graph $H$, we define~$\vecw_H(G)$ to be the $\{0,1\}$-vector 
containing the entries $|\Homs{(G,u)}{(H,v)}|\pmod 2$ for each vertex $v\in V(H)$. Observe that for two vertices 
$v_1,v_2\in V(H)$, such that $(H,v_1)\isoto(H,v_2)$, and any graph $G$ the relevant entries in $\vecw_G(H)$ will 
always be equal. Therefore, we can contract all such entries to obtain the \emph{orbit vectors} $\vecv_H(G)$. 
Suppose that there exists a graph with a distinguished vertex $(\Theta,u_\Theta)$, such that 
$\vecv_H(\Theta)=0\dots010\dots0$, where the 1-entry corresponds to the vertex $v$ of $H$. Given our input $J$ for 
\partlabparhcol{} we can now define an input $G$ for \parhcol{}, such that $|\Homs{J}{H}|\equiv|\Homs{(G(J),u)}{(H,v)}| \equiv|\Homs{G}{H}|\pmod2$. 
$G$ contains a disjoint copy of $G(J)$ and $\Theta$, where the vertices $u$ and $u_\Theta$ are identified (recall that 
$u$ is the vertex of $J$ mapped by $\tau(J)$). Due to the value of $\vecv_H(\Theta)$ and the structure of $G$ there is 
an even number of homomorphisms mapping $u$ to any vertex $v'\neq v$, which establishes the claim.

Such a graph $\Theta$, however, is not guaranteed to exist. Instead, we can define a set of operations on the vectors 
$\vecv_H$ corresponding to graph operations and show that for any vector in $\{0,1\}^{|V(H)|}$ there exist a 
sequence of graphs with distinguished vertices $(\Theta_1, u_1), \dots , (\Theta_t, u_t)$ that ``generate'' this vector. 
Thus, there exists a set of graphs that ``generate'' $\vecv=0\dots010\dots0$, which yields the desired reduction. This
technique of~\cite{GGR15} exploits the value of the modulo to be~2. Applying this technique to counting modulo any 
prime~$p$ directly, one can establish pinning for asymmetric graphs, that is graphs whose automorphism group contains only the identity. A dichotomy for 
\phcol{}, when $H$ is an asymmetric tree appears in the first author's doctoral thesis~\cite{GobThesis}.

In order to go beyond asymmetric graphs, one has to observe that information redundant only in the modulo~2 case is 
lost from the contraction of the vectors~$\vecw_H$ to the vectors~$\vecv_H$. This works on asymmetric graphs, since then
these two vectors are identical. For general graphs we are able to restore pinning for counting homomorphisms modulo any prime~$p$ by utilising the non-contracted vectors~$\vecw_H$.

\begin{theorem}\label{thm:partlabcol}
Let $p$ be a prime and let $H$ be a graph.
Then \partlabphcol{} reduces to \phcol{} via polynomial time Turing reduction.
\end{theorem}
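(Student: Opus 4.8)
The plan is to generalise the pinning reduction of G\"obel, Goldberg and Richerby~\cite{GGR15} from $\partlabparhcol{}$ to $\parhcol{}$, the difficulty being that their argument is tailored to the modulus $p=2$.

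\textbf{Reducing to a $p$-reduced target.} I would first reduce to the case $p\nmid|\Aut(H)|$. If $\varrho$ is an automorphism of $H$ of order $p$, then $H\Rightarrow_p H^\varrho$; iterating, one reaches $H^{*p}$, and since $|\Homs{G}{H}|\equiv|\Homs{G}{H^{*p}}|\pmod p$ for every $G$ (Faben--Jerrum~\cite{FJ13}), the problem $\phcol{}$ literally coincides with $\phcol[H^{*p}]{}$. For $\partlabphcol{}$ one additionally needs a lemma, in the spirit of~\cite{FJ13}, transporting the pinning function along this step: a pin to a fixpoint of $\varrho$ survives verbatim, while a pin to a vertex moved by $\varrho$ is absorbed into the orbit-indicator argument below. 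Assuming henceforth $p\nmid|\Aut(H)|$, Cauchy's theorem gives that every orbit of the $\Aut(H)$-action on $V(H)$, and more generally on tuples of vertices, has size coprime to $p$, hence a unit in $\zp$.

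\textbf{The gadget calculus.} For a graph $G$ with a distinguished vertex $u$, let $\vecw_H(G)\in\zp^{V(H)}$ have $v$-th entry $|\Homs{(G,u)}{(H,v)}|\bmod p$. Given an input $J=(G,\tau)$ with $\tau=\{u\mapsto w\}$ and any ``gadget'' $(\Theta,z)$, let $G^\bullet$ be the graph obtained from the underlying graph of $G$ by identifying $z$ with $u$; expanding the count yields $|\Homs{G^\bullet}{H}|\equiv\langle\vecw_H(G),\vecw_H(\Theta)\rangle\pmod p$, so a single $\phcol{}$-oracle call returns this inner product, and by linearity polynomially many calls return $\langle\vecw_H(G),x\rangle$ for any $x$ in the $\zp$-span of the realisable gadget vectors $\{\vecw_H(\Theta)\}$. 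This span is closed under the Hadamard product (disjoint union of gadgets, identifying distinguished vertices) and contains the all-ones vector, the degree vector (a single edge), and the walk-count vectors coming from paths. If it contains the orbit-indicator $\mathbbm{1}_{\Orb(w)}$, then taking $x=\mathbbm{1}_{\Orb(w)}$ gives $|\{\phi\in\Homs{G}{H}:\phi(u)\in\Orb(w)\}|\equiv|\Orb(w)|\cdot|\Homs{J}{H}|\pmod p$ by the orbit--stabiliser identity, and dividing by the unit $|\Orb(w)|$ recovers $|\Homs{J}{H}|\bmod p$. A general pinning function $\tau$ with $\dom\tau=\{u_1,\dots,u_k\}$ is handled by the same calculus over tuples $V(H)^k$: one attaches a single $k$-distinguished gadget realising the orbit-indicator of $(\tau(u_1),\dots,\tau(u_k))$ and averages as before; keeping this tuple gadget of polynomial size is part of the task below.

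\textbf{The core and the main obstacle.} What remains — and this is the crux — is to show that each orbit-indicator lies in the $\zp$-span of realisable gadget vectors, via a polynomial-size, polynomial-time-constructible family of polynomial-size gadgets; equivalently, that any two vertices $v',v''$ with $(H,v')\not\isoto(H,v'')$ are separated modulo $p$ by some gadget. This is exactly where the modulus matters. For $p=2$, the argument of~\cite{GGR15} first contracts $\vecw_H$ to the orbit-quantified vectors $\vecv_H$ and then exploits features special to $p=2$; for $p>2$ this contraction loses information — two $\Aut(H)$-inequivalent vertices may agree on every orbit-quantified coordinate modulo $p$ — so one must work with the finer vectors $\vecw_H$ directly. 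Concretely I would prove (i) the separation statement, by a \Lovasz-type homomorphism count whose reduction modulo $p$ remains faithful precisely because $H$ is $p$-reduced, and (ii) that closing the separating vectors under the Hadamard, degree and walk operations produces all orbit-indicators as $\zp$-linear combinations; on asymmetric graphs $\vecw_H=\vecv_H$ and this collapses to the argument in the first author's thesis~\cite{GobThesis}. Establishing (i) and (ii), together with the tuple version and the bookkeeping that keeps all gadgets of polynomial size, is the technical heart and the step I expect to be the main obstacle.
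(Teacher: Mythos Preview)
Your outline is essentially the paper's approach: establish \Lovasz-type separation modulo~$p$, use it together with closure under componentwise addition and multiplication to show that every orbit vector is $H$-implementable (the key algebraic lemma, proved via Fermat's little theorem), then glue the gadgets implementing the orbit indicator of~$\wbar$ onto the input, sum the oracle answers, and divide by the unit~$|\Orb_H(\wbar)|$.

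Two points of divergence are worth flagging. First, your preliminary reduction to a $p$-reduced target is not how the paper proceeds; the supporting lemmas are simply stated under the hypothesis that $H$ has no order-$p$ automorphism, and the theorem is only ever applied to such~$H$. Your claim that a pin to a non-fixed vertex would be ``absorbed into the orbit-indicator argument'' is vague and not obviously correct --- reducing $\partlabphcol$ to $\partlabphcol[H^{*p}]$ when the pin targets a vertex outside $H^{*p}$ is genuinely delicate --- but the step is also unnecessary for the paper's purposes. Second, your diagnosis of why one must pass from the contracted vectors $\vecv_H$ to the full tuple vectors $\vecw_H$ is off: two $\Aut(H)$-inequivalent tuples already occupy \emph{different} coordinates of~$\vecv_H$, so no separability is lost by contraction --- what is lost is the orbit \emph{sizes}. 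Accordingly, the paper carries out the entire algebraic argument (separation plus the closure lemma) at the level of the contracted vectors $\vecv_H\in(\zp)^\mu$, and only at the very last step expands to $\vecw_H\in(\zp)^\nu$ so that summing the components over all of $(V(H))^r$, rather than over orbit representatives, produces the clean factor~$|\Orb_H(\wbar)|$. The ``degree and walk operations'' you mention are not needed: closure under $\oplus^p$ and~$\otimes^p$, the all-ones vector, and \Lovasz{} separation already force the set of implementable vectors to be all of~$(\zp)^\mu$.
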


To obtain hardness for \phcol{} we only need to pin two vertices when~$H$ is a tree, i.e. the domain of the pinning function $\tau$ has size two. For a study of a more general class of target graphs~$H$ (see~\cite{GGR15}), the size of the domain has to be larger. As our pinning theorem applies to all primes~$p$, all graphs~$H$ and pinning functions of arbitrary domain size, it can potentially be used to show hardness for \phcol{} for all primes and any class of target graphs~$H$. The formal proofs appear in Section~$5$.

\paragraph{Gadgets}
Gadgets are structures appearing in the target graph $H$ that allow to reduce \paris{} to \partlabparhcol{} (the 
hardness of \parhcol{} is then immediate from Theorem~\ref{thm:partlabcol}). 
For illustrative purposes we simplify the definitions appearing in~\cite{GGR15}. 
\parhcol{}--gadgets consist of two partially labelled graphs with distinguished vertices $(J_1,y)$, $(J_2,y,z)$ 
along with two ``special'' vertices $i,o\in V(H)$. Given the input~$G$ for \paris{}, we construct an input~$G'$ for
\partlabparhcol{} as follows. We attach a copy of $J_1$ to every vertex~$u$ of $G$ (identifying $u$ with $y$) and 
replace every edge~$(u,v)$ of $G$ with a copy of $J_2$ (identifying $u$ with $y$ and $v$ with $z$). 
The properties of $J_1$ ensure that there is an odd number of homomorphisms from $G'$ to $H$ where the original 
vertices of $G$ are mapped to $i$ or $o$, while the number of the remaining homomorphisms cancels out. The properties 
of $J_2$ ensure that there is an even number of homomorphisms from $G'$ to $H$ when two adjacent vertices of $G$ are 
both mapped to $i$, and an odd number of homomorphisms in every other case. We can now observe that $|\calI(G)|\equiv 
|\Homs{G'}{H}|\pmod2$, as the set of homomorphisms that does not cancel out must map every vertex of $G$ to $i$ or $o$ 
and no pair of adjacent vertices both to $i$. Every vertex of $G$ that is in an independent set must be mapped to~$i$, 
and every vertex that is out of the independent set must be mapped to $o$.

Generalising the described approach to modulo any prime $p>2$ one would end up reducing from a restricted \pcsp{} instance, containing a binary relation and a unary weight 
that must be applied to every variable of the instance (this is known as \emph{external field} in statistical 
physics). Similar to the modulo~2 case the edge interaction is captured by the binary relation and size of the set of ``special'' vertices by the unary weights. Since for primes $p>2$ there are more non-zero values than 1 (odd) a study of the external field is no longer trivial in this case.
Instead we choose a different approach and reduce from \pbislr{\lweight}{\rweight}. This seems to capture the 
structure that produces hardness in \phcol{} in a more natural way.

We formally present our reduction in Section~\ref{sec:trees}. In the following we sketch our proof method and focus our 
attention on the example graph~$H$ in Figure~\ref{fig:intro_tree}. Let $G=(\lpart,\rpart,E)$ be a bipartite graph. 
Homomorphisms from $G$ to $H$ must respect the partition of $G$, i.e. the vertices in $\lpart$ can only be mapped to 
the vertices in $\{x_L,u_1,u_2,u_3\}$ and the vertices in $\rpart$ can only be mapped to the vertices in $\{x_R, 
v_1,v_2\}$, or vice versa. Any homomorphism~$\sigma$ from~$G$ to~$H$, which maps the vertex $w\in V(G)$ to any vertex in 
$\{u_1,u_2,u_3\}$, must map every neighbour of~$w$ to~$x_R$. Similarly, any homomorphism~$\sigma$ from~$G$ to~$H$, which 
maps the vertex $w\in V(G)$ to any vertex in $\{v_1,v_2\}$, must map every neighbour of~$w$ to~$x_L$. Thus, 
homomorphisms from~$G$ to~$H$ express independent sets of $G$: $\{u_1,u_2,u_3\}$ represent the vertices of $\lpart$ in 
the independent set and $\{v_1,v_2\}$ represent the vertices of $\rpart$ in the independent set, or vice versa. We 
construct a partially labelled graph $J$ from $G$ to fix the choice of $\lpart$ and $\rpart$ in the set of 
homomorphisms from $G$ to $H$. $G(J)$ contains a copy of $G$, where every vertex in $\lpart$ is attached to the new 
vertex $\hat{u}$ and every vertex in $\rpart$ is attached to the new vertex $\hat{v}$. In addition, 
$\tau(J)=\{\hat{u}\mapsto 
x_R,\hat{v}\mapsto x_L\}$ is the pinning function. We observe that the vertices in~$\lpart$ can only be mapped to 
vertices in~$\{x_L,u_1,u_2,u_3\}$ and vertices in~$\rpart$ can only be mapped to vertices in~$\{x_R, v_1, v_2\}$. This 
observation yields that the number of homomorphisms from~$J$ to~$H$ is equivalent to~$\sum_{I\in\calI(G)}3^{|\lpart \cap 
I|}2^{|\rpart \cap I|} \pmod p$. Furthermore, the cardinality of the sets $\{u_1,u_2,u_3\}$ and $\{v_1,v_2\}$ 
introduces weights in a natural way.

For the reduction above we need the following property easily observable in $H$: there exist two adjacent 
vertices of degree $a=\lweight+1\not\equiv 1\pmod p$ and $b=\rweight+1\not \equiv 1 \pmod p$. Recall that in order to 
obtain hardness for \pbislr{\lweight}{\rweight} Theorem~\ref{thm:wbis-hardness} requires $\lweight,\rweight\not\equiv 
0\pmod p$. In fact, as we will show in Section~\ref{sec:trees}, these vertices need not be adjacent. During the 
construction of $J$ we can replace the edges of $G$ with paths of appropriate length. We call such a structure in $H$ 
an $(a,b,p)$-path. In Lemma~\ref{lem:Homs_trees_hardcase} we formally prove that if $H$ has an $(a,b,p)$-path, then 
\phcol{} is $\numpp$-hard. In particular, observe that stars cannot contain $(a,b,p)$-paths. Finally, we show that every non-star tree~$H$ 
contains an $(a,b,p)$-path, which yields our main result on \phcol{} (Lemma~\ref{lem:tree_star_or_hard}).

\subsection{Composites}
\label{sec:intro-composites}

We outline the obstacles occurring when extending the dichotomy for \khcol{} to any integer~$k$. Let 
$H$ be a graph and let $k= \prod_{i=1}^m k_i$, where $k_i= p_i^{r_i}$ is an integer with its prime factorisation. 
Assuming \khcol{} can be 
solved in polynomial time, then for each $i\in[m]$,  $\#_{k_i}\prb{HomsTo}H$ can also be solved in polynomial 
time. The reason is that $k_i$ is a factor of $k$ and we can apply the modulo~$k_i$ operator to the answer 
for the \khcol{} instance. The Chinese remainder theorem shows that the converse is also true: if for each $i\in[m]$ we 
can solve $\#_{k_i}\prb{HomsTo}H$ in polynomial time, then we can also solve \khcol{} in polynomial time. By the previous observations we can now focus 
on powers of primes~$k=p^r$. Assuming \khcol{} is computable in polynomial time yields again that \phcol{} is also 
computable in polynomial time. However, the converse is not always true. 

Guo et al.~\cite{GHLX11} were able to obtain this reverse implication for the constraint satisfaction problem. They 
showed~\cite[Lemma~4.1 and Lemma~4.3]{GHLX11} that when $p$ is a prime $\kcsp[{p^r}]$ is computable in polynomial time if \pcsp{} is computable in polynomial time. In Section~\ref{sec:composites} we show that their technique cannot be transferred to the~\khcol{} setting. We show that 
there is a graph~($P_4$) such that $\#_2\prb{HomsTo}P_4$ is computable in polynomial time, while $\#_4\prb{HomsTo}P_4$ 
is $\parp$-hard.

\subsection{Organisation}
\label{sec:organisation}
Our notation is introduced in Section~\ref{sec:prelim}.
In Section~\ref{sec:bipartite} we study the complexity of the weighted bipartite independent sets problem modulo any prime.
Section~\ref{sec:poly} presents the connection to the polynomial time algorithm of Faben and Jerrum for \phcol{}.
Our pinning method is explained in Section~\ref{sec:pinning}.
Section~\ref{sec:trees} contains the hardness reduction for \phcol{}.
Our results are collected into a dichotomy theorem in Section~\ref{sec:main-thm}. Finally, in Section~\ref{sec:composites} we discuss the obstacles arising when counting modulo all integers.

\section{Preliminaries}
\label{sec:prelim}
We denote by $[n]$ the set $\{1, \dots, n\}$.
Further, if $v$ is an element of the set~$S$, we write $S-v$ for $S\setminus
\{v\}$.
Let $k$ be a positive integer $k\in\Z_{>0}$, then for a function $f$ its $k$-fold composition is denoted by $f^{(k)}=f\circ f\circ\dots\circ f$.

\bigskip
For a detailed introduction to Graph Theory the reader is referred to \cite{west_introduction_2000}. \vspace{-1.5ex}
\paragraph{(Simple) graphs}
Unless otherwise specified, \emph{graph}s are undirected and simple, requiring them to contain neither parallel edges nor loops.
More formally, a graph $G$ is a pair $(V,E)$, where $V$ denotes the set of vertices and $E\subseteq V\times V$ the set of edges formed by pairs of vertices. This set of edges can be looked upon as a relation for a pair of vertices to either form an edge or not in $G$. For a graph $G$ we sometimes denote its vertex set by $V(G)$ and its edge set by $E(G)$.
As stated above, for all vertices $u,v\in V$ we require the edges to be undirected, that is $(u,v)\in E$ if and only if $(v,u)\in E$ Moreover, for an edge $(u,v)\in E$ the condition $u\neq v$ ensures the absence of loops. A graph $H$ is a \emph{subgraph} of $G$ if $V(H) \subseteq V(G)$ and $E(H) \subseteq E(G)$. This is denoted by $H \subseteq G$. If additionally $(u,v) \in E(G)$ such that $u,v \in V(H)$ implies that $(u,v) \in E(H)$ we call $H$ an \emph{induced subgraph}. In fact, $H$ is then induced by the subset $V(H) \subseteq V(G)$. 
For all vertices $v \in V(G)$ of a graph $G$ with a subgraph $H$ we denote by $\Gamma_H(v) = \{\,u\in V\mid (u,v)\in E\,\}$ the \emph{neighbourhood of $v$ in $H$} containing all vertices in $V(H)$ adjacent to $v$, which refers to the Greek term \greektext Geitoniá\latintext. Consequently, we denote by $\deg_H(v)$ the size of $\Gamma_H(v)$.

A \emph{path} is a simple graph $P$ such that all its vertices can be ordered in a list without multiples and only two adjacent vertices in the list form an edge in $E(P)$. The \emph{length of a path} is its number of edges. 
Two vertices $u, v$ in a graph $G$ are \emph{connected} if there exists a path $P \subseteq G$, such that $u,v \in V(P)$. Otherwise the vertices are \emph{disconnected}. If every pair of vertices in a graph $G$ is connected, then $G$ is called connected. Otherwise it is called disconnected. We call a subgraph $H \subseteq G$ a \emph{connected component} of $G$, if $H$ is connected and there exists no vertex $v \in V(G) \setminus V(H)$ such that $v$ is connected to any vertex in $H$. An \emph{independent set} of a graph $G$ is a set of vertices $I\subseteq V(G)$, such that no pair of vertices in $I$ is connected in $G$.
The \emph{distance of two connected vertices} $u,v$ in $G$, denoted by $d_G(u,v)$, is the length of a shortest path in $G$ connecting $u$ and $v$. A \emph{cycle} is a simple connected graph $C$, such that all its vertices can be ordered in a list $v_0 v_1\dots v_k v_0$ and only two vertices adjacent in the list form an edge. A \emph{tree} is a simple connected graph, which does not contain cycles.
For an integer $k \geq 0$ a $k$-\emph{walk} is a list $v_0 v_1 \dots v_k$, which might contain multiples, such that two adjacent vertices in the list form an edge.

A graph $G$ is \emph{bipartite} if there exist disjoint subsets $\lpart$, $\rpart$ of $V$ such that $V=\lpart \cup \rpart$ and there exists no edge $(u,v) \in E$ with $u,v \in \lpart$ or $u,v \in \rpart $. 
We write $G= \bipG$ for the bipartite graph with fixed components $\lpart$ and $\rpart$, which we are calling the \emph{left and right component}, respectively.
For an integer $k\geq 0$ the \emph{complete graph} of size $k$ is the simple graph $K_k$ with $|V(K_k)| =k$ and every pair of distinct vertices in $V(K_k)$ forms an edge. Similarly, for integers $k_\L, k_\R \geq 0$ the \emph{complete bipartite graph} is the simple bipartite graph $K_{k_\L, k_\R}= \bipG$ with $|\lpart|=k_\L$ and $|\rpart|=k_\R$ and every pair of vertices $u \in \lpart$, $v\in \rpart$ forms an edge.
A \emph{star} is a complete bipartite graph $K_{1,k}$ for some integer $k\geq 0$. 

Let $G$ and $H$ be graphs.
A \emph{homomorphism from $G$ to $H$} is a function $\sigma:V(G)\to V(H)$, such that edges are preserved, for short $(v_1,v_2)\in E(G)$ implies $(\sigma(v_1),\sigma(v_2))\in E(H)$.
Moreover, $\Homs{G}{H}$ denotes the set of homomorphisms from $G$ to $H$.
An \emph{isomorphism between $G$ and $H$} is a bijective function $\varrho:V(G)\to V(H)$ preserving the edge relation in both directions, meaning $(v_1,v_2)\in E(G)$ if and only if $(\varrho(v_1),\varrho(v_2))\in E(H)$.
If such an isomorphism exists, we say that $G$ is isomorphic to $H$ and denote it with $G\isoto H$. 
An \emph{automorphism of $G$} is an isomorphism from the graph $G$ to itself. $\Aut(G)$ denotes the \emph{automorphism group of $G$}.
An automorphism $\varrho$ is an \emph{automorphism of order~$k$} in case it is not the identity and $k$ is the smallest positive integer such that $\varrho^{(k)}$ is the identity.

\paragraph{Partially labelled graphs}
Let $H$ be a graph.
A \emph{partially $H$-labelled graph} $J=(G,\tau)$ consists of an \emph{underlying graph} $G(J)=G$ and a (partial) \emph{pinning function} $\tau(J)=\tau: V(G)\to V(H)$, mapping vertices in $G$ to vertices in $H$.
Every vertex~$v$ in the domain $\dom(\tau)$ of $\tau$ is said to be \emph{$H$-pinned to $\tau(v)$}.
We omit $H$ in case it is immediate from the context.
We denote a partial function $\tau$ with finite domain $\{v_1,\dots,v_r\}$ also in the form $\tau = \{v_1\mapsto \tau(v_1),\dots,v_r\mapsto \tau(v_r)\}$.
A \emph{homomorphism from a partially labelled graph~$J$ to a graph~$H$} is a homomorphism from $G(J)$ to~$H$ that respects~$\tau$, i.e., for all $v\in\dom(\tau)$ holds $\sigma(v)=\tau(v)$.
By $\Homs{J}{H}$ we denote the set of homomorphisms from $J$ to $H$.

\paragraph{Graphs with distinguished vertices}  
Let $G$ and $H$ be a graphs.
It is often convenient to regard a graph with a number of (not necessarily distinct) distinguished vertices $v_1, \dots, v_r$, which we denote by $(G, v_1, \dots, v_r)$.
A \emph{sequence of vertices} $v_1\ldots v_r$ may be abbreviated by $\vbar$ and
$G[\vbar]$ stands for the subgraph of~$G$ induced by the set of vertices $\{v_1,\ldots,v_r\}$.
A \emph{homomorphism from $(G,\ubar)$ to $(H,\vbar)$} with $r=|\ubar|=|\vbar|$ is a homomorphism~$\sigma$ from $G$ to~$H$ with $\sigma(u_i)=v_i$ for each $i\in[r]$.
Such a homomorphism immediately yields a homomorphism from the partially labelled graph $(G, \{u_1\mapsto v_1, \dots, u_r\mapsto v_r\})$ to~$H$ and vice versa.
For a partially labelled graph~$J$ and vertices $u_1,\dots,u_r\notin
\dom(\tau(J))$, we identify a homomorphism from $(J, \ubar)$ to $(H,\vbar)$ with the corresponding homomorphism from $(G(J), \tau(J) \cup \{u_1\mapsto v_1, \dots, u_r\mapsto v_r\})$ to~$H$.
Similarly, $(G, \ubar)$ and $(H, \vbar)$ are \emph{isomorphic} if $r=|\ubar|=|\vbar|$ and there is an isomorphism $\varrho$ from $G$ to $H$, such that $\varrho(u_i)=v_i$ for each $i\in[r]$.
An \emph{automorphism of $(G, \ubar)$} is an automorphism~$\varrho$ of~$G$ with the property that $\varrho(u_i)=v_i$ for each $i\in[r]$ and $\Aut(G,\ubar)$ denotes the \emph{automorphism group of $(G,\ubar)$}.

\paragraph{Reductions}
For a detailed discussion of this topic see~\cite{PapCC}.
Our model of computation is the standard multitape Turing machine.
For counting problems $P$ and $Q$, we say that \emph{$P$ reduces to $Q$ via polynomial time Turing reduction}, if there is a polynomial time deterministic oracle Turing machine $M$ such that, on every instance $x$ of $P$, $M$ outputs $P(x)$ by making queries to oracle $Q$.
Further, \emph{$P$ reduces to $Q$ via parsimonious reduction}, if there exists a polynomial time computable function $f$ transforming every instance~$x$ of $P$ to an instance of $Q$, such that $P(x)=Q(f(x))$.
Clearly, if $P$ reduces to $Q$ via parsimonious reduction, then $P$ also reduces to $Q$ via polynomial time Turing reduction.

\paragraph{Basic algebra}
For an introduction to abstract algebra we refer the reader to~\cite{AlgebraBook}.
Finally, we assume familiarity with the notion of a \emph{group}, an \emph{action of a group on a set} and modular 
arithmetic in the \emph{field $\Z_p$}, where $p$ is a \emph{prime} in $\Z$. 
We are going to apply Fermat's little theorem (see~\cite[Theorem~11.6]{Arm88:GroupSym}) and Cauchy's group theorem (see, e.g., \cite[Theorem~13.1]{Arm88:GroupSym}) frequently.

\begin{theorem}[Fermat's little theorem]\label{thm:Fermat}
Let $p$ be a prime. If $a \in \Z$ is not a multiple of $p$, then $a^{p-1}\equiv 1\pmod p$.
\end{theorem}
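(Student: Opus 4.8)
The plan is to exploit the group structure of the nonzero residues modulo~$p$. First I would observe that, since $p$ is prime, the set $\zsp = \{1, 2, \dots, p-1\}$ is closed under multiplication modulo~$p$: if $p \nmid x$ and $p \nmid y$ then $p \nmid xy$. Multiplication modulo~$p$ is associative and has identity~$1$, so to see that $\zsp$ is a group it remains to exhibit inverses. For a fixed $x \in \zsp$ the map $y \mapsto xy \bmod p$ is injective on the finite set $\zsp$: if $xy \equiv xy' \pmod p$ then $p \mid x(y-y')$, and primality forces $p \mid y - y'$. An injective self-map of a finite set is surjective, so some $y$ satisfies $xy \equiv 1 \pmod p$. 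Thus $\zsp$ is a group of order~$p-1$.

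With this in hand, the statement follows from Lagrange's theorem: the cyclic subgroup of $\zsp$ generated by the residue class of~$a$ has order dividing $|\zsp| = p-1$, so the $(p-1)$-th power of that class is the identity, that is, $a^{p-1} \equiv 1 \pmod p$ whenever $p \nmid a$.

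Alternatively, one can avoid invoking Lagrange's theorem with a direct counting argument. Since $p \nmid a$, multiplication by~$a$ permutes $\zsp$ by the injectivity/surjectivity observation above, so the products $\prod_{x=1}^{p-1} (ax)$ and $\prod_{x=1}^{p-1} x$ are congruent modulo~$p$. The former equals $a^{p-1}\,(p-1)!$ and the latter equals $(p-1)!$; as $p \nmid (p-1)!$, we may cancel $(p-1)!$ and conclude $a^{p-1} \equiv 1 \pmod p$.

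The only delicate point — and the place where primality of~$p$ is genuinely used — is the existence of multiplicative inverses modulo~$p$, equivalently that multiplication by a nonzero residue is a bijection, equivalently that $(p-1)!$ is invertible modulo~$p$. Everything else is a routine application of elementary group theory, so I expect no real obstacle here.
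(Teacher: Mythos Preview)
Your argument is correct; both the Lagrange-theorem route and the direct bijection-and-cancellation argument are standard and complete proofs of Fermat's little theorem. Note, however, that the paper does not actually prove this statement: it merely records it as a classical fact with a reference to Armstrong's textbook, so there is no ``paper's own proof'' to compare against.
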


\begin{theorem}[Cauchy's group theorem]
\label{thm:Cauchy}
Let $p$ be prime.
If $\calG$ is a finite group and $p$ divides~$|\calG|$, then $\calG$~contains an element of order~$p$.
\end{theorem}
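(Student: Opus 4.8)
The plan is to prove Cauchy's theorem by McKay's classical group-action argument, which avoids any appeal to the Sylow theorems. Let $e$ be the identity of $\calG$ and consider the set of $p$-tuples
\[
  X = \bigl\{\,(a_1,\dots,a_p)\in\calG^{\,p} \;\bigm|\; a_1a_2\cdots a_p = e \,\bigr\}.
\]
First I would count $X$: the coordinates $a_1,\dots,a_{p-1}$ may be chosen freely, and then $a_p=(a_1\cdots a_{p-1})^{-1}$ is uniquely determined, so $|X| = |\calG|^{\,p-1}$. Since $p$ divides $|\calG|$ and $p-1\ge 1$, this yields $|X|\equiv 0 \pmod p$.

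Next I would let the cyclic group $\zp$ act on $X$ by rotating coordinates, the generator of $\zp$ sending $(a_1,a_2,\dots,a_p)$ to $(a_2,\dots,a_p,a_1)$. The step requiring a little care is that this map sends $X$ to itself: if $a_1\cdot(a_2\cdots a_p)=e$ then also $(a_2\cdots a_p)\cdot a_1 = e$, because $xy=e$ implies $yx=e$ in any group. As $p$ successive rotations return every tuple to its original position, this is a genuine action of $\zp$.

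Finally I would invoke the orbit--stabiliser theorem: each orbit has size dividing $|\zp|=p$, hence size $1$ or $p$ because $p$ is prime. If $f$ denotes the number of singleton orbits, then partitioning $X$ into orbits gives $|X|\equiv f \pmod p$, so $p\mid f$ by the first step. A tuple is fixed by the rotation exactly when all its coordinates coincide, i.e.\ it has the form $(a,\dots,a)$ with $a^p=e$; the tuple $(e,\dots,e)$ is such a fixed point, so $f\ge 1$ and therefore $f\ge p\ge 2$. Hence there exists $a\in\calG$ with $a\ne e$ and $a^p=e$. The order of $a$ then divides the prime $p$ and is not $1$, so it is exactly $p$, completing the proof.

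I expect no serious obstacle here: the whole argument is a short counting-by-group-action. The only points that deserve an explicit sentence are the well-definedness of the rotation action on $X$ (via $xy=e\Rightarrow yx=e$) and the standard fact that the cardinality of an orbit divides the order of the acting group; everything else is immediate.
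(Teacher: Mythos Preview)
Your proof is correct: this is exactly McKay's group-action argument, and you have handled the two subtle points (well-definedness of the cyclic rotation on $X$, and orbit sizes dividing $p$) properly.

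The paper, however, does not prove Cauchy's theorem at all. It is quoted in the preliminaries as standard background and simply referenced to Armstrong's textbook~\cite[Theorem~13.1]{Arm88:GroupSym}. So there is no approach to compare against; your proposal supplies a self-contained proof where the paper is content to cite one.
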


\section{Weighted bipartite independent set}
\label{sec:bipartite}
We study the complexity of computing the weighted sum over independent sets in a bipartite graph modulo a prime.
This weighted sum is denoted by \pbislr{\lweight}{\rweight}, where $\lweight,\rweight\in\Z_p$ are weights the 
vertices of each partition contribute. For this, the input bipartite graphs come with a fixed partitioning of their 
vertices. Note that the set of independent sets of a graph does not change if the graph contains multiedges and further 
note that a bipartite graph cannot contain loops. For this reason, in this section we do not have to 
distinguish between a bipartite multigraph or a bipartite simple graph.



\medskip
For a graph $G$ let $\calI(G)$ denote the set of \emph{independent sets of $G$}. Faben~\cite[Theorem~3.7]{Fab08} shows 
that the problem $\kbis{k}$ of counting the independent sets of a graph modulo an integer~$k$ is hard, even when the 
input graph is restricted to be bipartite.


\begin{theorem}[Faben]\label{thm:bis}
	For all positive integers~$k$, \kbis{k} is $\nkp$-complete.
\end{theorem}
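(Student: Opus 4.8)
The plan is to establish $\nkp$-completeness of $\kbis{k}$ by separately arguing membership in $\nkp$ and $\nkp$-hardness. Membership is routine: the number of independent sets of a graph is computable in $\shp$ (one can guess a subset of vertices and verify in polynomial time that it is independent), hence its value modulo $k$ lies in $\nkp$ by definition. The substance of the theorem is therefore the hardness direction, and here I would appeal directly to Faben~\cite[Theorem~3.7]{Fab08}: he gives a construction reducing $\numkp$-complete problems (such as the modular counting version of $\dsat$) to counting independent sets in a bipartite graph modulo~$k$. Since the class of bipartite graphs is contained in the class of all graphs, hardness of $\kbis{k}$ restricted to bipartite inputs immediately yields hardness of $\kbis{k}$ in general.

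More concretely, the reduction I would describe follows the standard pattern for independent-set counting: given a formula or CSP instance, build a bipartite graph whose independent sets are in a weight-preserving (indeed, carefully grouped) correspondence with the satisfying assignments, so that after summing modulo~$k$ the ``spurious'' independent sets cancel and one recovers the desired count. The left part of the bipartition encodes variables (with a vertex for each literal) and the right part encodes clauses or constraint tuples; edges enforce consistency. One typically needs auxiliary gadgets — small bipartite graphs with a prescribed number of independent sets modulo~$k$ — to pad counts so that the arithmetic works out; Faben's construction supplies exactly these. The bipartiteness of the resulting graph is automatic from this two-layer design.

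\textbf{The main obstacle} is not conceptual but bookkeeping: verifying that the auxiliary gadgets exist with the precise residues needed modulo an \emph{arbitrary} integer~$k$ (not just a prime), and that the global cancellation argument is valid over $\Z_k$ rather than over a field. Over a prime modulus one has the field structure of $\Z_p$ to invert nonzero counts; over composite $k$ one must instead arrange that the contributions one wants to discard are literally multiples of~$k$, which constrains the gadget sizes. Since this is precisely the content of Faben's~\cite[Theorem~3.7]{Fab08}, the cleanest course is to invoke it as a black box, noting only that its output instances are bipartite, and remark that membership in $\nkp$ completes the equivalence. No further argument is required here; the theorem is stated as background and used downstream only through the weighted refinement in Theorem~\ref{thm:wbis-hardness}.
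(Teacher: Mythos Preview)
Your proposal is correct and matches the paper's approach exactly: the paper does not give its own proof of this theorem but simply cites it as Faben~\cite[Theorem~3.7]{Fab08}, which is precisely what you do (with some additional exposition of the cited reduction that the paper omits). Nothing more is needed here.
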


Let $p$ be a prime and $\lweight,\rweight\in\zp$, we will 
study the complexity of computing the following weighed sum over independent sets of a bipartite graph $G=\bipG$ modulo~$p$
\[
	\wISet{G}=\sum_{I\in\calI(G)}\lweight^{|\lpart \cap I|}\rweight^{|\rpart \cap I|}.
\]

We note that every bipartite graph contains a partition $\lpart,\rpart$ and declaring a bipartite graph with $G=\bipG$ is the same as having the graph $G$ along with the partition as input.
A given partition is necessary when studying weighted independent, since changing the partitioning changes the value of the weighted sum.
In the unweighted sum of Theorem~\ref{thm:bis}, there is no need to give a fixed partition as input, as it does not change 
the number of independent sets.
More formally, we study the following problem.

{\renewcommand{\thetheorem}{\ref{prob:bislr}}
\probpar{\pbislr{\lweight}{\rweight}.}
	{$p$ prime and $\lweight, \rweight \in\Z_p$.}
	{Bipartite graph $G=\bipG$.}
	{$\wISet{G} \pmod p$.}}

As a note, $\pbislr{1}{1}$ corresponds to the special case $\kbis{p}$. Theorem~\ref{thm:bis} directly implies that \pbislr{1}{1} is $\numpp$-complete for all primes~$p$.

\medskip
We begin by identifying the tractable instances of $\pbislr{\lweight}{\rweight}$.

\begin{proposition}\label{prop:BIS_easy_cases}
	If $\lweight\equiv 0 \pmod p$  or $\rweight \equiv 0 \pmod p$ then \pbislr{\lweight}{\rweight} is computable in polynomial time. 
\end{proposition}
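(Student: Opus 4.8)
The plan is to exploit the fact that when one of the two vertex weights vanishes modulo~$p$, almost every independent set contributes~$0$ to the weighted sum, so that $\wISet{G}$ collapses to a sum over subsets of a single part, which admits a closed binomial form. Assume first $\lweight\equiv 0\pmod p$; the case $\rweight\equiv 0\pmod p$ will be handled symmetrically. Then any independent set $I\in\calI(G)$ with $\lpart\cap I\neq\emptyset$ contributes a factor $\lweight^{|\lpart\cap I|}\equiv 0\pmod p$, so modulo~$p$ the only surviving terms are those with $I\subseteq\rpart$.

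The key observation is that, since $G=\bipG$ is bipartite, $\rpart$ spans no edge of~$G$, hence \emph{every} subset of $\rpart$ is an independent set of~$G$. Consequently
\[
\wISet{G}\equiv\sum_{I\subseteq\rpart}\rweight^{|I|}=\sum_{j=0}^{|\rpart|}\binom{|\rpart|}{j}\rweight^{j}=(1+\rweight)^{|\rpart|}\pmod p .
\]
Symmetrically, if $\rweight\equiv 0\pmod p$ then $\wISet{G}\equiv(1+\lweight)^{|\lpart|}\pmod p$. In either case the right-hand side is a single modular exponentiation, computable by repeated squaring with $O(\log|V(G)|)$ multiplications in $\zp$, so the whole algorithm runs in time polynomial in the size of~$G$.

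I do not expect a genuine obstacle in this proposition; the only points worth spelling out are that the input already supplies the bipartition, so $\lpart$ and $\rpart$ — and in particular their cardinalities — are read off directly, and that the bipartiteness of~$G$ makes the independence of any $I\subseteq\rpart$ automatic, so no analysis of the edge set of~$G$ is required. Hence the statement follows immediately once the above closed form is recorded.
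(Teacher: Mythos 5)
Your proof is correct and follows essentially the same route as the paper: drop all independent sets meeting $\lpart$, observe that every subset of $\rpart$ is independent in a bipartite graph, and collapse the remaining sum to $(1+\rweight)^{|\rpart|}$ via the binomial theorem, which is computable by fast modular exponentiation. If anything, your write-up is slightly more careful than the paper's, which contains a small typo ("$I\not\subseteq\lpart$" where it should read "$I\subseteq\rpart$", i.e.\ $I\cap\lpart=\emptyset$); your phrasing avoids that slip.
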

\begin{proof}
	Without loss of generality we assume $\lweight \equiv 0 \pmod p$. Thus, any 
	independent set that contains at least one vertex from $\lpart$ contributes zero to the sum in $\wISet{G}$. 
	Therefore, we only need to consider the independent sets~$I$ with $I\not\subseteq\lpart$. Since any subset of $\rpart$ yields an independent set, we obtain
	\begin{align*}
		\wISet{G} & \equiv 1+\sum_{i=1}^{|\rpart|}\binom{|\rpart|}{i}(\rweight)^i\pmod p \\
		 & = \sum_{i=0}^{|\rpart|}\binom{|\rpart|}{i}(\rweight)^i 
		 = (\rweight+1)^{|\rpart|},
	\end{align*}
	which can be computed in polynomial time.
\end{proof}

The remainder of the section is dedicated to proving that \pbislr{\lweight}{\rweight} is hard in all other cases.
Our reduction is inspired by the reduction of Faben in~\cite[Theorem~3.7]{Fab08}.  

In the proofs that follow, to avoid double counting, it is useful to partition the independent sets in the 
following way.

\begin{definition}\label{def:is-l-r-lr}
Let $G=\bipG$ be a bipartite graph.
We denote by $\calI{_L}(G)$ the set $\{I\in\calI(G)\setminus\{\varnothing\}\mid I\subseteq V_L \}$ of non-empty independent sets containing only vertices from $\lpart$.
Similarly, we write $\calI{_\R} (G)$ for the set of non-empty independent sets that contain only vertices from $\rpart$.
Finally, we denote by $\calI_{\L \R} (G)$ the set $\calI(G)\setminus(\,\calI_L(G)\cup\calI_R(G)\cup\{\varnothing\}\,)$ of independent sets containing at least one vertex in $\lpart$ and at least one vertex in $\rpart$.
\end{definition}

%
Given a bipartite graph $G$, the following lemma expresses $\wISet{G}$ in terms of the partitioning defined above.

\begin{lemma}\label{lem:split-sum-part}
Let $G=\bipG$ be a bipartite graph.
Then,
\[
	\wISet{G} =(\lweight+1)^{|\lpart|}+(\rweight+1)^{|\rpart|}-1+\sum_{I\in\calI_{\L \R}(G)} \lweight^{|\lpart\cap I|} \rweight^{|\rpart \cap I|}.
\]
\end{lemma}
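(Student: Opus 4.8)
The plan is to decompose the sum defining $\wISet{G}$ according to the partition of $\calI(G)$ given in Definition~\ref{def:is-l-r-lr}, namely
\[
  \calI(G) = \{\varnothing\} \,\cup\, \calI_L(G) \,\cup\, \calI_R(G) \,\cup\, \calI_{\L\R}(G),
\]
where the four pieces are pairwise disjoint. Since $\wISet{G} = \sum_{I\in\calI(G)} \lweight^{|\lpart\cap I|}\rweight^{|\rpart\cap I|}$ is a sum over $\calI(G)$ of a function of $I$, I can split it as a sum over these four parts and handle each separately.

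First I would observe that the empty set contributes $\lweight^{0}\rweight^{0}=1$. Next, for $I\in\calI_L(G)$ we have $I\subseteq\lpart$ and $I\neq\varnothing$, so $|\rpart\cap I|=0$ and the summand is $\lweight^{|I|}$; moreover, since $G$ is bipartite, \emph{every} non-empty subset of $\lpart$ is an independent set, so $\calI_L(G)$ is exactly the set of non-empty subsets of $\lpart$. Hence
\[
  \sum_{I\in\calI_L(G)} \lweight^{|\lpart\cap I|}\rweight^{|\rpart\cap I|}
  = \sum_{i=1}^{|\lpart|}\binom{|\lpart|}{i}\lweight^{i}
  = (\lweight+1)^{|\lpart|} - 1,
\]
using the binomial theorem exactly as in the proof of Proposition~\ref{prop:BIS_easy_cases}. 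By the symmetric argument, $\sum_{I\in\calI_R(G)} \lweight^{|\lpart\cap I|}\rweight^{|\rpart\cap I|} = (\rweight+1)^{|\rpart|} - 1$. The remaining part, $\sum_{I\in\calI_{\L\R}(G)} \lweight^{|\lpart\cap I|}\rweight^{|\rpart\cap I|}$, is left as is. Adding the four contributions gives
\[
  \wISet{G} = 1 + \bigl((\lweight+1)^{|\lpart|}-1\bigr) + \bigl((\rweight+1)^{|\rpart|}-1\bigr) + \sum_{I\in\calI_{\L\R}(G)} \lweight^{|\lpart\cap I|}\rweight^{|\rpart\cap I|},
\]
which simplifies to the claimed identity.

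There is no real obstacle here: the only points requiring a word of justification are that the four sets genuinely partition $\calI(G)$ (immediate from Definition~\ref{def:is-l-r-lr}) and that, because $G$ is bipartite, $\calI_L(G)$ ranges over all non-empty subsets of $\lpart$ (and symmetrically for $\calI_R(G)$), since there are no edges within $\lpart$ or within $\rpart$. The binomial-coefficient manipulation is the same one already used for Proposition~\ref{prop:BIS_easy_cases}. I expect the mild bookkeeping step — making sure the three ``$-1$'' terms and the leading ``$+1$'' combine correctly — to be the only place to be careful, and it is purely arithmetic.
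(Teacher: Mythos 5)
Your proof is correct and follows essentially the same route as the paper: partition $\calI(G)$ into $\{\varnothing\}$, $\calI_L(G)$, $\calI_R(G)$, $\calI_{\L\R}(G)$, split the defining sum accordingly, evaluate the one-sided pieces via the binomial theorem as in Proposition~\ref{prop:BIS_easy_cases}, and recombine. No meaningful difference from the paper's argument.
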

\begin{proof}
By Definition~\ref{def:is-l-r-lr} the set $\calI(G)$ partitions into $\{\calI_\L (G),\calI_\R (G),\calI_{\L\R}(G),\{\emptyset\}\}$, which yields
\begin{align}
		\wISet{G}& =\sum_{I\in\calI(G)}\lweight^{|\lpart\cap I|}\rweight^{|\rpart\cap I|} \nonumber \\
		 &= \sum_{I\in\calI_\L (G)}\lweight^{|I|}+\sum_{I\in\calI_\R(G)}\rweight^{|I|}+\sum_{I\in\calI_{\L\R}(G)}\lweight^{|\lpart \cap I|}\rweight^{|\rpart \cap I|}+1. \label{eq:tot-sum}
\end{align}

As in the proof of Proposition \ref{prop:BIS_easy_cases}, we obtain
\begin{align}
	 \label{eq:left-is}
	 \sum_{I\in\calI_\L(G)} \lweight^{|I|} &= \sum_{i=0}^{|\lpart|}\binom{|\lpart|}{i}\lweight^i - 1 = (\lweight+1)^{|\lpart|}-1, \qquad \mbox{and analoguously} \\
	 \label{eq:right-is}
	 \sum_{I\in\calI_\R (G)} \rweight^{|I|} &= (\rweight+1)^{|\rpart|}-1.
\end{align}
Inserting \eqref{eq:right-is} and \eqref{eq:left-is} into \eqref{eq:tot-sum} yields the lemma.
\end{proof}

For our reduction to work though, we must design gadgets which are tailored to our general setting of weighted 
independent sets.

\begin{definition}\label{def:special_bip_graph}
Let $p$ be a prime and let $\lweight, \rweight \in \zsp$.

For every $k \in [p]$ we denote by $B(k, p)=\bipG$ the bipartite graph  with $4(p-1)$ vertices in two disjoint vertex sets $V_L := \{u_1,\dots,u_{2(p-1)}\}$, $V_R := \{v_1,\dots,v_{2(p-1)}\}$ and the edge set
$$E:=\{(u_i,v_j)\mid i,j\in[2(p-1)], \textrm{ where } i\neq j\}\cup\{(u_i,v_i)\mid i \notin [k]\},$$
consisting of all edges in the complete bipartite graph $K_{2(p-1), 2(p-1)}$ except $(u_i, v_i)$ with $i\in [k]$.	 
\end{definition}

See Figure \ref{fig:bip_gadget} for the example graph $B(1, 3)$.

\begin{figure}[t]
\centering
\includegraphics{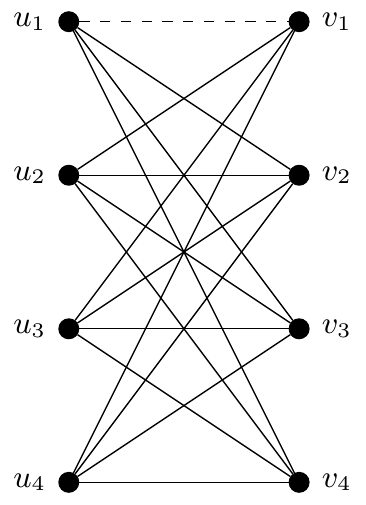}
\caption{Constructive route for $p=3$ and $k=1$. Starting with the complete bipartite graph $K_{4,4}$ the edge $(u_1, v_1)$ is removed.}
\label{fig:bip_gadget}
\end{figure}

\medskip
$B(k,p)$ has two types of vertices in each partition:
the vertices in $\{u_i,v_i\}_{i\leq k}$ of degree $2(p-1)-1$ and the vertices in $\{u_i,v_i\}_{i>k}$ of degree $2(p-1)$.
Since the size of the vertex sets is a multiple of $(p-1)$ we are able to apply Fermat's little Theorem~\ref{thm:Fermat} in our reductions later on. Moreover, the size is large enough to generate every necessary value of $k\in [p]$.
This freedom of choice for $k$ will entail the possibility to, given $\lweight, \rweight \not\equiv 0\pmod p$, choose $k$ such that $\wISet{B(k, p)} \equiv 0 \pmod p$. Given such a $k$, we will see that in each partition there exists a vertex $v$ such that removing this vertex from $B(k,p)$ will yield $\wISet{B(k, p) - v} \not\equiv 0 \pmod p$. This property will be crucial later on.

\medskip
The following lemma establishes the key properties of the bipartite~$B(k,p)$ defined above and will be later used to 
show that our reduction gadgets behave as we want.

\begin{lemma}\label{lem:sum-is-lr}
	Let $p$ be a prime, $\lweight,\rweight\in\zsp$, $k\in \Z_p$ and $B=B(k,p)$ as in Definition~\ref{def:special_bip_graph}. Then, 
	\[
		\sum_{I\in\calI_{\L\R}(B)}\lweight^{|\lpart\cap 
		I|}\rweight^{|\rpart\cap I|} \equiv k \lweight\rweight\pmod p.
	\]
\end{lemma}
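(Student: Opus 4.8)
The goal is to evaluate $\sum_{I\in\calI_{\L\R}(B)}\lweight^{|\lpart\cap I|}\rweight^{|\rpart\cap I|}$ where $B=B(k,p)$, i.e.\ we sum only over independent sets that use at least one vertex from each side. The plan is to first understand exactly which subsets $I$ with $I\cap\lpart\neq\varnothing$ and $I\cap\rpart\neq\varnothing$ are independent in $B$. Since $B$ is obtained from the complete bipartite graph $K_{2(p-1),2(p-1)}$ by deleting precisely the edges $(u_i,v_i)$ for $i\in[k]$, a set $I=A\cup C$ with $\varnothing\neq A\subseteq\lpart$ and $\varnothing\neq C\subseteq\rpart$ is independent if and only if there are \emph{no} edges between $A$ and $C$. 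The only non-edges between the two sides are the pairs $(u_i,v_i)$ with $i\in[k]$, so independence forces: for every $u_i\in A$ and every $v_j\in C$ we need $(u_i,v_j)\notin E$, which (given $A,C$ nonempty) is only possible if $A=\{u_i\}$ and $C=\{v_i\}$ for a single index $i\in[k]$. Indeed if $A$ contained two indices $i\neq i'$, then whichever element $v_j\in C$ we pick, at least one of $(u_i,v_j),(u_{i'},v_j)$ is an edge; similarly $C$ cannot have two elements.

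So the sum collapses to exactly $k$ terms, one for each $i\in[k]$, namely $I=\{u_i,v_i\}$, and each such term contributes $\lweight^{1}\rweight^{1}=\lweight\rweight$. Therefore
\[
	\sum_{I\in\calI_{\L\R}(B)}\lweight^{|\lpart\cap I|}\rweight^{|\rpart\cap I|}=k\,\lweight\rweight,
\]
which is $\equiv k\lweight\rweight\pmod p$ as claimed. I would write out the ``if and only if'' characterisation as the single key combinatorial step, then observe the contributions are uniform.

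The main (and only mild) obstacle is getting the independence characterisation airtight: one must be careful that a single-vertex-per-side set $\{u_i,v_i\}$ is independent \emph{precisely} when $i\in[k]$ (for $i>k$ the edge $(u_i,v_i)$ is present, so $\{u_i,v_i\}$ is not independent and does not contribute), and that no $I\in\calI_{\L\R}(B)$ can have $|I\cap\lpart|\geq 2$ or $|I\cap\rpart|\geq 2$. Both follow immediately from the edge set in Definition~\ref{def:special_bip_graph}: every pair $(u_i,v_j)$ with $i\neq j$ is an edge, so any $A\subseteq\lpart$ and $C\subseteq\rpart$ with $|A|\geq 2$ or $|C|\geq 2$ necessarily contains such a crossing edge whenever both are nonempty. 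No case analysis beyond this is needed, and Fermat's little theorem is not required for this particular lemma (it will enter only when the lemma is combined with Lemma~\ref{lem:split-sum-part} to control the other terms). I would keep the write-up to essentially the three sentences above plus the displayed equation.
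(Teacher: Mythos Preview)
Your proposal is correct and follows essentially the same approach as the paper: you characterise $\calI_{\L\R}(B)$ as exactly the sets $\{u_i,v_i\}$ with $i\in[k]$ and sum the $k$ identical contributions $\lweight\rweight$. Your write-up is in fact slightly more explicit than the paper's in noting why $i>k$ is excluded and why neither side can contain two vertices, but the argument is the same.
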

\begin{proof}
	Let $I\in\calI_{\L\R}(B)$ be a non-empty 
	independent set containing a vertex $u_i \in \rpart$ and a vertex $v_j \in \lpart$.
	By the definition of $B$ there is no independent set containing two vertices $u_i$ and $v_j$ with $i \neq j$.
	Thus $i=j$ and $V_L\cap I=\{u_i\}$ as well as $V_R\cap I=\{v_i\}$.
	We obtain $I = \{u_i, v_i\}$ yielding $\calI_{LR}=\{\{u_i,v_i\}\mid i\in[k]\}$.
\end{proof}

The following lemma states the properties of the graphs we will use as gadgets, namely a copy of a $B(k,p)$ for an appropriately chosen~$k\in[p]$, together with two distinguished vertices.

\begin{lemma}\label{lem:bip-gadget}
	Let~$p$ be a prime and $\lweight, \rweight \in \zsp$. There exists a bipartite graph $B= \bipG$ with 
	distinguished vertices $u_\L\in \lpart$ and~$v_\R\in \rpart$, that satisfies
	\begin{enumerate}
		\item $\wISet{B} \equiv 0\pmod p$,
		\item $\wISet{B-u_\L} \not\equiv 0 \pmod p$,
		\item $\wISet{B-v_\R} \not\equiv 0 \pmod p$.
	\end{enumerate}
\end{lemma}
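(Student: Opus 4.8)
The plan is to take $B$ to be a copy of $B(k,p)$ from Definition~\ref{def:special_bip_graph}, with $k \in [p]$ chosen so that $\wISet{B(k,p)} \equiv 0 \pmod p$, and with the distinguished vertices $u_\L, v_\R$ taken to be one of the vertices of "degree $2(p-1)-1$", say $u_\L = u_1$ and $v_\R = v_1$ (recall these are the vertices whose incident edge $(u_i,v_i)$ with $i \le k$ was deleted from $K_{2(p-1),2(p-1)}$, so they exist precisely when $k \ge 1$). First I would combine Lemma~\ref{lem:split-sum-part} with Lemma~\ref{lem:sum-is-lr}: since $B(k,p)$ has $|\lpart| = |\rpart| = 2(p-1)$, Fermat's little theorem (Theorem~\ref{thm:Fermat}) gives $(\lweight+1)^{2(p-1)} \equiv 1$ and $(\rweight+1)^{2(p-1)} \equiv 1 \pmod p$ whenever $\lweight+1, \rweight+1 \not\equiv 0$; handling the subcase $\lweight \equiv -1$ or $\rweight \equiv -1$ separately (then the relevant power is literally $0$). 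Plugging into Lemma~\ref{lem:split-sum-part},
\[
\wISet{B(k,p)} \equiv 1 + 1 - 1 + k\lweight\rweight \equiv 1 + k\lweight\rweight \pmod p
\]
in the generic case, and a similar but shifted expression in the degenerate cases. Since $\lweight, \rweight \in \zsp$, the element $\lweight\rweight$ is invertible mod~$p$, so there is a unique $k \in \{0,1,\dots,p-1\}$ with $1 + k\lweight\rweight \equiv 0$, namely $k \equiv -(\lweight\rweight)^{-1}$; and crucially this $k$ is nonzero, so $k \in [p]$ as required and the distinguished vertices exist. This settles property~(1). (For the degenerate subcases $\lweight \equiv -1$ and/or $\rweight \equiv -1$ I expect the same kind of computation to pin down an admissible nonzero $k$; this is the bookkeeping part.)

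For properties~(2) and~(3), by the $u \leftrightarrow v$ symmetry of $B(k,p)$ it suffices to treat $B - u_\L = B(k,p) - u_1$. I would recompute $\wISet{\cdot}$ for this graph directly via Lemma~\ref{lem:split-sum-part}. Deleting $u_1$ leaves $|\lpart| = 2(p-1) - 1$ and $|\rpart| = 2(p-1)$, so the left-hand power becomes $(\lweight+1)^{2(p-1)-1} \equiv (\lweight+1)^{-1}$ and the right-hand one stays $\equiv 1$ by Fermat. For the $\calI_{\L\R}$ sum: running the argument of Lemma~\ref{lem:sum-is-lr} again, an independent set meeting both sides must be $\{u_i,v_i\}$ with $i \in [k]$, but now $u_1$ is gone, so the surviving sets are $\{u_i,v_i\}$ for $i \in \{2,\dots,k\}$, contributing $(k-1)\lweight\rweight$. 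Hence
\[
\wISet{B(k,p) - u_1} \equiv (\lweight+1)^{-1} + 1 - 1 + (k-1)\lweight\rweight \equiv (\lweight+1)^{-1} + (k-1)\lweight\rweight \pmod p.
\]
Using $k\lweight\rweight \equiv -1$ from property~(1), the right side is $(\lweight+1)^{-1} - 1 - \lweight\rweight$, which I then need to show is $\not\equiv 0 \pmod p$. Clearing denominators, this is equivalent to $1 - (\lweight+1) - \lweight\rweight(\lweight+1) \not\equiv 0$, i.e. $-\lweight - \lweight\rweight(\lweight+1) \not\equiv 0$, i.e. (dividing by the unit $\lweight$) $1 + \rweight(\lweight+1) \not\equiv 0$, i.e. $\lweight\rweight + \rweight + 1 \not\equiv 0 \pmod p$.

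The main obstacle is therefore exactly this last nonvanishing condition $\lweight\rweight + \rweight + 1 \not\equiv 0 \pmod p$ (and its mirror image for property~(3)), which need not hold for all pairs $(\lweight,\rweight) \in (\zsp)^2$. To deal with this I would not fix the distinguished vertex to be $u_1$ rigidly, but instead exploit the freedom in Definition~\ref{def:special_bip_graph}: the graph $B(k,p)$ has $2(p-1)$ vertices on each side, far more than the value of $k$ we need, so there is room to either (a) pick a different value of $k$ that still satisfies property~(1) up to adding a multiple of $p$ — but mod $p$ the value is forced, so this does not help directly — or (b) enlarge the construction. The cleaner fix I expect to use: take $B$ to be $B(k,p)$ together with a small number of extra pendant or complete-bipartite "padding" vertices attached so as to multiply $\wISet{\cdot}$ by controlled units $(\lweight+1)$ or $(\rweight+1)$, chosen so that after deletion of a padding vertex the total is a nonzero unit while the undeleted total is still $\equiv 0$; because $\lweight, \rweight, \lweight+1, \rweight+1$ range over a finite field we can always force a nonzero outcome by adjusting how many padding vertices sit on each side, since the only way to be stuck would be for $\wISet{\cdot}$ to vanish identically under all such perturbations, which a short case analysis over the finitely many residues rules out. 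Writing this padding argument carefully, so that it simultaneously handles properties~(2) and~(3) and the degenerate subcases $\lweight \equiv -1$, $\rweight \equiv -1$, is where the real work of the proof lies; everything else is the routine application of Lemmas~\ref{lem:split-sum-part} and~\ref{lem:sum-is-lr} together with Fermat's little theorem.
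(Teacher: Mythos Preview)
Your overall plan---take $B=B(k,p)$, compute $\wISet{B}$ via Lemmas~\ref{lem:split-sum-part} and~\ref{lem:sum-is-lr}, and solve for~$k$---matches the paper. The gap is in your choice of distinguished vertices and the proposed ``padding'' fix.

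By picking $u_\L=u_1$ (a vertex of reduced degree) you force the $\calI_{\L\R}$ contribution to drop from $k\lweight\rweight$ to $(k-1)\lweight\rweight$ upon deletion, and this is exactly what generates the spurious condition $\lweight\rweight+\rweight+1\not\equiv0$. That condition genuinely fails for some $(\lweight,\rweight)\in(\zsp)^2$ (e.g.\ $p=5$, $\lweight=2$, $\rweight=3$), so this choice does not work in general. Your padding repair is not correct as stated: adjoining an isolated left vertex~$w$ gives $\wISet{B'}=(\lweight+1)\wISet{B}$ and $\wISet{B'-w}=\wISet{B}$, so if $\wISet{B}\equiv0$ then both quantities are~$0$ and nothing is gained; pendant vertices behave similarly. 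No amount of multiplying by units turns~$0$ into a nonzero element.

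The paper's resolution is simpler and avoids padding entirely: the choice of $u_\L,v_\R$ depends on the case. In the generic case $\lweight,\rweight\not\equiv-1$, take $u_\L=u_{2(p-1)}$ and $v_\R=v_{2(p-1)}$, the \emph{full}-degree vertices. Removing such a vertex leaves $\calI_{\L\R}$ unchanged, so
\[
\wISet{B-u_\L}\equiv(\lweight+1)^{-1}+1-1+k\lweight\rweight\equiv(\lweight+1)^{-1}-1\pmod p,
\]
which is nonzero precisely because $\lweight\not\equiv0$. Only in the degenerate cases does one use a reduced-degree vertex on the side where the weight is~$-1$ (and a different~$k$: e.g.\ $k=p$ when exactly one weight is~$-1$, $k=1$ when both are). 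So the ``real work'' you anticipated is not a padding construction but a four-case choice of $(k,u_\L,v_\R)$.
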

\begin{proof}
	As pointed out, the family of graphs $B(k,p)$ contains at least one graph with the desired properties given the weights $\lweight, \rweight \in \zsp$. For every graph $B=B(k,p)$ we apply Lemma~\ref{lem:split-sum-part} to obtain
	\begin{align}\label{eq:wISet_B}
	\nonumber
	\wISet{B}&=(\lweight+1)^{|\lpart|}+(\rweight+1)^{|\rpart|}-1+\sum_{I\in\calI_{\L\R}}\lweight^{|\lpart\cap I_{\L\R}|}\rweight^{|\rpart\cap	I_{\L\R}|}\\
	&= (\lweight+1)^{2(p-1)}+(\rweight+1)^{2(p-1)}-1+\sum_{I\in\calI_{\L\R}}\lweight^{|\lpart\cap I_{\L\R}|}\rweight^{|\rpart\cap	I_{\L\R}|}.
	\end{align}
	If one of the weights is equivalent to $-1$ in $\zp$ the corresponding term in \eqref{eq:wISet_B} vanishes. Otherwise, we are allowed to apply Fermat's little Theorem~\ref{thm:Fermat} and the corresponding term is equivalent to $1$.
	Therefore, we have to distinguish cases.

	\begin{enumerate}[wide, labelwidth=!, labelindent=0pt, parsep=0pt]
		\item[i. $\lweight, \rweight \not \equiv-1\pmod p$.] $\mbox{}$\newline
		 	We can apply Fermat's little Theorem~\ref{thm:Fermat} on the terms corresponding to both weights. In conjunction with Lemma~\ref{lem:sum-is-lr} this yields
		 	\begin{align*}
		 		\wISet{B}&\equiv 1 + k\lweight\rweight\pmod p .
		 	\end{align*}
		 	Now we choose $k \in \zsp$ satisfying $k \equiv - (\lweight\rweight)^{-1} \pmod p$ and property~$\mathit{1}$ follows. We note that such a $k$ uniquely exists since $p$ is a prime and $\zp$ a field.
		 	In order to prove the remaining properties, we choose $u_\L = u_{2(p-1)}$ and $v_\L = v_{2(p-1)}$. We observe that removing any of these two vertices from $V(B)$ does not affect the independent sets in $\calI_{\L \R}$. The reason is that in $B$ the vertices $u_\L$ and $v_\R$ are connected to every vertex in $\lpart$ and $\rpart$, respectively. We derive due to the choice of $k$ by \eqref{eq:wISet_B}
		 	\begin{align*}
			 	\wISet{B- u_\L} &\equiv (\lweight + 1)^{2(p-1)-1} - 1 \pmod p \equiv (\lweight + 1)^{-1} - 1 \pmod p; \\
			 	\wISet{B - v_\R} &\equiv (\rweight + 1)^{2(p-1)-1} - 1 \pmod p \equiv (\rweight + 1)^{-1} - 1 \pmod p.
		 	\end{align*}
		 	We note that both expressions are not equivalent to $0$ in $\zp$ since both weights are in $\zsp$.
		 \item[ii. $\lweight \equiv -1 \pmod p ,\, \rweight \not \equiv -1 \pmod p$.] $\mbox{}$\newline
		 	Lemma~\ref{lem:sum-is-lr} in conjunction with Fermat's little Theorem~\ref{thm:Fermat} on the term corresponding to the weight $\rweight$ yields
		 	\begin{align*}
		 		\wISet{B}&\equiv k\lweight\rweight\pmod p .
		 	\end{align*}
		 	We note that the definition of $B(k,p)$ also allows us to choose $k=p$, which we are doing in this case. This choice proves property~$\mathit{1}$. However, this implies that we cannot choose the same vertices as we did in the first case to prove the remaining properties. In particular, we have to adjust the choice for the vertex $u_\L$ corresponding to the weight $\lweight$.
		 	
		 	Regarding the vertex in $\rpart$, we choose again $v_\R = v_{2(p-1)}$. Similar to the observation in the first case this yields
		 	\begin{align*}
		 		\wISet{B - v_\R} &\equiv (\rweight + 1)^{2(p-1)-1} - 1 \pmod p \equiv (\rweight + 1)^{-1} - 1 \pmod p.
		 	\end{align*}
		 	Regarding the vertex in $\lpart$, we choose $u_\L = u_k$. We note that the edge $(u_k, v_k)$ is missing in $B$ and therefore the set $\{u_k, v_k\}$ is in $\calI_{\L \R} (B)$. Therefore, for the choice of $u_\R$ the set $\{u_k, v_k\}$ cannot be in $\calI_{\L \R} (B - u_\L)$. In particular, we obtain $\calI_{\L \R} (B - u_\L) = \calI_{\L\R}(B) - \{u_k, v_k\}$. We deduce
		 	\begin{align*}
		 		\wISet{B - u_\L} &\equiv \sum_{I\in\calI_{\L\R}(B-u_\L)}\lweight^{|\lpart\cap I_{\L\R}|}\rweight^{|\rpart\cap	I_{\L\R}|} = (k - 1) \lweight \rweight.
		 	\end{align*}
		 	Due to the choice of $k=p$ in conjunction with $\lweight \equiv -1$ this simplifies to the desired
		 	\begin{equation*}
			 		\wISet{B - u_\L} \equiv \rweight,
		 	\end{equation*}
		 	which cannot be equivalent to $0$ since $\rweight \in \zsp$.
		 \item[iii. $\lweight \not\equiv -1 ,\, \rweight\equiv-1\pmod p$.] $\mbox{}$\newline
		 	The proof of this case in analogue to the second case. For this purpose we need to interchange the role of the left and right partition. In particular, choosing $k=p$ as well as $u_\L = u_{2(p-1)}$ and $v_\R = v_k$ establishes this case.
		 \item[iv. $\lweight, \rweight \equiv-1\pmod p$.] $\mbox{}$\newline
		 	This case will be proven with a variation of the arguments used in the above cases. Since both weights are such that the corresponding terms in \eqref{eq:wISet_B} are vanishing, we obtain
		 	\begin{align*}
		 		\wISet{B}&\equiv - 1 + k\lweight\rweight\pmod p .
		 	\end{align*}
		 	In fact, this is almost the same situation we faced in the first case. We choose $k \in \zp$ satisfying $k\equiv (\lweight\rweight)^{-1} \pmod p$ yielding the first property. Due to this case's assumption this implies $k=1$.
		 	Similar to the situation faced in the second and third case we have to choose $u_\L$ and $v_\R$ such that the removal of one of these vertices from $B$ affects the independent sets in $\calI_{\L \R}$. Therefore, we choose again $u_\L = u_k$ and $v_\R = v_k$. This choice has the same effect on $\calI_{\L \R}$ as we have observed above. We deduce $\sum_{I\in\calI_{\L\R}(B-u_\L)}\lweight^{|\lpart\cap I_{\L\R}|}\rweight^{|\rpart\cap	I_{\L\R}|} = (k-1) \lweight \rweight = 0$ and thus
		 	\begin{align*}
			 	\wISet{B - u_\L} & = -1 + \sum_{I\in\calI_{\L\R}(B-u_\L)}\lweight^{|\lpart\cap 
I_{\L\R}|}\rweight^{|\rpart\cap	I_{\L\R}|} = -1 , \quad \mbox{and analoguously} \\
			 	\wISet{B - v_\R} &= -1.
		 	\end{align*}
		 	This establishes the lemma. \qedhere
	\end{enumerate}

\end{proof}

\medskip

\medskip
As in the proof of~\ref{thm:bis} we use $\ksat{}$ as a starting problem. 
Given a Boolean formula~$\varphi$, let $\sat(\varphi)$ be the set of the satisfying assignments of~$\varphi$.

\probpar{\ksat{}.}
  {$k$ integer.}
  {Boolean formula~$\varphi$ in conjunctive normal form.}
  {$|\sat(\varphi)|\pmod k$.}

Simon in his thesis~\cite[Theorem 4.1]{Sim75} shows how the original reduction of Cook can be made parsimonious. 
As Faben observes in~\cite[Theorem~3.1.17]{Fabenthesis} any parsimonious reduction is parsimonious modulo~$k$, for any integer~$k$, hence \ksat{} is $\nkp$-complete.

\begin{theorem}[Simon]\label{thm:sat-hardness}
\ksat{} is $\nkp$-complete under parsimonious reductions for all integers $k$.
\end{theorem}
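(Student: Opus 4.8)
The plan is to prove both membership and hardness. Membership in $\nkp$ is immediate: given a CNF formula $\varphi$, a polynomial-time nondeterministic Turing machine can guess an assignment to the variables of $\varphi$ and accept exactly when the assignment is satisfying, so the number of accepting computations equals $|\sat(\varphi)|$. Hence $\varphi \mapsto |\sat(\varphi)|$ is a $\shp$ function, and $\ksat{}$, which asks for this value modulo $k$, lies in $\nkp$.

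For hardness I would unfold the definition of $\nkp$. A problem $Q \in \nkp$ is of the form ``compute $f(x)$ modulo $k$'' for some $f \in \shp$; fix a polynomial-time nondeterministic Turing machine $M$ whose number of accepting computations on input $x$ is $f(x)$. The crucial ingredient is Cook's theorem in its \emph{parsimonious} form, due to Simon: from the machine $M$ and the input $x$ one computes in polynomial time a CNF formula $\varphi_{M,x}$ together with a bijection between $\sat(\varphi_{M,x})$ and the set of accepting computations of $M$ on $x$. In particular $|\sat(\varphi_{M,x})| = f(x)$, so the map $x \mapsto \varphi_{M,x}$ is a parsimonious reduction from the exact counting problem for $f$ to the exact counting problem for $|\sat(\cdot)|$.

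Finally I would invoke Faben's observation that a parsimonious reduction is automatically a reduction modulo every integer $k$: if $P(x) = Q'(g(x))$ for all $x$, then reducing both sides modulo $k$ gives $P(x) \equiv Q'(g(x)) \pmod k$. Applying this with $P$ the exact counting problem for $f$ and $Q'$ the exact counting problem for $|\sat(\cdot)|$, the very same polynomial-time map $x \mapsto \varphi_{M,x}$ shows that $Q$ reduces to $\ksat{}$ via a reduction that is parsimonious modulo $k$. Since $Q \in \nkp$ was arbitrary and $\ksat{} \in \nkp$, this establishes that $\ksat{}$ is $\nkp$-complete under parsimonious reductions, for every integer $k$.

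The only real obstacle is the parsimonious strengthening of Cook's theorem itself: the naive tableau encoding of a computation typically introduces auxiliary variables whose values are not forced, so a single accepting computation may correspond to several satisfying assignments (or a satisfying assignment may encode no valid computation). Avoiding this requires designing the clauses so that every bit of a satisfying assignment is uniquely determined by the computation it represents; this is precisely what Simon's construction achieves, so in the write-up I would simply cite it rather than reproduce the encoding.
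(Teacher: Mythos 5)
Your argument is exactly the paper's: cite Simon's parsimonious strengthening of Cook's theorem and observe, following Faben, that a parsimonious reduction is automatically parsimonious modulo every integer $k$; membership is the standard observation that $|\sat(\cdot)|$ is a $\shp$ function. You have merely unfolded the definitions a bit more explicitly than the paper, which is content to give the two citations.
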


Let $p$ be a prime. Our reduction starts from a Boolean formula $\varphi$, input for \numpsat{}, and constructs, in two 
stages, a graph $G_\varphi$, input for \pbislr{\lweight}{\rweight}.

In the first stage we define the graph $G'_\varphi$. For every variable $x_i$ in $\varphi$, 
$G'_\varphi$ contains three vertices $u_i$, $\ubar_i$ and $w_i$ to the left vertex set $\lpart(G'_\varphi)$ as well as 
three vertices $v_i$, $\vbar_i$ and $z_i$ to the right vertex set $\rpart (G'_\varphi)$. For every clause 
$c_j$ of $\varphi$, $G'_\varphi$ further contains a vertex $y_j$ in the right vertex set $\rpart(G'_\varphi)$.
We further introduce the edges forming the cycle $u_iv_iw_i\vbar_i\ubar_iz_iu_i$ to $E(G'_\varphi)$ for every variable $x_i$ in $\varphi$.
Additionally for all $i\in[n]$, if $x_i$ appears as a literal in clause~$c_j$ of $\varphi$, we introduce the edge 
$(u_i,y_j)$ in $G'_\varphi$ and if $\bar{x_i}$ appears as a literal in clause~$c_j$, we introduce the edge 
$(\ubar_i,y_j)$ in $G'_\varphi$. The left part of Figure~\ref{fig:weighted_BIS} illustrates an example of this 
construction. Formally, $G'_\varphi$ is defined as follows.

\begin{figure}[t]
\centering
\includegraphics[width=\textwidth]{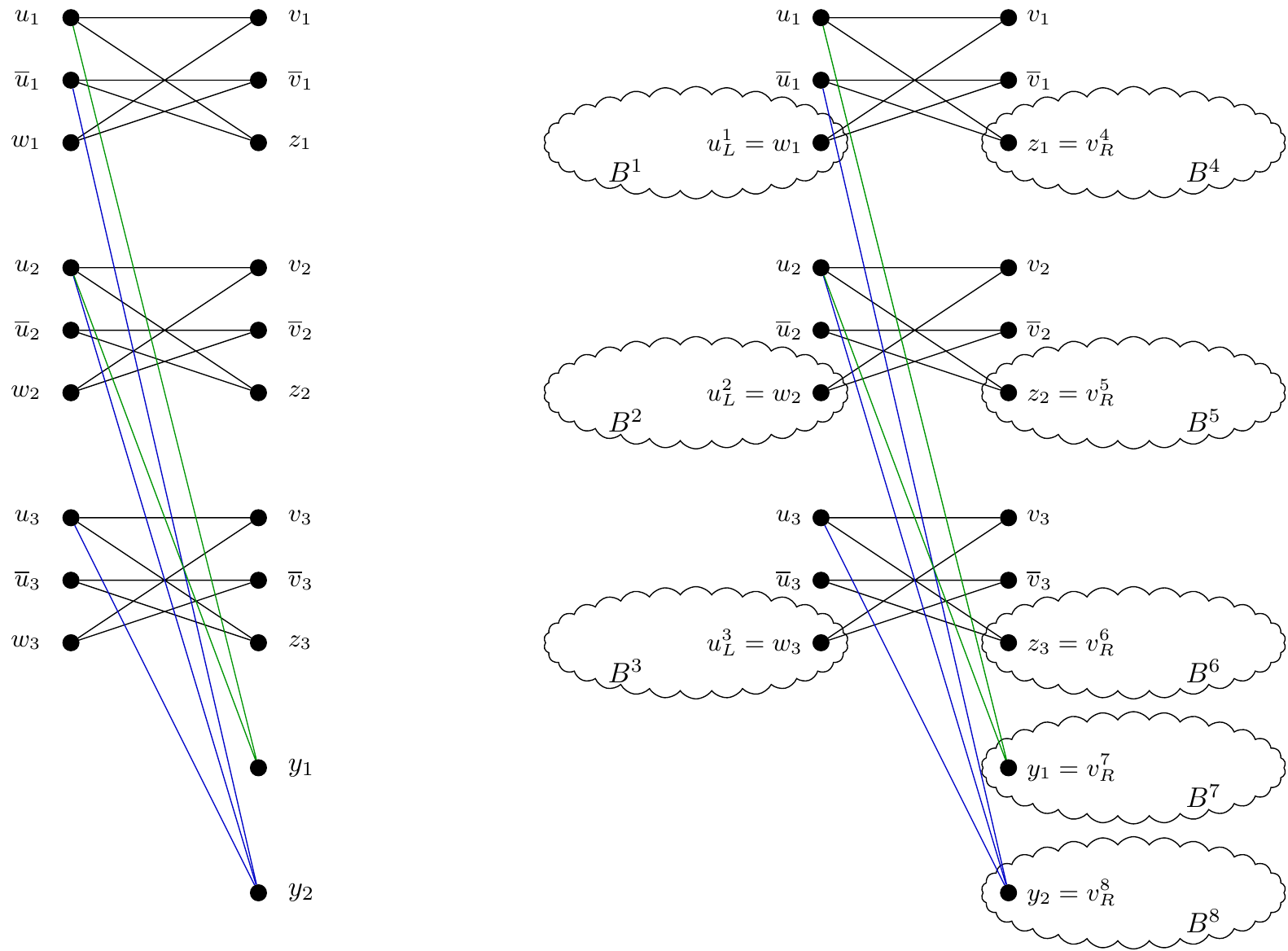}
\caption{The graphs $G'_\varphi$ and $G_\varphi$ for $\varphi = \textcolor{mycolor2}{(x_1 \vee x_2)} \wedge 
\textcolor{mycolor1}{(\bar{x}_1 \vee x_2 \vee x_3)}.$}\label{fig:weighted_BIS}
\end{figure}

\begin{definition} \label{def:auxgraph}
Let $\varphi$ be a Boolean formula in conjunctive normal form with variables $x_1,\ldots,x_n$ and clauses $c_1,\ldots,c_m$.
The bipartite graph $G'_\varphi = (\lpart (G'_\varphi), \rpart (G'_\varphi), E(G'_\varphi))$ is defined by
\begin{align*}
\lpart(G'_\varphi)=&\{\,u_i,\ubar_i,w_i\mid i\in[n]\,\}, \\
\rpart(G'_\varphi)=&\{\,v_i,\vbar_i,z_i\mid i\in[n]\,\}\cup\{\,y_j\mid j\in[m]\,\} \text{ and} \\
E(G'_\varphi)=&\{\,(u_i,v_i),(v_i, w_i), (w_i, \bar{v_i}), (\bar{v_i},\bar{u_i}),(\bar u_i,z_i),(z_i,u_i) \mid i\in[n] \,\}  \\
&\cup \{\,(u_i, y_j) \mid i\in[n], \,  j\in[m]  \text{ and }  x_i \text{ occurs in } c_j\,\} \; \\
&\cup \{\,(\ubar_i, y_j) \mid i\in[n] , \, j\in[m] \text{ and } \xbar_i \text{ occurs in } c_j\,\}.
\end{align*}
\end{definition}
%
%
%
Note that $G'_\varphi$ is bipartite, since there are no adjacent vertices in both partition sets.

\smallskip
In the second and final stage, we construct the graph~$G_\varphi$. Let $(B,u_\L,v_\R)$ be the graph obtained from Lemma~\ref{lem:bip-gadget}.
$G_\varphi$ is a copy of $G'_\varphi$ together with two copies of $B$ for every variable of $\varphi$ and one copy of $B$ for every clause.
The first $n$ copies $B^1,\ldots,B^n$ are connected to $G'_\varphi$ by identifying the distinguished vertex $u^i_L$ in the left component with $w_i \in \lpart (G'_\varphi)$ for all $i \in [n]$.
The second $n$ copies $B^{n+1},\ldots,B^{2n}$ are connected to $G'_\varphi$ by identifying the distinguished vertex $v^{n+i}_R$ in their right components with $z_i 
\in \rpart (G'_\varphi)$ for all $i \in [n]$.
The remaining $m$ copies $B^{2n+1},\ldots,B^{2n+m}$ of $B$ are connected to $G'_\varphi$ by identifying the distinguished vertex $v_R^{2n+j}$ in their right components with $y_j \in \rpart (G'_\varphi)$ for all $j \in [m]$.
For an example see the right part of Figure~\ref{fig:weighted_BIS}.
Formally, we have.

\begin{definition} \label{def:gadget}
Let $\varphi$ be a Boolean formula in conjunctive normal form with variables $x_1,\ldots,x_n$ and clauses $c_1,\ldots,c_m$.
Moreover, let $G'_\varphi$ denote the bipartite graph from Definition \ref{def:auxgraph} with $2n+m$ vertices.
Further, let~$p$ be a prime, $\lweight, \rweight \in \zsp$ and $B$ be the bipartite graph with the distinguished vertices $u_L \in V_L(B)$ and $v_R \in V_\R(B)$ as provided by Lemma \ref{lem:bip-gadget}

For every $j \in [2n+m]$ denote by $B^j$ a copy of $B$ where every vertex $v \in V(B)$ is renamed $v^j$. The bipartite graph
$G_\varphi$ consists of the disjoint union of $G'_\varphi$ and $\bigcup_{j\in [2n+m]}\ B^j$ with the following identifications:
For all $i \in [n]$ identify $w_i$ with $u_L^i$ and $z_i$ with $v^{n+i}_R$.
For every $j \in [m]$ identify $y_j$ with $v_R^{2n+j}$.
\end{definition}

\smallskip
We observe that the graph $G_\varphi$ is bipartite. Moreover, the identification of the vertices is such that  the 
assignment of vertices to the partition is preserved, i.e., $v \in \lpart (G_\varphi)$ if and only if $v \in \lpart 
(G'_\varphi)$ or $v \in \lpart(B^j)$ for some $j \in [2n +m]$.
This is justified since vertices in $\lpart(G'_\varphi)$ are identified exclusively with vertices in $\lpart(B)$ and  
vertices in $\rpart(G'_\varphi)$ are identified exclusively with vertices in $\rpart(B)$ in the above construction.

The following partition will be useful in our proofs to follow.

\begin{definition}\label{def:wbis_gadget_partitioning}
Let $\varphi$ be a Boolean formula in conjunctive normal form with $n$ variables and $m$ clauses and let $G_\varphi$ be the associated bipartite gadget graph from Definition \ref{def:gadget}.
We recursively define a partition~$\{S_j\}_{j=0}^{2n+m}$ of $\calI(G_\varphi)$ by
\begin{align*}
S_1&:=\{\,I\in\calI(G_\varphi)\mid v_1,\vbar_1\notin I\,\} \\
S_j&:=\begin{cases}
\{\,I\in\calI(G_\varphi)\setminus \bigcup_{i=1}^{j-1} S_{i}\mid \Gamma_{G'_\varphi}(w_j) \cap I = \emptyset\,\} &\text{for } j \in \{2, \dots , n\}, \\
\{\,I\in\calI(G_\varphi)\setminus \bigcup_{i=1}^{j-1} S_{i}\mid \Gamma_{G'_\varphi}(z_{j-n}) \cap I = \emptyset\,\} &\text{for } j \in \{n+1, \dots , 2n\}, \\
\{\,I\in\calI(G_\varphi)\setminus\bigcup_{i=1}^{j-1} S_{i} \mid \Gamma_{G'_\varphi}(y_{j-2n}) \cap I = \emptyset\,\}& \text{for } j \in \{ 2n+1 , \dots , 2n+m\}.
\end{cases} \\
S_0&:=\calI(G_\varphi)\setminus {\textstyle \bigcup_{i=1}^{2n+m} S_{i}}.
\end{align*}
\end{definition}

For every $i \in [n]$, $\Gamma_{G'_\varphi}(w_i)=\{v_i,\vbar_i\}$, so for any independent set $I\in S_i$, both
$v_i,\vbar_i\notin I$. Similarly for every $i\in [n]$ and every $I\in S_{n+i}$, both $u_i,\ubar_i\notin I$. 
Additionally, for every $j \in [m]$, $S_{2n+j}$ does not contain independent sets of $G_\varphi$, which intersect with 
the neighbourhood $\Gamma_{G'_\varphi}(y_j)=\{u_i\mid x_i \text{ is a literal in } c_j\}\cup\{\ubar_i\mid \xbar_i \text{ 
is a literal in } c_j\}$. Consequently, $S_{0}$ contains any independent set~$I$ in $G_\varphi$, such that, for all 
$i\in[n]$, at least one of  $u_i$, $\ubar_i$ and at least one of $v_i$, $\vbar_i$ are in $I$ and, furthermore, for all 
$j\in[m]$, $\Gamma_{G'_\varphi}(y_j)\cap I\neq \emptyset $.

\smallskip
The following shows that the independent sets of every partition except~$S_0$, cancel out when counting modulo~$p$.

\begin{lemma}\label{lem:wbis_gadget_partitioning}
Let $\varphi$ be a Boolean formula in conjunctive normal form with $n$ variables and $m$ clauses and let $G_\varphi = \bipG$ be the associated bipartite gadget graph from Definition~\ref{def:gadget} as well as $\{S_j\}_{j=0}^{2n+m}$ the partition of $\calI(G_\varphi)$ as defined in Definition~\ref{def:wbis_gadget_partitioning}.
Then, for every $j\in[2n+m]$ 
\[
\sum_{I\in S_j}\lweight^{|\lpart\cap I|}\rweight^{|\rpart\cap I|}\equiv 0\pmod p.
\]
\end{lemma}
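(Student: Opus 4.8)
The plan is to treat each block of the partition separately and, in each case, factor the sum over $I \in S_j$ as a product of two independent contributions: one coming from a copy $B^j$ of the gadget graph $B$ from Lemma~\ref{lem:bip-gadget}, and one coming from the rest of the graph. The key structural observation is that the definition of $S_j$ forces the distinguished vertex of $B^j$ that was identified with $w_j$ (resp. $z_{j-n}$, $y_{j-2n}$) to lie outside every $I \in S_j$: indeed, for $j \in \{2,\dots,n\}$ membership in $S_j$ requires $\Gamma_{G'_\varphi}(w_j) \cap I = \emptyset$, and since the distinguished vertex $u_L^j = w_j$ has neighbours both inside $G'_\varphi$ (through $\Gamma_{G'_\varphi}(w_j)$, which is forbidden) — more precisely one must argue that the \emph{only} way for $I$ to be in $S_j$ but not in any earlier $S_i$ while still containing $w_j$ is impossible, or alternatively that conditioning on $w_j \notin I$ is exactly what the block encodes. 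Let me restate this more carefully: for $j \in \{2,\dots,n\}$, $I \in S_j$ means $I$ avoids $\Gamma_{G'_\varphi}(w_j) = \{v_j, \bar v_j\}$; but $v_j, \bar v_j$ are precisely the neighbours of $w_j$ in $G'_\varphi$, and $w_j$ is also the distinguished vertex of $B^j$. So the independent sets in $S_j$ split: the trace on $V(B^j) - w_j$ ranges over $\calI(B^j - w_j)$ freely (subject to $w_j \notin I$), and independently the trace on $V(G_\varphi) \setminus (V(B^j) - w_j)$ ranges over all independent sets of that graph avoiding $\Gamma_{G'_\varphi}(w_j)$ that do not already lie in $S_1 \cup \dots \cup S_{j-1}$.

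The point is that this factorisation isolates a factor of the form $\wISet{B - v}$ for a distinguished vertex $v$, which by Lemma~\ref{lem:bip-gadget}(2) or (3) is $\not\equiv 0 \pmod p$ — wait, that is the wrong direction. Let me reconsider: I actually want a factor that vanishes. The correct factorisation is: $I \in S_j$ contributes $\lweight^{|\lpart \cap I|}\rweight^{|\rpart \cap I|}$, and I split $I = I_{B^j} \cup I_{\text{rest}}$ where the two parts overlap only possibly in the identified vertex. Since that identified vertex is forced to lie outside $I$ for $I \in S_j$ (because its $G'_\varphi$-neighbourhood is forbidden, OR by the structure of the earlier blocks), $I_{B^j}$ is an independent set of $B^j - w_j$ and $I_{\text{rest}}$ lives in the complement. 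Hence
\[
\sum_{I \in S_j} \lweight^{|\lpart \cap I|}\rweight^{|\rpart \cap I|} = \left(\sum_{I' \in \calI(B - v)} \lweight^{|\lpart \cap I'|}\rweight^{|\rpart \cap I'|}\right) \cdot \left(\sum_{I'' } \lweight^{|\lpart \cap I''|}\rweight^{|\rpart \cap I''|}\right),
\]
and the first factor is $\wISet{B - v}$. But Lemma~\ref{lem:bip-gadget} tells us $\wISet{B - v} \not\equiv 0$, so this is not yet zero. The resolution must be that the relevant factor is instead $\wISet{B}$ (the full gadget, because the identified vertex is \emph{not} forced out but rather the gadget contributes fully), and Lemma~\ref{lem:bip-gadget}(1) gives $\wISet{B} \equiv 0 \pmod p$. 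Concretely: for $I$ ranging over $S_j$, whether or not the identified vertex $w_j$ is in $I$, the copy $B^j$ contributes a free independent set, and the sum over that copy is exactly $\wISet{B}$, which is $\equiv 0 \pmod p$; the remaining part of $G_\varphi$ contributes some fixed value, and the product is $\equiv 0$. The subtlety distinguishing the three cases ($j \le n$, $n < j \le 2n$, $2n < j \le 2n+m$) is only which copy $B^j$ is attached via which distinguished vertex and along which of $w_j$, $z_{j-n}$, $y_{j-2n}$; in each case a single copy of $B$ factors out cleanly and the $\wISet{B} \equiv 0$ property from Lemma~\ref{lem:bip-gadget} finishes the argument.

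The main obstacle will be making the factorisation rigorous in the presence of the recursive definition of the blocks: $S_j$ is defined by removing $S_1 \cup \dots \cup S_{j-1}$ first, so I must verify that the condition "$\Gamma_{G'_\varphi}(w_j) \cap I = \emptyset$ and $I \notin S_1 \cup \dots \cup S_{j-1}$" is compatible with $I$'s trace on $V(B^j)$ being an arbitrary independent set of $B^j$ — i.e., that the earlier blocks $S_1, \dots, S_{j-1}$ are defined purely in terms of vertices \emph{outside} $B^j$ (which they are, since they only constrain $v_i, \bar v_i$ and neighbourhoods of $w_i, z_i, y_k$ for earlier indices, none of which touch the interior of $B^j$). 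Once that independence is established, the product formula follows from the standard fact that for a graph that is the union of two subgraphs sharing at most a single vertex $v$, the weighted independent-set sum factors after conditioning on $v$'s membership, and the two conditional sums recombine to give the product of the full weighted sums. I would carry this out by: (i) fixing $j$ and the corresponding copy $B^j$ and identified vertex; (ii) observing the earlier blocks do not constrain $V(B^j)$'s interior; (iii) writing $I = I_1 \sqcup I_2$ with $I_1 = I \cap V(B^j)$ and $I_2 = I \setminus V(B^j)$ (adjusting for the shared vertex), noting the weight is multiplicative; (iv) summing over $I_1 \in \calI(B^j)$ to get the factor $\wISet{B} \equiv 0 \pmod p$ by Lemma~\ref{lem:bip-gadget}(1); (v) concluding the whole sum is $\equiv 0 \pmod p$, uniformly over the three ranges of $j$.
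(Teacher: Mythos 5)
Your proposal is correct and takes essentially the same approach as the paper: the key observation is that for every $I \in S_j$ the $S_j$-condition forbids all neighbours of the identified vertex $x^j$ outside $B^j$ from lying in $I$, so as $I$ runs over $S_j$ with $I \setminus V(B^j)$ held fixed the trace $I \cap V(B^j)$ runs over all of $\calI(B^j)$, and the resulting factor is $\wISet{B^j}\equiv 0 \pmod p$ by Lemma~\ref{lem:bip-gadget}(1). The paper organises exactly this factorisation via the equivalence relation $I \sim_j I'$ iff $I \setminus V(B^j) = I' \setminus V(B^j)$, which is the split $I = I_1 \sqcup I_2$ that you describe. One small caution for the write-up: the ``standard fact'' you invoke --- that the weighted independent-set sum over a one-vertex amalgam of two graphs is the product of the two full weighted sums --- is false in general; the product form holds here only because the $S_j$-condition, as you correctly identified, already excludes every neighbour of $x^j$ on the $G'_\varphi$ side, removing all interaction across the shared vertex, so the argument should lean on that specific structural fact rather than on the general amalgam claim.
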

\begin{proof}
	We fix a $j\in[2n+m]$ and commence by defining the equivalence relation $\sim_j$ on $S_j$. For any two independent sets $I,I'\in S_j$ we have $I \sim_j I'$ if $I\setminus V(B^j)=I'\setminus V(B^j)$.
	That is, $I$ and $I'$ are equivalent if and only if they only differ in the vertices of $B^j$.
	We denote the $\sim_j$-equivalence class of $I$ by $\eqclass{I}_j$. Thus, $(\eqclass{I}_j)_{I\in S_j}$ is a partition of $S_j$.
	
	Let $I_1,\dots,I_{t_j}$ be representatives from each $\sim_j$-equivalence class. We obtain
	\begin{equation*}
		\sum_{I\in S_j} \lweight^{|\lpart\cap I|} \rweight^{|\rpart \cap I|}=\sum_{s=1}^{t_j} \sum_{I\in \eqclass{I_s}_j} \lweight^{|\lpart \cap I|} \rweight^{|\rpart \cap I|}.
	\end{equation*}
	Therefore, it suffices to establish $\sum_{I\in \eqclass{I_s}_j} \lweight^{|\lpart \cap I|} \rweight^{|\rpart \cap I|} \equiv 0 \pmod p$  for every $s \in [t_j]$.

	Let $I_s$ be one of the representatives $I_1,\dots,I_{t_j}$ with its associated equivalence class $\eqclass{I_s}_j$. We continue by studying the set $I_B = I_s \setminus V(B^j)$ of common vertices among the independent sets of $\eqclass{I_s}_j$. Therefore, every independent set $I \in \eqclass{I_s}_j$ contains the vertices in $I_B$. On the other hand, let $I'_B = \{ I \setminus I_B \mid I \in \eqclass{I_s}_j \}$ be the set of vertices in an independent set $I \in \eqclass{I_s}_j$, which are not in $I_B$. Since $B^j$ is a bipartite graph and the assignment of vertices to their relative component is conserved in the construction of $G_\varphi$ we obtain
	\begin{align}\label{eq:is-into-eq-classes}
		\sum_{I\in \eqclass{I_s}_j} \lweight^{|\lpart\cap I|} \rweight^{|\rpart \cap I|} & = \lweight^{|\lpart\cap I_B|} \rweight^{|\rpart\cap I_B|}
		\sum_{I \in I'_B } \lweight^{|\lpart(B^j)\cap I |} \rweight^{|\rpart(B^j)\cap I |} .
	\end{align}
	
	Let $x^j$ be the vertex of $B^j$ that is identified with one of the vertices of $G'_\varphi$ for the 
	construction of $G_\varphi$. Therefore, $x^j=u_\L^i$ if $j<n$, and $x^j=v_\R^i$ otherwise.
	By Definition~\ref{def:wbis_gadget_partitioning} we observe that for any $I\in S_j$ no neighbour of $x_j$ outside 
	$B^j$ is in $I$. 

	Hence, any vertex in $B^j$ is eligible for a construction of an independent set in $\eqclass{I}_j$. And vice versa, any independent set $I' \in \calI (B^j)$ yields an independent set in $\eqclass{I_s}_j$ by taking the union of $I'$ with $I_B$. We deduce that $I'_B = \calI (B^j)$. Therefore, the sum in the right hand side of \eqref{eq:is-into-eq-classes} is $\wISet{B^j}$. For this we recall that each $B^j$ was chosen utilizing Lemma~\ref{lem:bip-gadget}, whose property~$\mathit{1}$ yields $\wISet{B^j} \equiv 0 \pmod p$. We deduce the desired
	\begin{align*}
		\sum_{I\in \eqclass{I_s}_j} \lweight^{|\lpart\cap I|} \rweight^{|\rpart \cap I|}&= \lweight^{|\lpart\cap I_B|} \rweight^{|\rpart\cap I_B|} \wISet{B^j} \equiv 0 \pmod p ,
	\end{align*} 
	which proves the lemma.
\end{proof}

We have completed our setup and we are ready to prove the main result of this section.

{\renewcommand{\thetheorem}{\ref{thm:wbis-hardness}}
\begin{theorem}
Let $p$ be a prime and let $\lweight$, $\rweight \in\zp$. If $\lweight \equiv 0 \pmod p$ or $\rweight \equiv 0 \pmod p$ then \pbislr{\lweight}{\rweight} is computable in polynomial time. Otherwise, \pbislr{\lweight}{\rweight} is \numpp{}-complete.
\end{theorem}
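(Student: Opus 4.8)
The plan is to dispatch the two polynomial-time cases by Proposition~\ref{prop:BIS_easy_cases}, and to prove \numpp-completeness in the remaining regime $\lweight,\rweight\not\equiv 0\pmod p$. Membership of \pbislr{\lweight}{\rweight} in \numpp{} is immediate: fixing integer representatives $\lweight,\rweight\in\{0,\dots,p-1\}$, the integer $\wISet{G}$ counts the pairs $(I,\chi)$ in which $I\in\calI(G)$ and $\chi$ assigns one of $\lweight$ labels to each vertex of $\lpart\cap I$ and one of $\rweight$ labels to each vertex of $\rpart\cap I$; this is a \shp{} function, so its residue modulo~$p$ lies in \numpp{}. For hardness I would exhibit a polynomial-time Turing reduction from \numpsat{}, which is \numpp-complete by Theorem~\ref{thm:sat-hardness}.

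Given a CNF formula $\varphi$ with $n$ variables and $m$ clauses, I would build the graph $G_\varphi$ of Definition~\ref{def:gadget} out of the fixed gadget $(B,u_\L,v_\R)$ furnished by Lemma~\ref{lem:bip-gadget}. Since $B$ depends only on $p,\lweight,\rweight$ and has $4(p-1)$ vertices, $G_\varphi$ is constructible in polynomial time. Expanding $\wISet{G_\varphi}$ over the partition $\{S_j\}_{j=0}^{2n+m}$ of Definition~\ref{def:wbis_gadget_partitioning} and invoking Lemma~\ref{lem:wbis_gadget_partitioning} to kill the parts $S_1,\dots,S_{2n+m}$ leaves
\[
\wISet{G_\varphi}\equiv\sum_{I\in S_0}\lweight^{|\lpart\cap I|}\rweight^{|\rpart\cap I|}\pmod p ,
\]
so it remains to evaluate this residual sum over $S_0$.

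The heart of the argument is a bijection from $S_0$ onto $\sat(\varphi)$ times the independent-set collections of the attached gadget copies, along which the weight of a member of $S_0$ factorises. First I would show that for $I\in S_0$ and each variable $i$ the trace of $I$ on the variable cycle $u_iv_iw_i\vbar_i\ubar_iz_iu_i$ is forced to be exactly $\{u_i,\vbar_i\}$ or $\{\ubar_i,v_i\}$, with $w_i,z_i\notin I$: the conditions defining $S_0$ put at least one of $u_i,\ubar_i$ and at least one of $v_i,\vbar_i$ into $I$, and a short case check on the six cycle edges shows these are the only independent configurations and expels $w_i$ and $z_i$. Reading off $\alpha_I(x_i)=1$ iff $u_i\in I$ produces an assignment, and the remaining condition defining $S_0$, namely $\Gamma_{G'_\varphi}(y_j)\cap I\neq\varnothing$, says precisely that $\alpha_I$ satisfies clause $c_j$ and also forces $y_j\notin I$. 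Conversely, each $\alpha\in\sat(\varphi)$ pins down the $G'_\varphi$-part of such an $I$ entirely, and one may independently choose, inside every copy $B^{j'}$, an arbitrary independent set avoiding its single attachment vertex ($w_i$, $z_i$, or $y_j$); this is the desired bijection. Because those attachment vertices lie outside $I$, the weight of $I$ factorises over $G'_\varphi$ and the copies $B^{j'}$: each variable cycle contributes $\lweight\rweight$ (one vertex on each side, regardless of the truth value), each of the $n$ copies attached at some $w_i$ contributes a factor $\wISet{B-u_\L}$, and each of the remaining $n+m$ copies attached at some $z_i$ or $y_j$ contributes a factor $\wISet{B-v_\R}$. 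Hence summing over the fibre above a fixed $\alpha$ yields the constant $c:=(\lweight\rweight)^n\,\wISet{B-u_\L}^{\,n}\,\wISet{B-v_\R}^{\,n+m}$, independent of $\alpha$, and therefore $\wISet{G_\varphi}\equiv c\cdot|\sat(\varphi)|\pmod p$.

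Finally, by parts~$2$ and~$3$ of Lemma~\ref{lem:bip-gadget} together with $\lweight,\rweight\not\equiv 0\pmod p$, the constant $c$ is a nonzero element of the field~\zp{} and hence a unit, and $c^{-1}\bmod p$ is computable in polynomial time from $p,\lweight,\rweight,n,m$. So a single call to the \pbislr{\lweight}{\rweight} oracle on $G_\varphi$ returns $\wISet{G_\varphi}\bmod p$, and $|\sat(\varphi)|\equiv c^{-1}\wISet{G_\varphi}\pmod p$, which finishes the reduction and the proof. I expect the only genuinely delicate step to be the forced-configuration analysis on the variable cycles and the ensuing verification that $I\mapsto\alpha_I$ is a weight-preserving bijection onto $\sat(\varphi)$ times the gadget choices; the rest is bookkeeping resting on Lemmas~\ref{lem:split-sum-part}--\ref{lem:wbis_gadget_partitioning}.
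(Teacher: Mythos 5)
Your proposal is correct and follows the paper's own argument essentially step for step: the tractable cases via Proposition~\ref{prop:BIS_easy_cases}, the Turing reduction from \numpsat{} using $G_\varphi$ of Definition~\ref{def:gadget}, cancellation of $S_1,\dots,S_{2n+m}$ by Lemma~\ref{lem:wbis_gadget_partitioning}, the forced trace $\{u_i,\vbar_i\}$ or $\{\ubar_i,v_i\}$ on each variable cycle, the factorisation of the residual sum into $(\lweight\rweight)^n\,\wISet{B-u_\L}^{\,n}\,\wISet{B-v_\R}^{\,n+m}$ per satisfying assignment, and the invocation of properties~2 and~3 of Lemma~\ref{lem:bip-gadget} to show this constant is a unit in~$\zp$. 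The only cosmetic difference is that the paper packages the fibre-over-$\alpha$ bookkeeping as an equivalence relation $\sim_0$ on $S_0$ and a bijection from its classes to $\sat(\varphi)$, whereas you describe a direct bijection from $S_0$ onto $\sat(\varphi)$ times the gadget choices; these are the same decomposition. You also spell out membership in \numpp{}, which the paper leaves implicit; that is a fine (and slightly more complete) addition.
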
}
\begin{proof}
The first statement is a direct consequence of Proposition~\ref{prop:BIS_easy_cases}. Thus, let $\lweight, \rweight$ be in~$\zsp$.
We are going to show hardness for $\pbislr{\lweight}{\rweight}$ by establishing a Turing reduction from~\numpsat{}, which is known to be \numpp{}-complete by Simon's Theorem~\ref{thm:sat-hardness}.

Let $\varphi$ be a Boolean formula in conjunctive normal form with $n$ variables and $m$ clauses.
We show that the constructed bipartite graph $G_\varphi = \bipG$ from Definition~\ref{def:gadget} satisfies $\wISet{G_\varphi}\equiv K |\sat(\varphi)|\pmod p$ for some 
$K\not\equiv0\pmod p$. The exact value of $K$ depends on the values of the weights corresponding to the cases in the proof of Lemma~\ref{lem:bip-gadget}, but is not of interest for our argument.

Based on the partition~$\{S_j\}_{j=0}^{2n+m}$ given by Definition~\ref{def:wbis_gadget_partitioning}, we obtain
\begin{align}
\wISet{G_\varphi}&=\sum_{j=0}^{2n+m}\sum_{I\in S_j} \lweight ^{|\lpart \cap I|} \rweight^{|\rpart \cap I|}. \label{eq:is-partitioned}
\intertext{
By Lemma~\ref{lem:wbis_gadget_partitioning} every term of~\eqref{eq:is-partitioned} is equivalent to $0$ in $\zp$ \emph{except} the one regarding $S_0$.
This yields
}
\wISet{G_\varphi} &\equiv \sum_{I\in S_{0}} \lweight^{|\lpart\cap I|}\rweight^{|\rpart\cap I|}. \label{eq:sum-only-0}
\end{align}

As in the proof of Lemma~\ref{lem:wbis_gadget_partitioning} we are going to use an equivalence relation $\sim_0$ along with the associated equivalence classes $\eqclass{\cdot}_0$.
We define $U:=\{u_i,\ubar_i,v_i,\vbar_i\mid i\in[n]\}$ and the equivalence relation for two independent sets $I,I'\in S_{0}$ by $I\sim_0 I'$ if $I\cap U= I'\cap U$.
That is, $I$ and $I'$ have the same assignments of vertices in $U$.
Let $I_1,\dots,I_{t}$ be representatives for the $\sim_0$-equivalence classes.
We obtain
\begin{align}\label{eq:wbis_partitioning_S0}
\sum_{I\in S_{0}} \lweight^{|\lpart\cap I|}\rweight^{|\rpart\cap I|}& = \sum_{s=1}^{t}\sum_{I\in \eqclass{I_s}_0} \lweight^{|\lpart\cap I|} \rweight^{|\rpart\cap I|}.
\end{align}

Let $s \in [t]$ and $I \in \eqclass{I_s}_0$.
Since $I \in S_0$, at least one of $u_i,\ubar_i$ and at least one of $v_i,\vbar_i$ are in $I$.
We recall that for each $i\in[n]$ both $(u_i,v_i)$ and $(\ubar_i,\vbar_i)$ are edges in $G_\varphi$.
Therefore, either the pair $\{u_i,\vbar_i\} \subseteq I$ or the pair $\{\ubar_i,v_i \} \subseteq I$ and consequently, for each $i \in [n]$ neither $w_i \,(= u_\L^i)$ nor $z_i \,(= v_R^{n+i})$ can be in $I$.
Furthermore, for each $j\in[m]$ there exists at least one vertex in $\Gamma_{G'_\varphi}(y_j) \cap I$ by the definition of $S_0$.
Hence, for each $j \in [m]$ the vertex $y_j=v_R^{2n+j}$ cannot be in $I$.
We deduce that $I$ contains exactly $n$ vertices from $V_L(G'_\varphi)$ and exactly $n$ vertices from $V_R(G'_\varphi)$.

Each graph $B^j$ is a copy of the graph $B$ and the vertices $u_L^j$ for $j\leq n$ and $v_R^j$ for $j > n$, respectively, are cut vertices in $G_\varphi$.
There are $n$ copies of $B$ with $u_L$ identified with a vertex in $G'_\varphi$ and $n+m$ copies of $B$ with $v_R$ identified with a vertex in $G'_\varphi$.
Clearly, for arbitrary graphs $G_1$ and $G_2$ with disjoint vertex sets it holds $\wISet{G_1\cup G_2}=\wISet{G_1}\,\wISet{G_2}$.
This yields for every $s \in [t]$
\begin{align*}
\sum_{I\in \seqclass{I_s}_0} \lweight^{| \lpart \cap I|} \rweight^{|\rpart \cap I|}=
(\lweight \rweight)^n & \left(\sum_{I\in\calI(B-u_\L)} \lweight^{|\lpart(B-u_\L)\cap I|} \rweight^{|\rpart(B-u_\L)\cap I|}\right)^n \nonumber \\
&\left(\sum_{I\in\calI(B-v_\R)} \lweight^{|\lpart(B-v_\R)\cap I|} \rweight^{|\rpart(B-v_\R)\cap I|}\right)^{n+m}.
\end{align*}
Since $B$, $B-u_\L$ and $B - v_\R$ are bipartite graphs we obtain
\begin{align*}
\sum_{I\in \seqclass{I_s}_0}\lweight^{|\lpart\cap I|} \rweight^{|\rpart\cap I|}=(\lweight \rweight)^n &\left(\wISet{B-u_\L}\right)^n \left(\wISet{B-v_\R}\right)^{n+m}.
\end{align*}
We recall that $B$ was chosen due to Lemma~\ref{lem:bip-gadget}, whose Property~$\mathit{2}$ and Property~$\mathit{3}$ assure
\begin{align}\label{eq:equiv-class-0-is-K}
K := \sum_{I\in \seqclass{I_s}_0}\lweight^{|\lpart\cap I|} \rweight^{|\rpart\cap I|}& \;\not\equiv  0 \pmod p.
\end{align}

Combining \eqref{eq:equiv-class-0-is-K} and \eqref{eq:wbis_partitioning_S0} in conjunction with \eqref{eq:sum-only-0} we derive
\begin{align}\label{eq:sum-of-K}
\wISet{G_\varphi} \;\equiv\; tK \pmod p.
\end{align}

We will conclude the proof by constructing a bijection between the $\sim_0$-equivalence classes and the satisfying assignments of $\varphi$. In this manner we will obtain $t = |\sat(\varphi)\:\!|$.

For every equivalence class $\eqclass{I_s}_0$ with $s \in [t]$ we denote the set of common vertices in $\eqclass{I_s}_0$ by $U_s =\bigcap_{I\in\seqclass{I_s}_0}I$.
Due to the definition of $\sim_0$ for every $i \in [n]$ either the pair $\{u_i, \vbar_i\}$ or the pair $\{\ubar_i, v_i\}$ is shared by all elements of $\eqclass{I_s}_0$.
Hence, $U_s$ contains exactly $n$ such pairs of vertices.

Given an equivalence class~$\eqclass{I_s}_0$ utilizing $U_s$ we obtain an assignment~$a_s$ for~$\varphi$ by assigning for all $i \in [n]$
\[
x_i \mapsto 
\begin{cases}
\mathrm{true }, &\mathrm{if } \;\{u_i, \vbar_i\} \subseteq U_s; \\
\mathrm{false }, &\mathrm{if } \;\{\ubar_i,v_i\} \subseteq U_s.
\end{cases}
\]
We observe that each $\eqclass{I_s}_0$ yields a unique assignment $a_s$.
In order to show that it is a satisfying assignment it suffices to show that each clause of $\varphi$ is satisfied when we apply $a_s$.

Since $I_s \in S_0$, for each clause $c_j$ of $\varphi$ there exists at least one vertex $u \in \Gamma_{G'_\varphi}(y_j)$ with $u \in I_s$.
Due to the construction of $G_\varphi$ this vertex $u$ is either $u_i$, if $x_i$ appears non-negated in the clause $c_j$, or $\ubar_i$, if $x_i$ appears negated in the clause $c_j$.
Hence, $a_s$ satisfies $c_j$ at least once.

Vice versa, we now argue that every satisfying assignment can be obtained from an equivalence class $\eqclass{I_s}_0$ for some $s\in[t]$.
Let $a$ be a satisfying assignment for $\varphi$, this assignment gives rise to the set 
\[
U_a= \bigcup_{i \in [n]} \{u_i,\vbar_i \mid \textrm{if $x_i$ is set ``true'' by } a\}\cup\{\ubar_i,v_i\mid \textrm{if $x_i$ is set ``false'' by } a\}
\]
which is in $S_0$ and thus for $s$ such that $\eqclass{U_a}_0=\eqclass{I_s}_0$ it holds $a_s=a$.
		
We deduce that there are $t$ satisfying assignments of $\varphi$ and by \eqref{eq:sum-of-K}
\[
\wISet{G_\varphi}\equiv K|\sat(\varphi)|\pmod p,
\]
which establishes the theorem.
\end{proof}

\section{Polynomial time tractable classes of graphs}
\label{sec:poly}
We identify the classes of graphs~$H$ for which \phcol{} can be solved in polynomial time. When counting graph 
homomorphisms modulo a prime~$p$, the automorphisms of order $p$ of a target graph~$H$ help us identify groups of 
homomorphisms that cancel out. More 
specifically assume that the target graph~$H$ has an automorphism~$\varrho$ of order~$p$. For any 
homomorphism~$\sigma$ from the input graph~$G$ to~$H$, $\sigma\circ\varrho$ is also a homomorphism from~$G$ to~$H$. 
This shows that the sets which contain the homomorphisms~$\sigma\circ\varrho^{(j)}$, for 
$j\in[p]$, have cardinality a multiple of~$p$, and thus, cancel out. This intuition is captured by the theorem of Faben 
and Jerrum~\cite[Theorem~3.4]{FJ13}. Before we formally state their theorem, we need the following definition.



\begin{definition}
 Let $H$ be a graph and $\varrho$ an automorphism of $H$. $H^\varrho$ is the subgraph of $H$ induced by the fixed 
points of $\varrho$.
\end{definition}




\begin{theorem}[Faben and Jerrum]\label{thm:involution-reduction}
Let $G$, $H$ be graphs, $p$ a prime and $\varrho$ an automorphism of $H$ of order~$p$. Then $|\Homs{G}{H}|\equiv|\Homs{G}{H^\varrho}|\pmod p$.
\end{theorem}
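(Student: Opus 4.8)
The plan is to exhibit an explicit group action whose orbits have size exactly $p$ on the set $\Homs{G}{H}\setminus\Homs{G}{H^\varrho}$, while fixing $\Homs{G}{H^\varrho}$ pointwise, so that $|\Homs{G}{H}| \equiv |\Homs{G}{H^\varrho}| \pmod p$. The natural action is post-composition with the automorphism $\varrho$: for $\sigma \in \Homs{G}{H}$, define $\varrho \cdot \sigma := \varrho \circ \sigma$. Since $\varrho$ is an automorphism of $H$ and $\sigma$ is a homomorphism, $\varrho \circ \sigma$ is again a homomorphism from $G$ to $H$, so this is a well-defined action of the cyclic group $\langle \varrho \rangle \cong \Z_p$ on $\Homs{G}{H}$. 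First I would check that the fixed points of this action are exactly the homomorphisms whose image lies in the fixpoint set of $\varrho$: $\varrho \circ \sigma = \sigma$ holds if and only if $\varrho(\sigma(v)) = \sigma(v)$ for every $v \in V(G)$, i.e. if and only if every vertex in the image of $\sigma$ is a fixpoint of $\varrho$. Because $H^\varrho$ is the \emph{induced} subgraph on the fixpoints, a homomorphism $G \to H$ with image contained in the fixpoint set is the same thing as a homomorphism $G \to H^\varrho$; this gives the identification $(\Homs{G}{H})^{\langle\varrho\rangle} = \Homs{G}{H^\varrho}$.

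Next I would invoke the orbit–stabiliser theorem: every orbit of the $\Z_p$-action has size dividing $p$, hence size $1$ or $p$ since $p$ is prime. The orbits of size $1$ are precisely the fixed points, i.e. the elements of $\Homs{G}{H^\varrho}$. Partitioning $\Homs{G}{H}$ into orbits therefore yields
\[
|\Homs{G}{H}| = |\Homs{G}{H^\varrho}| + p \cdot (\text{number of orbits of size } p),
\]
from which $|\Homs{G}{H}| \equiv |\Homs{G}{H^\varrho}| \pmod p$ follows immediately. The order of $\varrho$ being exactly $p$ is used only to know that $\langle\varrho\rangle$ has order $p$ so that Lagrange/orbit–stabiliser forces orbit sizes into $\{1,p\}$; no finer structure is needed.

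This argument is essentially routine — it is the standard "cancellation via a free-ish group action" trick — so there is no serious obstacle. The only point requiring a little care is the reduction "image in fixpoint set $\iff$ homomorphism to $H^\varrho$", where one must use that $H^\varrho$ is the induced subgraph (so that no edges are lost): if $(v_1,v_2) \in E(G)$ then $(\sigma(v_1),\sigma(v_2)) \in E(H)$ and both endpoints are fixpoints, hence this edge is present in $H^\varrho$ by the induced-subgraph property. Conversely any homomorphism to $H^\varrho$ composed with the inclusion $H^\varrho \hookrightarrow H$ is a homomorphism to $H$ fixed by the action. With this observation the proof is complete.
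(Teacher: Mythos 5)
Your proof is correct and takes essentially the same approach the paper outlines in the discussion immediately preceding the theorem statement (the paper itself only cites the result to Faben and Jerrum~\cite{FJ13} rather than reproving it). The paper's informal sketch writes the action as $\sigma\circ\varrho$, which is a typographical slip for $\varrho\circ\sigma$ since $\varrho$ acts on $V(H)$; your version uses the correct composition, carefully identifies the fixed points of the $\langle\varrho\rangle$-action with $\Homs{G}{H^\varrho}$ via the induced-subgraph property, and closes the argument with orbit--stabiliser, which is exactly the intended argument.
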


We can repeat the above reduction of $H$ recursively in the following way.

\begin{definition}
$H \Rightarrow_p H'$ if there is an automorphism~$\varrho$ of~$H$ of order~$p$ such that $H^\varrho=H'$.
We will also write $H \Rightarrow_p^* H'$ if either $H\isoto H'$ or, for some positive integer~$k$,  there are graphs 
$H_1, \dots, H_k$ such that $H \isoto H_1$, $H_1 \Rightarrow_p \cdots \Rightarrow_p H_k$, and
$H_k  \isoto H'$.
\end{definition}

Faben and Jerrum~\cite[Theorem~3.7]{FJ13} show for any choice of intermediate homomorphisms of order~$p$, the reduction 
$H \Rightarrow_p^* H'$ will end up in a unique graph up to isomorphism.

\begin{theorem}[Faben and Jerrum]\label{thm:involution-unique}
Given a graph~$H$ and a prime~$p$, there is (up to isomorphism) exactly one graph~$H^{*p}$ that has no automorphism of 
order~$p$ and $H \Rightarrow_p^* H^{*p}\!$.
\end{theorem}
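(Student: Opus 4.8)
The statement is a confluence-and-termination result, so I would split it into an existence part and a uniqueness part.

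\emph{Existence.} I would induct on $|V(H)|$. If $H$ has no automorphism of order~$p$, set $H^{*p}:=H$. Otherwise fix such an automorphism $\varrho$; since $\varrho\neq\mathrm{id}$, its fixed-point set is a proper subset of $V(H)$, so $H^\varrho$ has strictly fewer vertices and $H\Rightarrow_p H^\varrho$. Applying the induction hypothesis to $H^\varrho$ yields a graph $(H^\varrho)^{*p}$ with no automorphism of order~$p$ such that $H^\varrho\Rightarrow_p^\ast (H^\varrho)^{*p}$; concatenating reduction chains gives $H\Rightarrow_p^\ast (H^\varrho)^{*p}$, so $H^{*p}:=(H^\varrho)^{*p}$ works. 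Termination is automatic because each $\Rightarrow_p$-step strictly decreases the number of vertices.

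\emph{Uniqueness --- setup.} Here the main tool I would invoke is Lov\'asz's identity connecting homomorphism counts to injective homomorphism counts: for all graphs $G,H$,
\[
 |\Homs{G}{H}|=\sum_{\theta}\,|\InjHoms{G/\theta}{H}|,
\]
where $\theta$ ranges over the partitions of $V(G)$ having no edge inside a block and $G/\theta$ is the associated simple quotient graph. Ordering graphs by number of vertices, the matrix governing this transformation is triangular with $1$'s on the diagonal (the coefficient of $G$ itself, coming from the partition into singletons, is $1$), hence invertible over $\Z$ and a fortiori over $\Z_p$. Consequently, for two graphs $F_1,F_2$ the statement ``$|\Homs{G}{F_1}|\equiv|\Homs{G}{F_2}|\pmod p$ for all $G$'' is equivalent to ``$|\InjHoms{G}{F_1}|\equiv|\InjHoms{G}{F_2}|\pmod p$ for all $G$''. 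Now suppose $H\Rightarrow_p^\ast F_1$ and $H\Rightarrow_p^\ast F_2$ with $F_1,F_2$ both free of automorphisms of order~$p$. Iterating Theorem~\ref{thm:involution-reduction} along each reduction chain gives $|\Homs{G}{F_1}|\equiv|\Homs{G}{H}|\equiv|\Homs{G}{F_2}|\pmod p$ for every $G$, hence $|\InjHoms{G}{F_1}|\equiv|\InjHoms{G}{F_2}|\pmod p$ for every $G$.

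\emph{Uniqueness --- conclusion.} I would evaluate the last congruence at $G=F_1$. A bijective homomorphism of a graph to itself maps its edge set injectively, hence bijectively, into itself, so it reflects edges and is an automorphism; thus $|\InjHoms{F_1}{F_1}|=|\Aut(F_1)|$. Since $F_1$ has no automorphism of order~$p$, the contrapositive of Cauchy's Theorem~\ref{thm:Cauchy} gives $p\nmid|\Aut(F_1)|$, so $|\InjHoms{F_1}{F_2}|\not\equiv0\pmod p$; in particular there is an injective homomorphism $F_1\to F_2$, forcing $|V(F_1)|\le|V(F_2)|$, and by symmetry $|V(F_1)|=|V(F_2)|$. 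Any bijective homomorphism $F_1\to F_2$ then maps edges injectively to edges, so $|E(F_1)|\le|E(F_2)|$, and again by symmetry $|E(F_1)|=|E(F_2)|$; hence such a map carries $E(F_1)$ bijectively onto $E(F_2)$ and is an isomorphism. Therefore $|\InjHoms{F_1}{F_2}|$ equals the number of isomorphisms $F_1\to F_2$, which is nonzero modulo~$p$, and so $F_1\isoto F_2$.

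\emph{Main obstacle.} Existence and the chaining of Theorem~\ref{thm:involution-reduction} are essentially free; the real work is the uniqueness half, and within it two points need genuine care: (i) checking that the Lov\'asz transformation between $|\Hom(\cdot,H)|$ and $|\InjHom(\cdot,H)|$ is unit-triangular, which is precisely what survives reduction modulo~$p$ and lets us pass from hom-congruences to injective-hom-congruences; and (ii) the bookkeeping in the conclusion, where the vertex-count equality must be upgraded to an edge-count equality before a bijective homomorphism can be recognised as an isomorphism. Both of these hinge on the non-vanishing of $|\Aut(F_i)|$ modulo~$p$, which is exactly what the absence of order-$p$ automorphisms buys us via Cauchy's theorem.
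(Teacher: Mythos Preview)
Your argument is correct. Note, however, that the paper does not supply its own proof of this theorem: it is stated as a result of Faben and Jerrum~\cite[Theorem~3.7]{FJ13} and left without proof. That said, your uniqueness half is precisely the Lov\'asz-style reasoning the paper itself deploys later in Lemma~\ref{lem:Lovasz} and the explanation following it (inverting the unit-triangular passage from homomorphism counts to injective homomorphism counts, then using Cauchy's theorem to get $p\nmid|\Aut(F_i)|$), so your approach is fully aligned with the paper's toolbox and, by extension, with the Faben--Jerrum original.
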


The latter suggest the following definition.

\begin{definition}
 \label{defn:reduced-form}
 We call the unique (up to isomorphism) graph~$H^{*p}$, with $H \Rightarrow_p^* H^{*p}\!$, the \emph{order~$p$ reduced 
form} of~$H$.
\end{definition}

\begin{figure}[t]
\begin{center}
\includegraphics[]{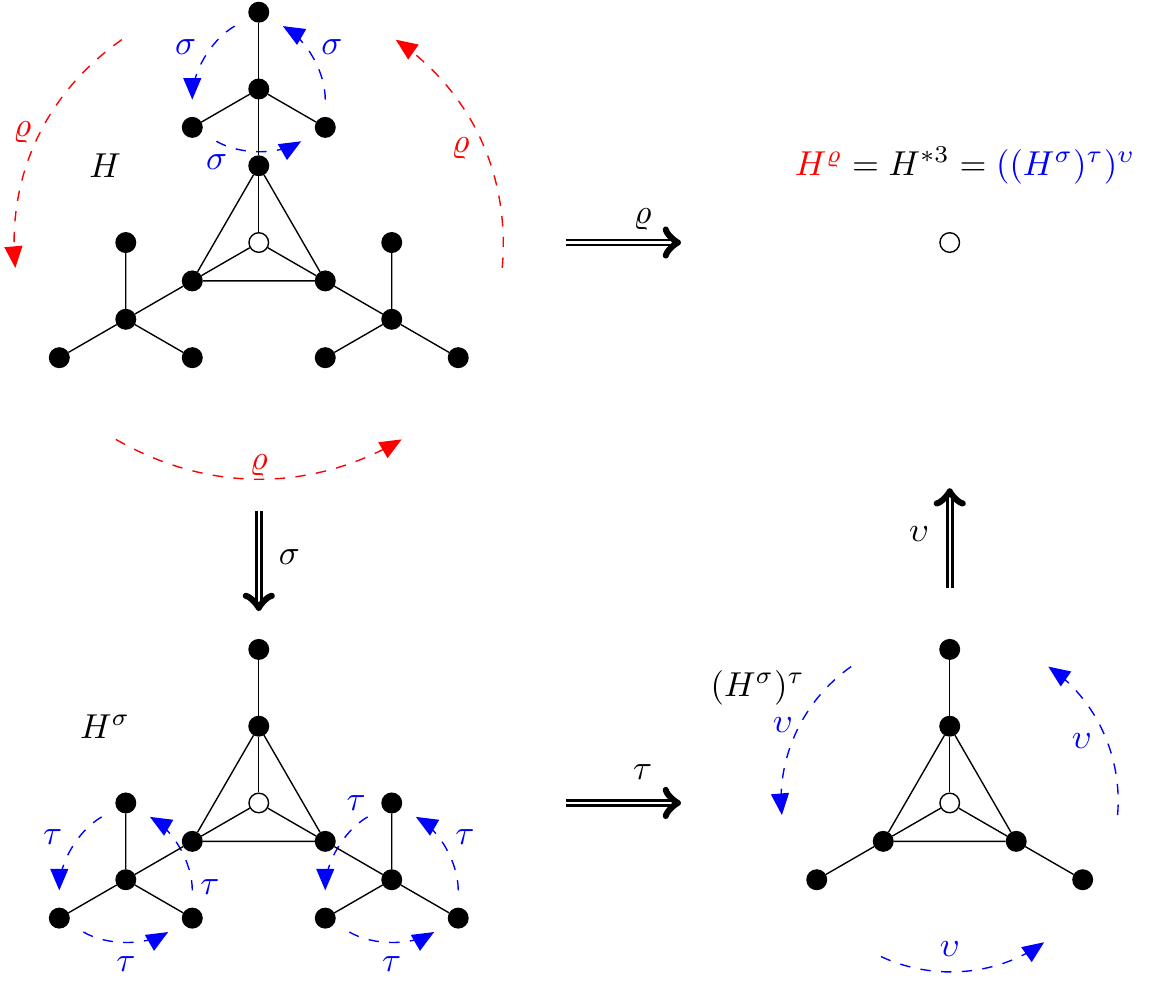}
\end{center}
\caption{An example of the order~3 reduced form~$H^{*3}$ of the graph $H$. Here we indicate two different ways of $H\Rightarrow_3^* H^{*3}$. The automorphism~$\varrho$ has order~3. It is indicated with red colour and $H^\varrho=H^{*3}$. $\sigma$, $\tau$ and $\upsilon$ each are automorphisms of order~3. These are indicated with blue colour and $((H^\sigma)^\tau)^\upsilon=H^{*3}$.}
\label{fig:reduced-form}
\end{figure}

Figure~\ref{fig:reduced-form} illustrates Theorem~\ref{thm:involution-unique} with an example of an order~3 reduced form of a graph.
Note that if $H$ has no loops the repeated application of the ``$\Rightarrow_p$'' operation does not introduce any 
loops.
 
\medskip
In order to compute the number of homomorphisms from~$G$ to~$H$ modulo~$p$, denoted by \phcol{}, it suffices to compute the number of homomorphisms from~$G$ to~$H^{*p}$ modulo~$p$.
To obtain the graphs for which \phcol{} is computed in polynomial time, we refer to the dichotomy theorem by Dyer and 
Greenhil~\cite[Theorem~1.1]{DG00}. 

\begin{theorem}[Dyer and Greenhil]\label{thm:dyer-greenhil}
Let $H$ be a graph that can contain loops.
If every component of $H$ is a complete bipartite graph with no loops or a complete graph with all loops present, 
then \nhcol{H} can be solved in polynomial time. Otherwise \nhcol{H} is $\shp$-complete.
\end{theorem}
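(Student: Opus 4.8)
The plan is to prove the $\fp$/$\shp$-completeness dichotomy in the two directions; membership in $\shp$ is immediate, since $|\Homs{G}{H}|$ is exactly the number of accepting runs of the nondeterministic machine that guesses a map $V(G)\to V(H)$ and checks the edge constraints, so the content is the tractable/hard split itself.

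For the tractable direction I would first reduce to connected targets: a homomorphism from a connected graph lands entirely in one connected component of $H$, so if $G$ has components $G_1,\dots,G_c$ then $|\Homs{G}{H}| = \prod_{i=1}^{c}\sum_{D}|\Homs{G_i}{D}|$, where $D$ ranges over the components of $H$. It therefore suffices to evaluate $|\Homs{G'}{D}|$ for connected $G'$ and each component $D$, which by hypothesis is a loopless complete bipartite graph $K_{a,b}$ or a reflexive complete graph on $n$ vertices. In the reflexive-clique case every function $V(G')\to V(D)$ is a homomorphism, so the count is $n^{|V(G')|}$. In the $K_{a,b}$ case a homomorphism exists only if $G'$ is bipartite; for connected bipartite $G'$ with at least one edge the bipartition $\{S,T\}$ is unique, so the count is $a^{|S|}b^{|T|}+a^{|T|}b^{|S|}$, with the obvious adjustments when $G'$ is a single vertex or has an odd cycle. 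All of these are computable in polynomial time, which settles the easy side.

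For hardness, the goal is to show that if some component of $H$ is neither a loopless complete bipartite graph nor a reflexive clique then $\nhcol{H}$ is $\shp$-complete. I would proceed in three stages. (i) Reduce to connected $H$: given a ``bad'' component $H_0$, query the oracle on disjoint unions of the input $G$ with varying numbers of copies of fixed auxiliary graphs; the answers form a linear system from which $|\Homs{G}{H_0}|$ is recovered by a Vandermonde/Lagrange interpolation argument. (ii) Reduce to a ``reduced'' target in which no two distinct vertices have the same neighbourhood, since merging two such vertices only multiplies the homomorphism count by powers of a fixed constant, and that rescaling can be undone by interpolating over edge-thickened inputs. (iii) Establish the combinatorial fact that a connected reduced graph that is not a $K_{a,b}$ and not a reflexive clique must contain a bounded-size induced subgraph on which homomorphism counting is already $\shp$-complete --- the canonical cases being the path $P_4$ in the loopless bipartite world, the triangle $K_3$ among loopless graphs, and, once loops are allowed, the two-vertex graph with a loop on exactly one vertex (whose homomorphism count is precisely the number of independent sets of the input). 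One then lifts this local hardness back to $H$ by attaching pendant-path and thickened-edge selector gadgets that pin down where the relevant part of $H$ is ``used'' and force the remainder to contribute a fixed multiplicative factor, reading off the answer by one more interpolation.

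The step I expect to be the main obstacle is exactly (iii): proving the combinatorial classification in full generality and, for each resulting case, exhibiting a gadget together with a proof that the associated interpolation matrix is non-singular, so that each reduction genuinely recovers the target count. By contrast, the tractable formulas and the structural reductions in (i)--(ii) are comparatively routine bookkeeping once the requisite interpolation lemmas are in place.
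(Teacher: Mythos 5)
This statement is cited in the paper (as Theorem~1.1 of Dyer and Greenhill~\cite{DG00}) rather than proved, so there is no in-paper proof to compare against; the authors use it as a black box to derive Corollary~\ref{cor:polyt-graphs}.

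Your sketch of the tractable direction is correct and matches the standard treatment: decompose $G$ into connected components, observe a homomorphism from a connected graph lands in a single component of~$H$, and evaluate $n^{|V(G')|}$ for reflexive cliques and $a^{|S|}b^{|T|}+a^{|T|}b^{|S|}$ for $K_{a,b}$. Your hardness outline is in the right spirit but glosses over the step you yourself flag as the obstacle, and that step as phrased contains a real gap. The claim that ``a connected reduced graph that is not $K_{a,b}$ and not a reflexive clique must contain a bounded-size induced subgraph on which homomorphism counting is already $\shp$-complete'' is not quite what Dyer and Greenhill prove, and it is not automatic that hardness of $\nhcol{H_0}$ for an induced subgraph $H_0 \subseteq H$ transfers to $\nhcol{H}$: one has to construct gadgets that force homomorphisms into the chosen vertex subset and then undo the multiplicative contribution of the remainder, and making the resulting interpolation matrices non-singular is precisely the technical content, not a routine lifting. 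The actual proof in~\cite{DG00} works instead with the adjacency matrix of $H$ directly, using ``stretching'' (replacing edges with paths) and ``thickening'' (replacing edges with bundles) to realise powers of the adjacency matrix and its Hadamard powers, then arguing via a rank/factorisation criterion on these matrices that non-(complete bipartite)/(reflexive clique) targets yield a non-trivial system from which the permanent-like quantity is recoverable by interpolation. So the structure is interpolation over global matrix operations on $H$, not reduction to a fixed small induced pattern. Your three ``canonical cases'' also mislabel the non-bipartite loopless situation: a connected non-bipartite $H$ (e.g. $C_5$) need not contain $K_3$, so the classification you propose does not cover it as stated. If you want to pursue this statement further, the right move is to read~\cite{DG00} directly rather than try to reconstruct it; the paper under review assumes it.
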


We notice that a polynomial time algorithm for \nhcol{H}, gives a polynomial time algorithm for 
\phcol{} by simply applying the modulo~$p$ operation. In our setting, $H$ contains no loops, so we have the 
following characterisation for the polynomial time computable instances of \phcol{}.

\begin{corollary}\label{cor:polyt-graphs}
Let $H$ be a graph. If every component of $H^{*p}$ is a complete bipartite graph, then \phcol{} is computable in 
polynomial time.
\end{corollary}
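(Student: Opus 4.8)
The plan is to combine Theorem~\ref{thm:involution-reduction} (more precisely its iterated form via $\Rightarrow_p^*$) with the Dyer--Greenhill dichotomy Theorem~\ref{thm:dyer-greenhil}. First I would observe that, given an input graph $G$, Theorem~\ref{thm:involution-reduction} applied repeatedly along a chain $H \isoto H_1 \Rightarrow_p \cdots \Rightarrow_p H_k \isoto H^{*p}$ yields $|\Homs{G}{H}| \equiv |\Homs{G}{H^{*p}}| \pmod p$; each single step is a direct invocation of the theorem, and isomorphic target graphs have equinumerous homomorphism sets. By Theorem~\ref{thm:involution-unique} the graph $H^{*p}$ is well defined up to isomorphism and, crucially, it depends only on $H$ and $p$, both of which are fixed parameters of the problem \phcol{}; hence $H^{*p}$ can be regarded as a constant and need not be computed at runtime.

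Next I would invoke the hypothesis that every component of $H^{*p}$ is a complete bipartite graph (with no loops, which is automatic since $H$ is loopless and, as noted after Definition~\ref{defn:reduced-form}, the $\Rightarrow_p$ operation introduces no loops). Such an $H^{*p}$ therefore falls into the polynomial-time computable case of Theorem~\ref{thm:dyer-greenhil}, so $\nhcol{H^{*p}}$ is computable in polynomial time; that is, there is a polynomial-time algorithm computing the exact integer $|\Homs{G}{H^{*p}}|$ on input $G$. Reducing this integer modulo $p$ is a polynomial-time operation, and by the congruence established above the result equals $|\Homs{G}{H}| \bmod p$, which is exactly the output required by \phcol{}. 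This gives the claimed polynomial-time algorithm.

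I do not expect any serious obstacle here: the corollary is essentially a bookkeeping consequence of the three cited theorems, and the only point requiring a word of care is that $H^{*p}$ is a fixed object (so that "compute $H^{*p}$" is not part of the online computation) and that Dyer--Greenhill gives exact counts, from which the modular count follows by a trivial reduction step. If anything, the mild subtlety is checking that the components of $H^{*p}$ are indeed loopless complete bipartite graphs rather than complete graphs with all loops present — but this is immediate from $H$ being simple and loopless and the remark that $\Rightarrow_p$ preserves looplessness.
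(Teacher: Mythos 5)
Your proposal is correct and follows exactly the paper's route: iterate Theorem~\ref{thm:involution-reduction} along a chain witnessing $H \Rightarrow_p^* H^{*p}$ to get $|\Homs{G}{H}| \equiv |\Homs{G}{H^{*p}}| \pmod p$, then invoke the Dyer--Greenhill dichotomy (Theorem~\ref{thm:dyer-greenhil}) on the fixed, loopless graph $H^{*p}$ and reduce the exact count modulo $p$. The paper leaves this as a short remark before the corollary; your write-up merely makes the same bookkeeping explicit.
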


\section{Homomorphisms of partially labelled graphs}
\label{sec:pinning}
We prove that counting the number of homomorphisms from a partially labelled graph~$J$ to a fixed graph~$H$ 
modulo~$p$ reduces to the problem of counting homomorphisms from a graph~$G$ to~$H$ modulo~$p$.
This generalises the results of G\"obel, Goldberg and Richerby~\cite{GGR15}.
Many of the definitions and key lemmas we use in this sections are generalisation of the ones 
in~\cite[Section~3]{GGR15}, so our presentation follows the presentation of~\cite{GGR15} closely.

\bigskip
We study the following problem.

\probpar{\partlabphcol{}.}
{Graph~$H$ and prime~$p$.}
{Partially $H$-labelled graph $J=(G,\tau)$.}
{$|\Homst{J}{H}| \pmod p$.}

\medskip
According to \Lovasz{}, two graphs $H$ and~$H'$ are isomorphic if and only if for every graph~$G$ holds $|\Homs{G}{H}|  
= |\Homs{G}{H'}|$.
Faben an Jerrum ~\cite[4.5]{FJ13}, using a slightly different terminology, show that this results holds for partially 
labelled graphs~$J$ when the pinning function is restricted to maps exactly one vertex of $G(J)$ to a vertex of $H$, 
modulo all primes~$p$. G\"obel, Goldberg and Richerby~\cite[Lemma~3.6]{GGR15} show the following version of this result.
%

\begin{lemma}[G\"obel, Goldberg and Richerby]
\label{lem:Lovasz-2}
Let~$p$ be a prime and let $(H, \vbar)$ and $(H'\!, \vbar')$ be graphs that both have no automorphism of order $p$, each with $r$~distinguished vertices.  
Then $(H, \vbar) \isoto (H'\!,\vbar')$ if and only if, for all (not necessarily connected) graphs $(G,\ubar)$ with $r$~distinguished vertices,
\begin{equation*} 
|\Homs{(G,\ubar)}{(H,\vbar)}| \equiv |\Homs{(G,\ubar)}{(H'\!,\vbar')}| \pmod 2\,.
\end{equation*}
\end{lemma}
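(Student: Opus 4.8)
The plan is to adapt the classical Lov\'asz homomorphism-counting argument (in the presentation of Hell and \Nesetril) by replacing exact equalities of homomorphism counts with congruences modulo~$p$; the hypothesis that $(H,\vbar)$ and $(H'\!,\vbar')$ have no automorphism of order~$p$ makes clear that the intended modulus in the displayed congruence is~$p$, and the proof below is written for a general prime~$p$ (the case $p=2$ being a specialisation). One direction is immediate: an isomorphism $(H,\vbar)\isoto(H'\!,\vbar')$ induces, for every $(G,\ubar)$, a bijection between $\Homs{(G,\ubar)}{(H,\vbar)}$ and $\Homs{(G,\ubar)}{(H'\!,\vbar')}$, so the two counts are actually equal. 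For the converse, assume the congruence holds for all $(G,\ubar)$. First I would verify that $\vbar$ and $\vbar'$ have the same \emph{equality type}, i.e.\ $v_i=v_j$ if and only if $v'_i=v'_j$: were this false, say $v_i=v_j$ while $v'_i\neq v'_j$, then for $G$ the edgeless graph on the $r$ distinguished vertices (with identifications forced by $\ubar$) exactly one of the two homomorphism counts equals $1$ and the other equals $0$, contradicting the assumption since $p\geq 2$. I would also record the elementary fact that if $\ubar$ and $\vbar$ do not share an equality type, then there is no homomorphism whatsoever from $(G,\ubar)$ to $(H,\vbar)$, in particular no injective one.

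The core step is to upgrade the hypothesis to the same congruence for counts of \emph{injective} homomorphisms: for every $(G,\ubar)$,
\[
|\InjHoms{(G,\ubar)}{(H,\vbar)}| \equiv |\InjHoms{(G,\ubar)}{(H'\!,\vbar')}| \pmod p,
\]
which I would prove by induction on $|V(G)|$. Let $n_0$ be the number of distinct entries of $\vbar$ (equivalently of $\vbar'$). In the base case $|V(G)|\leq n_0$: either $\ubar$ fails to match the equality type of $\vbar$ and both sides vanish by the fact above, or else the set $\{u_1,\dots,u_r\}$ exhausts $V(G)$, every homomorphism from $(G,\ubar)$ to either target is automatically injective, and the claim is exactly the hypothesis. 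For the inductive step, take $(G,\ubar)$ with $|V(G)|>n_0$ and partition $\Homs{(G,\ubar)}{(H,\vbar)}$ according to the equivalence relation $\theta$ on $V(G)$ identifying two vertices precisely when they have the same image; writing $(G,\ubar)/\theta$ for the quotient graph with its induced distinguished vertices, this yields
\[
|\Homs{(G,\ubar)}{(H,\vbar)}| = |\InjHoms{(G,\ubar)}{(H,\vbar)}| + \sum_{\theta} |\InjHoms{(G,\ubar)/\theta}{(H,\vbar)}|,
\]
where the sum ranges over all equivalence relations $\theta$ on $V(G)$ other than the equality relation, and analogously for $(H'\!,\vbar')$. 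A term with $\theta$ identifying adjacent vertices contributes $0$, since $H$ and $H'$ are loopless. Each quotient in the sum has strictly fewer vertices than $G$, so the sums for $H$ and $H'$ agree modulo~$p$ by the inductive hypothesis, while the left-hand sides agree by assumption; hence the injective counts for $(G,\ubar)$ agree modulo~$p$, completing the induction.

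To finish I would set $(G,\ubar)=(H,\vbar)$ in this injective-count congruence. An injective homomorphism from $(H,\vbar)$ to itself is an automorphism of $(H,\vbar)$, and since $(H,\vbar)$ has no automorphism of order~$p$, the group $\Aut(H,\vbar)$ contains no element of order~$p$; by Cauchy's group theorem (Theorem~\ref{thm:Cauchy}) this forces $|\Aut(H,\vbar)|$ not to be divisible by~$p$, so $|\Aut(H,\vbar)|\not\equiv 0\pmod p$. Consequently $|\InjHoms{(H,\vbar)}{(H'\!,\vbar')}|\not\equiv 0\pmod p$, so there is at least one injective homomorphism $(H,\vbar)\to(H'\!,\vbar')$; the symmetric choice $(G,\ubar)=(H'\!,\vbar')$ produces an injective homomorphism $(H'\!,\vbar')\to(H,\vbar)$. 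Composing the two shows $|V(H)|=|V(H')|$, so each of these injective maps is a vertex bijection, hence also edge-bijective, hence a graph isomorphism carrying $\vbar$ to $\vbar'$; that is, $(H,\vbar)\isoto(H'\!,\vbar')$. I expect the main obstacle to be the inductive step: defining the quotient $(G,\ubar)/\theta$ precisely, checking that loop-creating quotients drop out, and handling the base-case threshold $n_0$ carefully. Everything else is routine bookkeeping, the one genuinely ``modular'' ingredient being the Cauchy-theorem observation, which replaces the ``the count is positive'' step of the non-modular Lov\'asz argument.
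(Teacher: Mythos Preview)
Your proof is correct and follows essentially the same approach as the paper (and the source~\cite{GGR15} it cites): the forward direction is trivial, the equality-type check, the induction on $|V(G)|$ upgrading the congruence to injective homomorphism counts via quotienting by fibre partitions, and finally the application of Cauchy's theorem to conclude existence of an injective homomorphism in each direction. The paper itself does not reprove this lemma but simply cites it, and for the generalisation to arbitrary primes (Lemma~\ref{lem:Lovasz}) gives only a short explanation pointing back to the \cite{GGR15} argument; your write-up is precisely that argument, carried out for general~$p$ as you note.
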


This version is more general than the result by Faben and Jerrum, in the sense that the pinning function can map any number of 
vertices, but it is only stated for modulo~2. A discussion about the subtle differences of the two results appears 
in~[Section~3.4]\cite{GGR15}. For our purposes, although the result of Faben and Jerrum suffices, we observe that the 
proof of Lemma~\ref{lem:Lovasz-2} holds modulo all primes~$p$.

\begin{lemma}
\label{lem:Lovasz}
Let~$p$ be a prime and let $(H, \vbar)$ and $(H'\!, \vbar')$ be graphs having no automorphism of order $p$, each with $r$~distinguished vertices.  
Then $(H, \vbar) \isoto (H'\!, \vbar')$ if and only if, for all (not necessarily connected) graphs $(G,\ubar)$ with $r$~distinguished vertices,
\begin{equation*} 
|\Homs{(G,\ubar)}{(H,\vbar)}| \equiv |\Homs{(G,\ubar)}{(H'\!,\vbar')}| \pmod p\,.
\end{equation*}
\end{lemma}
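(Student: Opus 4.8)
The plan is to follow the classical Lov\'asz argument as presented by Hell and \Nesetril{} and adapted to partially labelled graphs in~\cite{GGR15}, replacing the parity bookkeeping of the modulo~$2$ case by arithmetic modulo~$p$. The forward direction is immediate: an isomorphism $(H,\vbar)\isoto(H'\!,\vbar')$ induces a bijection between $\Homs{(G,\ubar)}{(H,\vbar)}$ and $\Homs{(G,\ubar)}{(H'\!,\vbar')}$ for every $(G,\ubar)$, so the two counts are in fact equal, not merely congruent.

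For the converse, assume the displayed congruence holds for all $(G,\ubar)$. First I would show that $\vbar$ and $\vbar'$ have the same equality type: if, say, $v_i=v_j$ but $v'_i\ne v'_j$, then the edgeless graph $G$ on the distinguished vertices gives $|\Homs{(G,\vbar)}{(H,\vbar)}|=1$ while $|\Homs{(G,\vbar)}{(H'\!,\vbar')}|=0$, a contradiction. Next I would record the elementary fact that $\InjHoms{(G,\ubar)}{(H,\vbar)}$ is empty whenever $\ubar$ and $\vbar$ do not have the same equality type. Then, by induction on $|V(G)|$, I would prove
\[
|\InjHoms{(G,\ubar)}{(H,\vbar)}|\equiv|\InjHoms{(G,\ubar)}{(H'\!,\vbar')}|\pmod p
\]
for every $(G,\ubar)$. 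The base case treats graphs with at most as many vertices as there are distinct entries of $\vbar$: either $\ubar$ has a different equality type (both sides vanish) or every vertex of $G$ is distinguished, in which case every homomorphism is injective and the claim is the hypothesis itself. For the inductive step I use the standard decomposition
\[
|\Homs{(G,\ubar)}{(H,\vbar)}|=|\InjHoms{(G,\ubar)}{(H,\vbar)}|+\textstyle\sum_{\theta}|\InjHoms{(G,\ubar)/\theta}{(H,\vbar)}|,
\]
where $\theta$ ranges over the equivalence relations on $V(G)$ other than equality, together with its analogue for $(H'\!,\vbar')$. The left-hand sides agree modulo~$p$ by hypothesis, and each summand on the right agrees modulo~$p$ by the inductive hypothesis, since $G/\theta$ has strictly fewer vertices; hence so do the injective-homomorphism counts. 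Quotients $G/\theta$ identifying adjacent vertices create a loop and thus contribute $0$ to both sides, as $H$ and $H'$ are loop-free, which causes no trouble.

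The decisive step, and the only place where the value of the modulus actually enters, is the specialisation $(G,\ubar)=(H,\vbar)$. An injective homomorphism from $(H,\vbar)$ to itself is an automorphism of $(H,\vbar)$, so $|\InjHoms{(H,\vbar)}{(H,\vbar)}|=|\Aut(H,\vbar)|$. Since $(H,\vbar)$ has no automorphism of order~$p$, the group $\Aut(H,\vbar)$ contains no element of order~$p$, and Cauchy's group theorem (Theorem~\ref{thm:Cauchy}) yields $|\Aut(H,\vbar)|\not\equiv0\pmod p$. By the congruence just established, $|\InjHoms{(H,\vbar)}{(H'\!,\vbar')}|\not\equiv0\pmod p$, so there is at least one injective homomorphism $(H,\vbar)\to(H'\!,\vbar')$; symmetrically there is one $(H'\!,\vbar')\to(H,\vbar)$. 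Composing the two gives injective self-maps of finite graphs, which are necessarily bijective, so both maps are isomorphisms and $(H,\vbar)\isoto(H'\!,\vbar')$.

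I expect the only genuine subtlety to be exactly this final point: where the modulo~$2$ argument of~\cite{GGR15} used ``odd $\Rightarrow$ nonzero'', we must instead invoke Cauchy's theorem to pass from ``no automorphism of order~$p$'' to ``$p\nmid|\Aut(H,\vbar)|$''. Everything else in the proof of Lemma~\ref{lem:Lovasz-2} carries over essentially verbatim with $2$ replaced by $p$, which is why it suffices to observe that the argument is modulus-agnostic except at this one place.
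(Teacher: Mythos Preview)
Your proposal is correct and follows essentially the same approach as the paper. The paper's own treatment is in fact only a brief explanation that the argument of~\cite[Lemma~3.6]{GGR15} carries over verbatim to any prime~$p$, the sole modulus-dependent step being the final application of Cauchy's theorem (Theorem~\ref{thm:Cauchy}) to obtain $|\Aut(H,\vbar)|\not\equiv 0\pmod p$; you have reconstructed that argument in full and identified exactly this as the one place where the value of the modulus enters.
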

\noindent\emph{Explanation.}
In the proof of G\"obel et al.~\cite[Lemma~3.6]{GGR15} the following equation is shown.
\begin{align}    
|\InjHoms{(G,\ubar)}{(H,\vbar)}\,| &\equiv |\InjHoms{(G,\ubar)}{(H'\!,\vbar')}\,| \pmod 2\,. \nonumber
\intertext{
This is Equation~(2) from~\cite[Lemma~3.6]{GGR15}.
By reviewing the proof, one can observe that no modular equivalences are used, so the following equation holds.
}
|\InjHoms{(G,\ubar)}{(H,\vbar)}\,| &= |\InjHoms{(G,\ubar)}{(H'\!,\vbar')}\,|. \label{eq:injective}
\end{align}
Now we can show that \eqref{eq:injective}~holding for all graphs $(G, \ubar)$ with $r$ distinguished vertices implies that $(H,\vbar) \isoto (H'\!,\vbar')$.
To see this, consider $(G,\ubar) = (H,\vbar)$.
An injective homomorphism from a finite graph to itself is an automorphism and, since  $(H,\vbar)$ has no automorphism of order $p$, $\Aut(H,\vbar)$ has no element of order~$p$, so $|\Aut(H,\vbar)|\not\equiv0\pmod p$ by Cauchy's group theorem (Theorem~\ref{thm:Cauchy}).
By~\eqref{eq:injective}, the number of injective homomorphisms from $(H,\vbar)$ to $(H'\!,\vbar')$ is not equivalent to $0\pmod p$, which means that there is at least one such homomorphism.
Similarly, taking $(G,\ubar) = (H'\!,\vbar')$ shows that there is an injective homomorphism from $(H'\!,\vbar')$ to 
$(H,\vbar)$ and therefore, the two graphs are isomorphic.\qed

\medskip
A complete, self-contained proof of Lemma~\ref{lem:Lovasz} can also be found in~\cite{GobThesis}.

\medskip
As in~\cite{GGR15} we introduce orbit vectors, but generalised to an arbitrary prime~$p$.


\begin{definition}
\label{defn:lambdas}
Let $H$ be a graph with no automorphism of order~$p$ and $r\in\Z_{>0}$.
An enumeration $\vbar_1,\dots,\vbar_\mu$ of elements of $(V(H))^r$ such that, for every $\vbar\in(V(H))^r\!$, there is 
exactly one $i\in[\mu]$ such that $(H,\vbar)\isoto (H,\vbar_i)$ is referred to as an \emph{enumeration of~$(V(H))^r$ up 
to isomorphism}.
\end{definition}

The number~$\mu$ of tuples in the enumeration depends on the structure of~$H$ and not only on $|V(H)|$.

\begin{definition}\label{defn:vector}
Let $H$ be a graph with no automorphism of order~$p$, $r\in\Z_{>0}$ and let $\vbar_1, \dots, \vbar_\mu$ be an enumeration of $(V(H))^r$ up to isomorphism.
Further, let $(G,\ubar)$~be a graph with $r$~distinguished vertices.
We define the \emph{orbit vector} $\vecv_H(G, \ubar)\in(\Z_p)^\mu$ where, for each $i\in[\mu]$, the $i$-th component of $\vecv_H(G, \ubar)$ is given by
\begin{equation*}
\big(\vecv_H(G, \ubar)\big)_i \equiv |\Homs{(G,\ubar)}{(H,\vbar_i)}| \pmod p\,.
\end{equation*}
We say that $(G,\ubar)$ \emph{implements} this vector.
\end{definition}

For a group~$\calG$ acting on a set~$X$, the \emph{orbit} of an element $x\in X$ is defined to be the set 
$\Orb_{\calG}(x) = \{\pi(x)\mid \pi\in\calG\}$. 
For a graph~$H$, we will abuse notation, writing $\Orb_H(\cdot)$ instead of $\Orb_{\Aut(H)}(\cdot)$.
Thus, for $r\in\Z_{>0}$ and an enumeration $\vbar_1, \dots, \vbar_\mu$ of $(V(H))^r$ up to isomorphism, $|\{\,\vbar 
\in (V(H))^r \mid (H,\vbar)\isoto(H,\vbar_i) \,\}|=|\Orb_H(\vbar_i)|$ for every $i\in[\mu]$.

\medskip
Defining the vectors using the enumeration up to isomorphism hides the size of the orbit of a tuple $\vbar_i\in(V(H))^r$, as each orbit gets contracted to a single entry.
This information is not needed when counting modulo~2, as we can prove that for every tuple~$\vbar_i$, $|\Orb_H(\vbar_i)|$ is odd.
In contrast, this information is needed when counting modulo an odd prime.
We can recover this information at any point, since $H$ is fixed, as we are going to do later on. As it is more 
convenient to proof the technical lemmas using the contracted vectors of Definition~\ref{defn:vector} we will make this 
recovery at a later, more convenient point.

\medskip
Due to Lemma~\ref{lem:Lovasz}, for every graph $H$ and for all $\vbar \in (V(H))^r$ and $i\in[\mu]$ such that $(H,\vbar) \isoto (H,\vbar_i)$, we have that $\big(\vecv_H(G, \ubar)\big)_i\equiv|\Homs{(G,\ubar)}{(H,\vbar)}| \pmod p$.

\medskip
We denote by $\oplus^p$ and~$\otimes^p$ componentwise addition and multiplication modulo~$p$, of vectors in 
$(\zp)^\mu$, respectively.

\begin{lemma}
\label{lem:otimesvec}
Let $(G_1,\ubar)$, $(G_2,\ubar)$ be graphs, where $\ubar = u_1\dots u_r$ with $r\in\Z_{>0}$, such that $V(G_1)\cap V(G_2) = \{u_1,\dots, u_r\}$. 
Further, let $H$ be a graph with no automorphism of order~$p$ with an enumeration of $(V(H))^r$ up to isomorphism.
Then
\begin{equation*}
\vecv_H(G_1\cup G_2, \ubar) = \vecv_H(G_1, \ubar) \otimes^p \vecv_H(G_1, \ubar)\,.
\end{equation*}
\end{lemma}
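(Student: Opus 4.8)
The plan is to prove that taking the disjoint union of two graphs (glued along their common distinguished vertices) corresponds to componentwise multiplication of orbit vectors modulo $p$. The key observation is that a homomorphism from $(G_1\cup G_2, \ubar)$ to a target $(H,\vbar_i)$ is uniquely determined by its restrictions to $G_1$ and to $G_2$, and conversely any pair of homomorphisms from $(G_1,\ubar)$ and $(G_2,\ubar)$ to $(H,\vbar_i)$ that agree on the shared vertices $u_1,\dots,u_r$ glues to a homomorphism from the union. Since both restrictions are forced to send $u_\ell$ to the $\ell$-th coordinate of $\vbar_i$, they automatically agree on the overlap, so the gluing map is a bijection.

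First I would fix $i\in[\mu]$ and work with the single target $(H,\vbar_i)$. I would define the map $\Phi$ sending a homomorphism $\sigma\in\Homs{(G_1\cup G_2,\ubar)}{(H,\vbar_i)}$ to the pair $(\sigma\restriction_{V(G_1)}, \sigma\restriction_{V(G_2)})$. Because $E(G_1\cup G_2)=E(G_1)\cup E(G_2)$ and $V(G_1)\cap V(G_2)=\{u_1,\dots,u_r\}$, each restriction is indeed a homomorphism respecting the distinguished vertices, so $\Phi$ lands in $\Homs{(G_1,\ubar)}{(H,\vbar_i)}\times\Homs{(G_2,\ubar)}{(H,\vbar_i)}$. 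Conversely, given such a pair $(\sigma_1,\sigma_2)$, both maps send $u_\ell\mapsto (\vbar_i)_\ell$ for every $\ell\in[r]$, hence they agree on $V(G_1)\cap V(G_2)$ and define a common function $\sigma$ on $V(G_1)\cup V(G_2)$; this $\sigma$ preserves every edge of $G_1\cup G_2$ since each such edge lies in $G_1$ or in $G_2$, so $\sigma$ is a homomorphism. These two constructions are mutually inverse, giving
\begin{equation*}
|\Homs{(G_1\cup G_2,\ubar)}{(H,\vbar_i)}| = |\Homs{(G_1,\ubar)}{(H,\vbar_i)}|\cdot|\Homs{(G_2,\ubar)}{(H,\vbar_i)}|.
\end{equation*}
Reducing this identity modulo $p$ and reading off the definition of the $i$-th component of the orbit vector (Definition~\ref{defn:vector}) gives $\big(\vecv_H(G_1\cup G_2,\ubar)\big)_i \equiv \big(\vecv_H(G_1,\ubar)\big)_i\cdot\big(\vecv_H(G_2,\ubar)\big)_i \pmod p$, which is exactly the $i$-th component of $\vecv_H(G_1,\ubar)\otimes^p\vecv_H(G_2,\ubar)$. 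Since $i$ was arbitrary, the two vectors agree in every coordinate, proving the lemma.

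There is no real obstacle here; the only point demanding a little care is the verification that the two restrictions genuinely agree on the overlap, which is where the hypothesis $V(G_1)\cap V(G_2)=\{u_1,\dots,u_r\}$ together with the definition of a homomorphism of graphs with distinguished vertices (forcing $u_\ell\mapsto v_{i,\ell}$) is used. Note also that the statement as written has a typo — the right-hand side should read $\vecv_H(G_1,\ubar)\otimes^p\vecv_H(G_2,\ubar)$ — so I would silently correct this in the write-up.
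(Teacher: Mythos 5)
Your proof is correct and follows essentially the same approach as the paper, which also observes that a map $\sigma\colon V(G_1)\cup V(G_2)\to V(H)$ is a homomorphism from $(G_1\cup G_2,\ubar)$ to $(H,\vbar_i)$ if and only if its restrictions to $V(G_1)$ and $V(G_2)$ are homomorphisms from $(G_1,\ubar)$ and $(G_2,\ubar)$; you simply spell out the bijection in more detail. You are also right that the displayed equation in the statement has a typo and should read $\vecv_H(G_1,\ubar)\otimes^p\vecv_H(G_2,\ubar)$.
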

\begin{proof}
A function $\sigma\colon V(G_1)\cup V(G_2)\to V(H)$ is a homomorphism from $(G_1\cup G_2, \ubar)$ to~$(H,\vbar)$ if and only if, for each $i\in\{1,2\}$, the restriction of~$\sigma$ to $V(G_i)$ is a homomorphism from $(G_i,\ubar)$ to~$(H,\vbar)$.
\end{proof}

Componentwise multiplication of $\vecv_H(G_1,\ubar_1)$ and $\vecv_H(G_2,\ubar_2)$ for 
two given graphs $(G_1,\ubar_1)$ and $(G_2, \ubar_2)$ can be expressed as an orbit vector of a single graph.
This is more complex for componentwise addition $\vecv_H(G_1,\ubar_1) \oplus^p \vecv_H(G_2,\ubar_2)$. For our purposes it is sufficient that a set of graphs whose vectors sum to a desired vector exists, componentwise.

For graphs with distinguished vertices $(G_1,\ubar_1), \dots, (G_t, \ubar_t)$, we define
\begin{equation*}
\vecv_H\big((G_1,\ubar_1) + \dots + (G_t,\ubar_t)\big) = \vecv_H(G_1,\ubar_1) \oplus^p \dots \oplus^p 
\vecv_H(G_t,\ubar_t)
\end{equation*}
and say that a vector $\vecv\in(\zp)^\mu$ is \emph{$H$-implementable}, if it can be expressed as such a sum.

\medskip
The modulo~2 version of the following lemma appears in~\cite[Lemma~4.16]{FJ13} and is used for all pinning techniques 
so far. We reprove the lemma for the vectors in $(\mathbb{Z}_p)^\mu$ when~$p$ is an arbitrary prime.

\begin{lemma}\label{lem:implementable-vectors}
Let $\mu\in\Z_{>0}$ and $S\subseteq(\mathbb{Z}_p)^\mu$ be closed under $\oplus^p$ and $\otimes^p$.
If $1^\mu\in S$ and, for every distinct $i,j\in[\mu]$, there is a tuple $s=s_1\dots s_\mu\in S$ with $s_i\neq s_j$, then $S=(\zp)^\mu$.
\end{lemma}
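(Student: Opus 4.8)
The plan is to deduce $S=(\zp)^\mu$ from two facts: that $S$ automatically contains all constant vectors and is closed under the componentwise $\zp$-scalar action, and that the separation hypothesis lets us manufacture each standard basis vector. Once both the basis vectors and all scalars are available in $S$, every vector of $(\zp)^\mu$ is an $\oplus^p$-sum of members of $S$ and the proof concludes. This is the modulo-$p$ analogue of the argument behind \cite[Lemma~4.16]{FJ13}.

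First, I would record the scalar closure. For any $v\in S$ and any integer $c\ge 1$, adding $c$ copies of $v$ with $\oplus^p$ shows $c\,v\in S$; in particular, taking $v=1^\mu$ and $c=p$ gives $0^\mu\in S$, so $c\,v\in S$ for every $v\in S$ and every $c\in\zp$. Hence all $p$ constant vectors lie in $S$, and $S$ is closed under negation (the case $c=p-1$) and therefore under coordinatewise subtraction modulo~$p$.

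The heart of the argument is to show, for each $i\in[\mu]$, that $S$ contains the vector $e_i$ with a $1$ in coordinate~$i$ and $0$ in every other coordinate. If $\mu=1$ this is immediate, since $1^\mu=e_1\in S$, so assume $\mu\ge 2$. For every $j\in[\mu]\setminus\{i\}$ the hypothesis supplies some $s^{(j)}\in S$ with $s^{(j)}_i\neq s^{(j)}_j$; subtracting the constant vector with value $s^{(j)}_j$ (legitimate by the previous step) produces $t^{(j)}\in S$ with $t^{(j)}_j=0$ and $t^{(j)}_i=s^{(j)}_i-s^{(j)}_j\neq 0$. Now set $w=\Bigox_{j\in[\mu]\setminus\{i\}} t^{(j)}\in S$. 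For any $k\neq i$ the factor with index $j=k$ contributes $t^{(k)}_k=0$, so $w_k=0$; and $w_i=\prod_{j\neq i} t^{(j)}_i$ is a product of nonzero elements of the field $\zp$ and hence lies in $\zsp$. Thus $w=w_i\,e_i$, and multiplying $w$ by the scalar $w_i^{-1}\in\zp$ (again using scalar closure) yields $e_i\in S$.

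Finally, an arbitrary $v=(v_1,\dots,v_\mu)\in(\zp)^\mu$ equals $\Bigop_{i=1}^{\mu} v_i\,e_i$, each summand $v_i\,e_i$ lying in $S$ by the two preceding steps; since $S$ is closed under $\oplus^p$ we get $v\in S$, so $S=(\zp)^\mu$. I expect the construction of $w$ to be the only delicate point: the separation hypothesis is exactly what is needed to zero out the coordinates $k\neq i$ one at a time, while the surviving coordinate $w_i$ stays nonzero precisely because $\zp$ has no zero divisors — over a general commutative ring the same recipe could collapse to the zero vector.
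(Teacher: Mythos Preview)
Your proof is correct and takes a cleaner route than the paper's. The paper argues by induction on~$\mu$: assuming the first $\mu-1$ coordinates can already be controlled, it obtains vectors $\vecv_1,\dots,\vecv_\mu$ that match the standard basis in the first $\mu-1$ positions but carry uncontrolled last entries $x_1,\dots,x_\mu$; it then applies Fermat's little theorem to normalise each $x_i$ to $\{0,1\}$ and runs a three-way case split (all $x_i=0$; at least two $x_i=1$; exactly one $x_i=1$) to build the missing basis vector $0\cdots01$. You bypass both the induction and the case analysis by constructing each $e_i$ in one shot: for every $j\neq i$ shift a separating vector by a constant to kill coordinate~$j$ while keeping coordinate~$i$ nonzero, and then take the componentwise product of all $\mu-1$ of these. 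The paper's approach stays close to the original Faben--Jerrum argument for $p=2$; yours is shorter and makes explicit that the only fact needed about~$\zp$ beyond being a ring is the absence of zero divisors, which is exactly what guarantees $w_i\neq 0$ after the product.
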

\begin{proof}
It suffices to show that all of the basis vectors of the standard basis
\!\!\footnote{The standard basis is the set $\{100\dots00,010\dots00,\dots,000\dots01\}$} in $(\mathbb{Z}_p)^\mu$ are in $S$.
Since $S$ is closed under $\oplus^p$ and $\otimes^p$ it follows that all of $(\mathbb{Z}_p)^\mu$ is in $S$. 

We show that all the basis vectors are in $S$ by induction on $\mu$.
If $\mu=1$ the lemma clearly holds as the all-ones vector is the only vector in the standard basis.
Assume that the induction hypothesis holds for $\mu-1$ and $\mu>1$. 
Then we can construct vectors that agree with the standard basis in the first $\mu-1$ places without being able to control what happens in the $\mu$-th place.
From the latter and the statement of the lemma, that $1^\mu\in S$, we obtain the following vectors
$$
\begin{array}{cccccccc}
 \vecv_0&=&1&1&1&\dots&1&1\\
 \vecv_1&=&1&0&0&\dots&0&x_1\\
 \vecv_2&=&0&1&0&\dots&0&x_2\\
 \vdots& \vdots & \vdots& & \ddots & \ddots & \vdots & \vdots\\
 \vecv_{\mu-1}&=&0&0&0&\dots&0&x_{\mu-1}\\
 \vecv_\mu&=&0&0&0&\dots&1&x_\mu
\end{array}
$$
where the $x_i$ can take any value in $\mathbb{Z}_p$.

Let $r$ be an integer and let $\vecv\in(\zp)^\mu$.
We use the notation $\vecv^r=\vecv\otimes^p\dots\otimes^p\vecv$ for the $r$-fold componentwise product and let 
$r\vecv=\vecv\oplus^p\dots\oplus^p\vecv$ denote the $r$-fold componentwise sum of $\vecv$.
Consider the values of each $x_i$.
If $x_i\neq0$, by Theorem~\ref{thm:Fermat} we have $x_i^{p-1}\equiv 1\pmod p$. 
Hence $\vecv_i^{p-1}=00\dots010\dots01$.
So from now on we can assume that for each $i\in[\mu]$, $x_i\in\{0,1\}$. 
We have the following three cases.

Case~1. For all $i\in[\mu]$, $x_i=0$. Then the vector $\vecv=\vecv_0\oplus^p\Bigop\limits_{i\in[\mu]} (p-1)\vecv_i=0\dots01$ is the remaining vector that completes the standard basis.

Case~2. There are at least two $i,j$ such that $x_i,x_j=1$.
Then $\vecv=\vecv_i\otimes^p \vecv_j=0\dots01$.
To obtain the remaining vectors of the standard basis, for each $i\in[\mu]$ with $x_i\neq0$, we take the vector $\vecv_i\oplus^p (p-1)\vecv$.

Case~3. There is exactly one $i\in[\mu]$ with $x_i=1$.
From the statement of the lemma there is a vector $\vecu\in S$, with $(\vecu)_i=a$, $(\vecu)_\mu=b$, where $a\neq b$.
First assume that $a>b$.
Let $\vecu_i=\vecu\otimes^p \vecv_i=0\dots0a0\dots0b$ and let $\vecv_a=(p-a)\vecv_i=0\dots0(p-a)0\dots0(p-a)$.
Then $\vecu_i\oplus^p\vecv_a=0\dots0(p-a+b)$.
Since $a>b$, $(p-a+b)$ is not a multiple of $p$ so, by Theorem~\ref{thm:Fermat}, $(p-a+b)^{p-1}\equiv1\pmod p$.
Thus, $\vecv=(\vecu_i\oplus^p\vecv_a)^{p-1}=0\dots 01$ and $\vecv_i'=(p-1)\vecv\oplus^p \vecv_i=0\dots 010\dots 0$ complete the standard basis.

Now assume that $a<b$.
Let $\vecv_b=(p-b)\vecv_i=0\dots0(p-b)0\dots0(p-b)$ and therefore $\vecu_i\oplus^p\vecv_b=0\dots0(p+a-b)0\dots0$.
Since $a<b$, $(p+a-b)$ is not a multiple of $p$ so, by Theorem~\ref{thm:Fermat}, $(p+a-b)^{p-1}\equiv1\pmod p$.
Thus $\vecv_i''=(\vecu_i\oplus^p\vecv_b)^{p-1}=0\dots 010\dots0$ and $\vecw=(p-1)\vecw\oplus^p \vecv_i=0\dots 01$ complete the standard basis.
\end{proof}


\begin{corollary}
\label{cor:implementable}
Let $H$ be a graph with no automorphism of order~$p$ with an enumeration $\vbar_1,\ldots,\vbar_\mu$ of $(V(H))^r$ up to isomorphism.
Then every $\vecv\in (\zp)^\mu$ is $H$-implementable.
\end{corollary}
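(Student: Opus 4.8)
The plan is to apply Lemma~\ref{lem:implementable-vectors} with $S$ being the set of $H$-implementable vectors. First I would verify that $S$ is closed under $\oplus^p$ and $\otimes^p$. Closure under $\oplus^p$ is immediate from the definition of $H$-implementability, since a sum of sums is again a sum of orbit vectors. Closure under $\otimes^p$ follows from Lemma~\ref{lem:otimesvec}: given two implementable vectors, write each as a sum of orbit vectors, distribute the componentwise product over the two sums, and realise each product term $\vecv_H(G_1,\ubar_1)\otimes^p\vecv_H(G_2,\ubar_2)$ as $\vecv_H(G_1\cup G_2,\ubar)$ after taking disjoint copies of $G_1$ and $G_2$ that share exactly the $r$ distinguished vertices (renaming vertices as needed so the two graphs meet only along $\ubar$).

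Next I would check the two hypotheses of Lemma~\ref{lem:implementable-vectors}. For the all-ones vector $1^\mu$: take $(G,\ubar)$ to be the graph on the single ``diagonal'' configuration of $r$ distinguished vertices with no edges — more precisely, take $G$ to have vertex set $\{u_1,\dots,u_r\}$ (identifying $u_i$ and $u_j$ exactly when they must be equal, but in fact just the $r$ isolated vertices suffices here) and no edges; then for every $\vbar\in(V(H))^r$ there is exactly one homomorphism from $(G,\ubar)$ to $(H,\vbar)$, so every component of $\vecv_H(G,\ubar)$ equals $1$, giving $1^\mu\in S$. For the separating condition: fix distinct $i,j\in[\mu]$, so $(H,\vbar_i)\not\isoto(H,\vbar_j)$. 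By Lemma~\ref{lem:Lovasz} there must exist a graph $(G,\ubar)$ with $r$ distinguished vertices such that $|\Homs{(G,\ubar)}{(H,\vbar_i)}|\not\equiv|\Homs{(G,\ubar)}{(H,\vbar_j)}|\pmod p$; that is exactly a vector $s\in S$ with $s_i\neq s_j$.

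With both hypotheses in hand, Lemma~\ref{lem:implementable-vectors} yields $S=(\zp)^\mu$, which is the claim. The only mild subtlety — and the step I would be most careful about — is the bookkeeping in the closure-under-$\otimes^p$ argument: one must take genuinely disjoint copies of the two underlying graphs except along the shared distinguished vertices, and one must note that $\otimes^p$ distributes over a $\oplus^p$-sum componentwise so that a product of two sums expands into a sum of products, each of which is then implementable by Lemma~\ref{lem:otimesvec}. Everything else is a direct invocation of the lemmas already proved, so no further obstacle is expected.
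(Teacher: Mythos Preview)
Your proposal is correct and follows essentially the same approach as the paper: set $S$ to be the $H$-implementable vectors, note closure under $\oplus^p$ and (via Lemma~\ref{lem:otimesvec}) under $\otimes^p$, implement $1^\mu$ by the edgeless graph on $\{u_1,\dots,u_r\}$, obtain the separating vectors from Lemma~\ref{lem:Lovasz}, and conclude by Lemma~\ref{lem:implementable-vectors}. Your extra care about distributing $\otimes^p$ over a $\oplus^p$-sum when verifying closure is a detail the paper leaves implicit, but otherwise the arguments match.
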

\begin{proof}
Let $S$ be the set of $H$-implementable vectors.  $S$~is clearly closed under~$\oplus^p$, and is closed under~$\otimes^p$ by Lemma~\ref{lem:otimesvec}.
Let $G$~be the graph on vertices $\{u_1, \dots, u_r\}$, with no edges.
$1^\mu$~is implemented by $(G,u_1, \dots, u_r)$, which has exactly one homomorphism to every $(H,\vbar_i)$.  
Finally, for every pair $i,j\in [\mu]$, such that $(H,\vbar_i)$ and $(H,\vbar_j)$ are not isomorphic, by Lemma~\ref{lem:Lovasz}, there is a graph $(G,\ubar)$ such that
\begin{equation*}
|\Homs{(G,\ubar)}{(H,\vbar_i)}|\not\equiv |\Homs{(G,\ubar)}{(H,\vbar_j)}| \pmod p\,.
\end{equation*}
$(G,\ubar)$ implements a vector~$\vecv$ whose $i$th and $j$th components are different and the corollary follows from Lemma~\ref{lem:implementable-vectors}.
\end{proof}


At this point we have shown that all orbit vectors in $(\zp)^\mu$ are $H$-implementable. We can now define the tuple 
vectors that have an entry for each $r$-tuple. The tuple vectors include the sizes of the orbits 
$\Orb_H(\vbar)$, for all $v\in V(H)^r$, as this information is vital for the proof of our main theorem.

\begin{definition}\label{defn:tuple-vector}
Let $H$ be a graph with no automorphism of order~$p$, $r\in\Z_{>0}$ and let $\wbar_1, \dots, \wbar_\nu$ be an enumeration of $(V(H))^r$, i.e., $\nu=|V(H)|^r$.
Let $(G,\ubar)$~be a graph with $r$~distinguished vertices.
We define the \emph{tuple vector} $\vecw_H(G, \ubar)\in(\Z_p)^\nu$ where, for each $j\in[\nu]$, the $j$-th component of $\vecw_H(G, \ubar)$ is given by
\begin{equation*}
\big(\vecw_H(G, \ubar)\big)_j \equiv |\Homs{(G,\ubar)}{(H,\wbar_j)}| \pmod p\,.
\end{equation*}
We say that $(G,\ubar)$ \emph{implements} this vector.
\end{definition}

\begin{definition}
Let $H$ be a graph with no automorphism of order~$p$, $r\in\Z_{>0}$ and let $\wbar_1, \dots, \wbar_\nu$ be an enumeration of $(V(H))^r$, i.e., $\nu=|V(H)|^r$.
Denote by $F(H,r)\subseteq (\zp)^\nu$ the set of vectors~$\vecw$, such that, for all $i,j\in[\nu]$ with $(H,\wbar_i)\isoto (H,\wbar_j)$, we have $(\vecw)_i=(\vecw)_j$.
\end{definition}

The following lemma shows which tuple vectors are $H$-implementable. The proof uses the $H$-implementable orbit vectors 
and retracts the information that gets lost by using the enumeration up to isomorphism of the $r$-tuples.

\begin{lemma}\label{cor:implementable-full}
Let $H$ be a graph with no automorphism of order~$p$, $r\in\Z_{>0}$ and $\wbar_1, \dots, \wbar_\nu$ an enumeration of $(V(H))^r$, i.e., $\nu=|V(H)|^r$.
Then every $\vecw\in F(H,r)$ is $H$-implementable.
\end{lemma}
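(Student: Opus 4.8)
The plan is to reduce this to Corollary~\ref{cor:implementable}, which already guarantees that \emph{every} orbit vector in $(\zp)^\mu$ is $H$-implementable. The observation driving the argument is that the tuple vector $\vecw_H(G,\ubar)$ of a graph is simply the ``orbit-wise expansion'' of its orbit vector $\vecv_H(G,\ubar)$, that $F(H,r)$ is exactly the set of all such expansions, and that expansion is additive; hence an implementation of an orbit vector expands coordinate-wise into an implementation of the corresponding tuple vector.

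Concretely, I would first fix an enumeration $\vbar_1,\dots,\vbar_\mu$ of $(V(H))^r$ up to isomorphism alongside the full enumeration $\wbar_1,\dots,\wbar_\nu$, and for each $j\in[\nu]$ let $\iota(j)\in[\mu]$ be the unique index with $(H,\wbar_j)\isoto(H,\vbar_{\iota(j)})$. I then define the expansion map $E\colon(\zp)^\mu\to(\zp)^\nu$ by $(E\vecv)_j=(\vecv)_{\iota(j)}$ and record three facts. First, $E$ is additive with respect to $\oplus^p$, directly from the coordinate-wise definition. Second, $E$ is injective and its image is exactly $F(H,r)$, since a vector in $(\zp)^\nu$ is constant on every orbit of $\Aut(H)$ acting on $(V(H))^r$ if and only if it has the form $E\vecv$. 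Third, and this is the only place Lemma~\ref{lem:Lovasz} is used, for every graph with distinguished vertices $(G,\ubar)$ one has $\vecw_H(G,\ubar)=E\bigl(\vecv_H(G,\ubar)\bigr)$: because $(H,\wbar_j)\isoto(H,\vbar_{\iota(j)})$, the remark following Definition~\ref{defn:vector} gives $\bigl(\vecv_H(G,\ubar)\bigr)_{\iota(j)}\equiv|\Homs{(G,\ubar)}{(H,\wbar_j)}|\equiv\bigl(\vecw_H(G,\ubar)\bigr)_j\pmod p$, which is precisely the $j$-th coordinate of $E\bigl(\vecv_H(G,\ubar)\bigr)$.

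With these facts the conclusion is immediate. Given $\vecw\in F(H,r)$, write $\vecw=E\vecv$ for the unique $\vecv\in(\zp)^\mu$. By Corollary~\ref{cor:implementable} there are graphs with distinguished vertices $(G_1,\ubar_1),\dots,(G_t,\ubar_t)$ with $\vecv=\Bigop\limits_{k=1}^{t}\vecv_H(G_k,\ubar_k)$, and applying $E$ together with additivity and the third fact yields
\[
\vecw \;=\; E\vecv \;=\; \Bigop\limits_{k=1}^{t}E\bigl(\vecv_H(G_k,\ubar_k)\bigr) \;=\; \Bigop\limits_{k=1}^{t}\vecw_H(G_k,\ubar_k),
\]
so $\vecw$ is implemented by $(G_1,\ubar_1)+\dots+(G_t,\ubar_t)$. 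I do not anticipate a genuine obstacle: everything beyond bookkeeping is the identification $\vecw_H(G,\ubar)=E(\vecv_H(G,\ubar))$, and that is an immediate consequence of the \Lovasz-style Lemma~\ref{lem:Lovasz}, since isomorphic pointed target graphs receive equally many homomorphisms from any source. The remaining care is just in matching the definitions of ``implements'' and $H$-implementability for tuple vectors.
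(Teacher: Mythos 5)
Your proposal is correct and takes essentially the same route as the paper: the paper's proof introduces the contraction map $f\colon[\mu]\to[\nu]$ picking out the representative coordinates, sets $(\vecv)_i=(\vecw)_{f(i)}$, invokes Corollary~\ref{cor:implementable}, and then observes that any implementation of the orbit vector $\vecv$ is automatically an implementation of the tuple vector $\vecw$ because isomorphic pointed targets receive the same number of homomorphisms. Your explicit expansion map $E$ is just the inverse-side bookkeeping of the same argument, and your additivity observation is the (implicit in the paper) reason a sum implementing $\vecv$ also implements $\vecw$.
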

\begin{proof}
Let $\vbar_1,\ldots,\vbar_\mu$ be an enumeration  up to isomorphism of $(V(H))^r$. 
We denote by $f:[\mu]\to[\nu]$ the associated function with $\vbar_i=\wbar_{f(i)}$ for all $i\in[\mu]$, i.e., $f$ tells us which coordinates of the tuple vector are representatives for the equivalence classes giving the coordinates of the orbit vector.
Now, given $\vecw\in F(H,r)$, we compute the corresponding vector $\vecv \in (\Z_p)^\mu$ by letting 
$(\vecv)_i=(\vecw)_{f(i)}$ for all $i\in[\mu]$.
The vector $\vecv$ is $H$-implementable by Corollary~\ref{cor:implementable}.
Now, if $(G,\ubar)$ is a graph with $r$ distinguished vertices such that $(\vecv)_i \equiv |\Homs{(G,\ubar)}{(H,\vbar_i)}| \pmod p$ for all $i\in[\mu]$, then we also have $(\vecw)_j\equiv |\Homs{(G,\ubar)}{(H,\wbar_j)}| \pmod p$ for all $j\in[\nu]$.
\end{proof}

Before we prove the main theorem of this section, we need the following lemma.

\begin{lemma}
\label{lem:sumvec}
Let $H$ be a graph with no automorphism of order~$p$, $r\in\Z_{>0}$ and $\wbar_1, \dots, \wbar_\nu$ an enumeration of $(V(H))^r$, i.e., $\nu=|V(H)|^r$.
Then for every graph $(G,\ubar)$ with $r$ distinguished vertices
\begin{equation*}
|\Homs{G}{H}|\equiv \sum_{j\in[\nu]} (\vecw_H(G,\ubar))_j \pmod p\,.
\end{equation*}
\end{lemma}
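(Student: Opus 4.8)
The plan is to prove the identity by partitioning $\Homs{G}{H}$ according to the images of the distinguished vertices, and then reducing modulo~$p$. The key observation is that any homomorphism $\sigma\colon G\to H$ sends the tuple $\ubar = u_1\dots u_r$ to the tuple $(\sigma(u_1),\dots,\sigma(u_r))\in (V(H))^r$, and since $\wbar_1,\dots,\wbar_\nu$ is a \emph{complete} enumeration of $(V(H))^r$ with $\nu = |V(H)|^r$, there is exactly one index $j\in[\nu]$ with $(\sigma(u_1),\dots,\sigma(u_r)) = \wbar_j$.

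First I would argue that the assignment $\sigma\mapsto j$ induces a partition $\Homs{G}{H} = \bigcup_{j\in[\nu]} B_j$, where $B_j$ consists of those homomorphisms $\sigma$ from $G$ to $H$ that additionally satisfy $\sigma(u_i) = (\wbar_j)_i$ for every $i\in[r]$. By the definition of a homomorphism from $(G,\ubar)$ to $(H,\wbar_j)$ recalled in the preliminaries, $B_j$ is precisely $\Homs{(G,\ubar)}{(H,\wbar_j)}$. Since the blocks are pairwise disjoint and exhaust $\Homs{G}{H}$, we obtain the exact integer identity $|\Homs{G}{H}| = \sum_{j\in[\nu]} |\Homs{(G,\ubar)}{(H,\wbar_j)}|$.

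Finally I would reduce this identity modulo~$p$ and invoke Definition~\ref{defn:tuple-vector}, which states that $(\vecw_H(G,\ubar))_j \equiv |\Homs{(G,\ubar)}{(H,\wbar_j)}| \pmod p$ for each $j\in[\nu]$; substituting yields $|\Homs{G}{H}| \equiv \sum_{j\in[\nu]} (\vecw_H(G,\ubar))_j \pmod p$, as required.

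There is no real obstacle here: the statement is a bookkeeping identity, and the only point requiring care is that $\wbar_1,\dots,\wbar_\nu$ enumerates \emph{all} of $(V(H))^r$ (and not merely up to isomorphism, as with the orbit vectors of Definition~\ref{defn:vector}), so that no homomorphism from $G$ to $H$ is omitted from, or double-counted across, the blocks. The hypothesis that $H$ has no automorphism of order~$p$ is not needed for this lemma; it is carried along only because the tuple vectors $\vecw_H$ are defined under that assumption.
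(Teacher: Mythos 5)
Your proof is correct and follows essentially the same route as the paper's: partition $\Homs{G}{H}$ by the image of the distinguished tuple $\ubar$, identify each block with $\Homs{(G,\ubar)}{(H,\wbar_j)}$, and reduce modulo~$p$ via the definition of the tuple vector. The paper's version is terser but uses the identical decomposition, and your remark that the no-order-$p$-automorphism hypothesis is inherited only from the definition of $\vecw_H$ is accurate.
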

\begin{proof}
We have,
\begin{align*}
\sum_{j \in [\nu]} (\vecw_H (G, \ubar))_j &\equiv \sum_{j \in [\nu]} 
|\Homs{(G,\ubar)}{(H,\wbar_j)}| \pmod p\ \\
&= |\Homs{G}{H}| \pmod p.
\end{align*}
The equivalence holds by the definition of $\vecw_H(G,\ubar)$.
The equality holds because every homomorphism from $G$ to~$H$ must map~$\ubar$ to some $r$-tuple~$\wbar$.
Since $[\nu]$ contains all $r$-tuples we obtain all homomorphisms from $G$ to $H$.
\end{proof}

{\renewcommand{\thetheorem}{\ref{thm:partlabcol}}
\begin{theorem}
Let $p$ be a prime and let $H$ be a graph.
Then \partlabphcol{} reduces to \phcol{} via polynomial time Turing reduction.
\end{theorem}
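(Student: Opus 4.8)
The plan is, for a fixed prime $p$ and graph $H$, to compute $|\Homs{J}{H}|\pmod p$ for an input partially $H$-labelled graph $J=(G,\tau)$ using polynomially many queries of the form ``compute $|\Homs{G'}{H}|\pmod p$''. I would first reduce to the case that $H$ has no automorphism of order $p$: iterating Theorem~\ref{thm:involution-reduction} along $H\Rightarrow_p^* H^{*p}$ shows that, modulo $p$, $|\Homs{G'}{H}|$ and $|\Homs{G'}{H^{*p}}|$ agree for every $G'$, and combining this with the same reduction handles the general case; the essential content — and the only part using the machinery of this section — is the case $H=H^{*p}$, which is also all that the tree dichotomy needs. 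So assume $H$ has no automorphism of order $p$, let $\dom(\tau)=\{x_1,\dots,x_r\}$ with $\tau(x_i)=v_i$, and write $\xbar=x_1\cdots x_r$ and $\vbar=v_1\cdots v_r$, so $|\Homs{J}{H}|=|\Homs{(G,\xbar)}{(H,\vbar)}|$.

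The core is a gadget construction. Fix an enumeration $\wbar_1,\dots,\wbar_\nu$ of $(V(H))^r$ as in Definition~\ref{defn:tuple-vector}, and let $\vecw^\star\in(\zp)^\nu$ be the indicator of $\{\,j\in[\nu]\mid(H,\wbar_j)\isoto(H,\vbar)\,\}$. Isomorphic tuples receive equal entries, so $\vecw^\star\in F(H,r)$, and by Lemma~\ref{cor:implementable-full} there are graphs with $r$ distinguished vertices $(\Theta_1,\ubar_1),\dots,(\Theta_t,\ubar_t)$ with $\vecw^\star=\vecw_H(\Theta_1,\ubar_1)\oplus^p\cdots\oplus^p\vecw_H(\Theta_t,\ubar_t)$. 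For each $l\in[t]$ let $G_l$ be obtained from the disjoint union of $G$ and $\Theta_l$ by identifying, for every $i\in[r]$, the vertex $x_i$ with the $i$-th vertex of $\ubar_l$; the $G_l$ are ordinary graphs, and the reduction queries the oracle on each of them.

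The queries recombine as follows. Since the only vertices shared by the two parts of $G_l$ are $x_1,\dots,x_r$, a homomorphism $G_l\to H$ is exactly a pair of homomorphisms $(G,\xbar)\to(H,\wbar)$ and $(\Theta_l,\ubar_l)\to(H,\wbar)$ for a common $\wbar\in(V(H))^r$; hence, using Definition~\ref{defn:tuple-vector},
\begin{align*}
\sum_{l\in[t]}|\Homs{G_l}{H}|
&=\sum_{j\in[\nu]}|\Homs{(G,\xbar)}{(H,\wbar_j)}|\cdot\sum_{l\in[t]}|\Homs{(\Theta_l,\ubar_l)}{(H,\wbar_j)}|\\
&\equiv\sum_{j\in[\nu]}|\Homs{(G,\xbar)}{(H,\wbar_j)}|\cdot(\vecw^\star)_j\\
&\equiv\!\!\!\sum_{j\,:\,(H,\wbar_j)\isoto(H,\vbar)}\!\!\!|\Homs{(G,\xbar)}{(H,\vbar)}|
\;\equiv\;|\Orb_H(\vbar)|\cdot|\Homs{J}{H}|\pmod p,
\end{align*}
where the penultimate step uses that composing with an isomorphism $(H,\wbar_j)\isoto(H,\vbar)$ gives $|\Homs{(G,\xbar)}{(H,\wbar_j)}|=|\Homs{(G,\xbar)}{(H,\vbar)}|$, and the last uses that the number of such $j$ equals $|\Orb_H(\vbar)|$.

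To finish: since $H$ has no automorphism of order $p$, Cauchy's group theorem (Theorem~\ref{thm:Cauchy}) gives $p\nmid|\Aut(H)|$, and $|\Orb_H(\vbar)|$ divides $|\Aut(H)|$ by the orbit--stabiliser theorem, so $|\Orb_H(\vbar)|$ is invertible modulo $p$; multiplying the sum of the oracle answers by this inverse returns $|\Homs{J}{H}|\pmod p$. The graphs $\Theta_l$ and their number $t$ depend only on $H$ and $r=|\dom(\tau)|$, so each $G_l$ has size $|V(G)|+O(1)$ and only $O(1)$ oracle calls are made once $r$ is bounded (in particular $r=2$ for trees), giving a polynomial time Turing reduction. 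I expect the main obstacle to be supplying the implementability input, Lemma~\ref{cor:implementable-full}: one needs that $\vecw^\star$ arises as a sum under $\oplus^p$ of tuple vectors of genuine graphs, which draws on the \Lovasz{}-type Lemma~\ref{lem:Lovasz}, the closure Lemma~\ref{lem:implementable-vectors}, the product rule Lemma~\ref{lem:otimesvec}, and the passage between orbit vectors and tuple vectors — and this is the only point where $H$ having no automorphism of order $p$ is essential.
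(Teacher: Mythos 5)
Your proof takes essentially the same route as the paper's: implement the $\{0,1\}$-indicator vector of the orbit of $\vbar$ via Lemma~\ref{cor:implementable-full}, glue the resulting gadgets $(\Theta_l,\ubar_l)$ onto $(G,\xbar)$, sum the oracle answers, and recover $|\Homs{J}{H}|$ by dividing by $|\Orb_H(\vbar)|$. You additionally spell out two points the paper leaves implicit, namely the preliminary reduction to $H=H^{*p}$ (the lemmas invoked all require $H$ to have no automorphism of order $p$, a hypothesis missing from the theorem statement) and the invertibility of $|\Orb_H(\vbar)|$ modulo $p$ via Cauchy's theorem and the orbit--stabiliser theorem.
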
}
\begin{proof}
Let $J=(G, \tau)$ be an instance of \partlabphcol{}.
Let $\ubar = u_1 \dots u_r$ be an enumeration of $\dom(\tau)$ and let $\wbar= w_1 \dots w_r = \tau(u_1) \dots \tau(u_r)$.
Moving from the world of partially $H$-labelled graphs to the equivalent view on graphs with distinguished vertices, we wish to compute $|\Homs{(G,\ubar)}{(H,\wbar)}|$ modulo~$p$.
Let $\wbar_1, \dots , \wbar_{\nu}$ be an enumeration of $(V(H))^r$ and let $\vecw \in \{0,1\}^\nu$ be the vector with $(\vecw)_j=1$ if $(H,\wbar_j)\isoto (H,\wbar)$, and $0$ for all other $j\in[\nu]$; $\vecw$ has exactly $|\Orb_H(\wbar)|$ 1-entries.
Since $\vecw\in F(H,r)$, by Lemma~\ref{cor:implementable-full} $\vecw$ is \linebreak[4] {$H$-implemented} by some sequence $(\Theta_1, \ubar_1), \dots , (\Theta_t, \ubar_t)$ of graphs with $r$-tuples of distinguished vertices.

For each $s \in [t]$, let $(G_s, \ubar)$ be the graph that results from taking the disjoint union of a copy of $G$ and $\Theta_s$ and identifying the $i$-th element of $\ubar$ with the $i$-th element of $\ubar_s$ for each $i \in [r]$.
Then Lemma~\ref{lem:otimesvec} yields
\begin{align}
\vecw_H (G_s, \ubar) &= \vecw_H (G, \ubar) \otimes^p \vecw_H (\Theta_s, \ubar_s). \nonumber 
\intertext{
With this we obtain
}
\vecw_H (G, \ubar) \otimes^p \vecw &= \vecw_H (G, \ubar) \otimes^p \vecw_H \left( (\Theta_1, \ubar_1) + \dots + (\Theta_t, \ubar_t) \right) \nonumber \\
&= \vecw_H (G, \ubar) \otimes^p \left( \vecw_H(\Theta_1, \ubar_1) \oplus^p \dots \oplus^p \vecw_H(\Theta_t, \ubar_t) \right) \nonumber \\
&= \Crossp_{s\in[t]} \left( \vecw_H (G, \ubar) \otimes^p \vecw_H (\Theta_s, \ubar_s) \right) \nonumber \\
&= 
\Crossp_{s\in[t]} \vecw_H (G_s, \ubar) \,. \nonumber
\intertext{
By summing the components of the vector $\vecw_H (G, \ubar) \otimes^p \vecw$, since $\vecw$ contains a 1-entry for each $\wbar_k\in\Orb_H(\wbar)$ and a 0-entry everywhere else, we have,
}
\sum_{j\in[\nu]}\left(\vecw_H (G, \ubar) \otimes^p \vecw\right)_j
&=|\Orb_H(\wbar)|\cdot|\Homs{(G,\ubar)}{(H,\wbar)}|.
\label{eq:sumvec-left}
\intertext{
Summing the components of the vector $\Crossp_{\!\!\!\! s\in[t]} \vecw_H (G_s, \ubar)$, we have
}
\sum_{j\in[\nu]}\left(\Crossp_{s\in[t]} \vecw_H (G_s, 
\ubar)\right)_j
&=\sum_{s\in[t]}\sum_{j\in[\nu]}\left(\vecw_H(G_s,\ubar)\right)_j
\label{eq:sumvec-right}
\end{align}

By applying Lemma~\ref{lem:sumvec}, we have that the values of~\eqref{eq:sumvec-right} are modulo $p$ congruent to $\sum_{s \in [t]} |\Homs{G_s}{H}|$.
Thus, from the equality of \eqref{eq:sumvec-left} and \eqref{eq:sumvec-right} we have
\[
|\Orb_H(\wbar)|\cdot|\Homs{(G, \ubar)}{(H, \wbar)}| = \, \sum_{s \in [t]} |\Homs{G_s}{H}|.
\]
The right side can be computed by making $t$ calls to an oracle for \phcol{}.
Since $H$ is fixed and $r$ is finite, we can trivially compute $|\Orb_H(\wbar)|$ and thus being able to recover $|\Homs{(G, \ubar)}{(H, \wbar)}|$ concludes the proof. 
\end{proof}

\section{Hardness for trees}
\label{sec:trees}
We provide the classes of trees~$H$, for which~\phcol{} is \numpp-hard, utilising the previous theorem (Theorem~\ref{thm:partlabcol}) on \partlabphcol{}.
Due to Section~\ref{sec:poly} we focus on graphs that have no automorphism of order~$p$. 
Employing Corollary~\ref{cor:polyt-graphs} we can see that stars are graphs for which \phcol{} is tractable. A tree that is 
not a star contains a path of length at least $3$.
This path is the structure that will eventually give us hardness. We formally define.

\begin{definition}\label{def:ab-paths}
	Let $H$ be a graph, $p$ be a prime and $a,b\in\zp \setminus \{1\}$. Assume $H$ contains a path $P=x_0\dots x_k$ for $k>0$, such that the following hold
\begin{enumerate}
	\item $P$ is the unique path between $x_0$ and $x_k$ in $H$.
	\item $\deg_H(x_0)\equiv a\pmod p$ and $\deg_H(x_k)\equiv b \pmod p$.
	\item For all $0<i<k$, $\deg_H(x_i)\equiv 1 \pmod p$.
\end{enumerate}
Then, we will call $P$ an \emph{$(a,b,p)$-path in $H$} and denote it $\Path[H][a,b,p]$.
\end{definition}

We proceed by showing that every non-star tree~$H$ without automorphisms of order~$p$ contains such a path.

\begin{lemma}\label{lem:tree_star_or_hard}
	Let $H$ be a tree that has no automorphism of order~$p$. Then, either $H$ is a star or there are $a,b\in\Z_p\setminus\{1\}$ such that $H$ contains an $(a,b,p)$-path.
\end{lemma}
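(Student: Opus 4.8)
\emph{Proof plan.} We may assume $H$ is not a star, and the task is to exhibit an $(a,b,p)$-path. I would work with the set
\[
A \;=\; \{\, v \in V(H) : \deg_H(v) \not\equiv 1 \pmod p \,\}.
\]
Since leaves have degree $1$, no leaf lies in $A$, so every vertex of $A$ has degree at least $2$. The plan then has two parts: first show $|A|\ge 2$, using the hypothesis that $H$ has no automorphism of order~$p$; then extract an $(a,b,p)$-path by taking two vertices of $A$ that are as close together as possible.

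For the first part I would argue as follows. As already noted in the text, a tree that is not a star contains a path on at least four vertices, so fix a path $P' = v_0 v_1 \cdots v_d$ of \emph{maximum} length; then $d \ge 3$. Its endpoints $v_0,v_d$ are leaves, and $v_1$, $v_{d-1}$ are interior vertices of $P'$ — in particular distinct, as $d\ge 3$ — hence not leaves. By maximality of $P'$, every neighbour of $v_1$ other than $v_2$ is a leaf: a non-leaf neighbour $z\ne v_2$ would have a further neighbour $z'\notin V(P')$ (no cycles in a tree), and $z' z v_1\cdots v_d$ would be a longer path. Thus $v_1$ has exactly $\deg_H(v_1)-1$ leaf neighbours, and symmetrically $v_{d-1}$ has $\deg_H(v_{d-1})-1$ leaf neighbours. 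If $\deg_H(v_1)-1\ge p$, then cyclically permuting $p$ of those leaf neighbours while fixing every other vertex of $H$ is an automorphism of order~$p$, contradicting the hypothesis; the same rules out $\deg_H(v_{d-1})-1\ge p$. Hence $2\le \deg_H(v_1),\deg_H(v_{d-1})\le p$, so both degrees are $\not\equiv 1\pmod p$, i.e.\ $v_1,v_{d-1}\in A$, and therefore $|A|\ge 2$.

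For the second part I would pick distinct $x_0,x_k\in A$ with $d_H(x_0,x_k)$ minimal over all pairs of distinct vertices of $A$, and let $P = x_0 x_1\cdots x_k$ be the unique path between them in $H$. Then $k=d_H(x_0,x_k)>0$; condition~$1$ of Definition~\ref{def:ab-paths} holds automatically since $H$ is a tree; and $\deg_H(x_0)$, $\deg_H(x_k)$ reduce mod~$p$ to elements $a,b\in\Z_p\setminus\{1\}$ because $x_0,x_k\in A$. Finally, if some interior vertex $x_i$ with $0<i<k$ were in $A$, then $x_0$ and $x_i$ would be distinct vertices of $A$ at distance $i<k$, contradicting minimality; so every interior vertex satisfies $\deg_H(x_i)\equiv 1\pmod p$, and $P$ is an $(a,b,p)$-path.

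The only real work is the first part, and its crux is the observation that on a longest path the vertex adjacent to an endpoint has all but one of its neighbours among the leaves of $H$: this forces it either to witness $|A|\ge 2$ directly (its degree is at most $p$, hence $\not\equiv 1\pmod p$) or to carry $p$ mutually interchangeable leaves, which would yield the forbidden order-$p$ automorphism. The second part is then routine.
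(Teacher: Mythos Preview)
Your proof is correct and follows essentially the same approach as the paper: both take a longest path, argue that the vertex adjacent to an endpoint has only leaf neighbours outside the path, bound its degree between $2$ and~$p$ via the no-order-$p$-automorphism hypothesis, and then extract an $(a,b,p)$-path whose interior has all degrees~$\equiv 1\pmod p$. The only organisational difference is that you pass through the set $A$ and pick a closest pair in~$A$, whereas the paper stays on the fixed longest path and takes the \emph{first} index $k$ with $\deg_H(x_k)\not\equiv 1\pmod p$ after~$x_0$; both devices guarantee the interior-degree condition for the same reason.
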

\begin{proof}
	We assume that $H$ is not a star and let $P=x_{-1}x_0\dots x_\ell$ be a maximal path of $H$ with length~$\ell+1$.
	We are going to prove that $P$ contains an $(a,b,p)$-path.  
	
	Since $H$ is not a star, $P$ contains at least four vertices yielding $\ell>1$.
	In order to prove that any vertex in $\Gamma_H(x_0)-x_1$ must be a leaf we assume the contrary.
	Let $v\in\Gamma_H(x_0)-x_1$ be not a leaf and $v'\neq x_0$ be a neighbour of $v$.
	Then, $v'vx_0\dots x_\ell$ is a path of length $\ell+2$ contradicting the maximality of $P$.
	The very same argument yields that any vertex in $\Gamma_H(x_{\ell-1})-x_{\ell-2}$ must be a leaf as well.
	
	We assume towards a contradiction that $|\Gamma_H(x_0)| > p$.
	Let $Y=\{y_1,\dots,y_p\} \subseteq \Gamma_H(x_0) - x_1$ be a set of neighbours of $x_0$, which are not equal to $x_1$. Let $\tau$ be a mapping from $H$ to itself defined as follows: for every vertex $y_i \in Y$, let $\tau(y_i) = y_{i+1}$ with the indices taken modulo $p$; for any other vertex $v \in V(H) \setminus Y$, let $\tau(v) = v$.
	As we have observed above, for all $i\in[p]$, $y_i$ is a leaf only adjacent to $x_0$.
	Therefore, $\tau$ is an automorphism of $H$ of order $p$, which is a contradiction.
	
	Hence, $x_0$ has at least two and at most $p$ neighbours, which yields $\deg_H(x_0)\not\equiv 1\pmod p$.
	Similarly, we obtain $\deg_H (x_{\ell-1}) \not\equiv 1\pmod p$.
	Consequently, there exists the minimum 
	\[
	\min \{k \in [\ell-1] \mid \deg(x_k)\not\equiv 1 \pmod p \},
	\]
	which yields the subpath $P'=x_0\dots x_k$ of $P$.
	Let $a=\deg_H(x_0)$ and $b=\deg_H (x_k)$.
	Since $H$ contains no cycles, $P'$ is the unique path in $H$ connecting $x_0$ and $x_k$.
	As we have obtained, $a, b \not\equiv 1 \pmod p$. Finally, due to the choice of $k$ we deduce $\deg_H (x_i) \equiv 1 \pmod p$ for all internal vertices $x_i$ of $P'$ with $i \in [k-1]$.
	We conclude that $P'$ is an $(a,b,p)$-path in $H$. 
\end{proof}

Before we study the consequences of an $(a,b,p)$-path $\Path[H]$ in $H$, in the next lemma we observe that the number of homomorphisms from a $(k+1)$-path to $H$ with two distinguished vertices and the number of possibilities $H$ offers to get from one of the distinguished vertices to the other, are equal.

\begin{lemma}\label{lem:walks}
	Let $p$ be a prime, $H$ a graph and let $x,y\in V(H)$. If $P$ is the path $z_0z_1\dots z_k$, then $|\,\Homs{(P,z_0,z_k)}{(H,x,y)}|$ 
	is equal to the number of $k$-walks in $H$ from~$x$ to~$y$.
\end{lemma}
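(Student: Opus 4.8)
The plan is to exhibit an explicit bijection between $\Homs{(P,z_0,z_k)}{(H,x,y)}$ and the set of $k$-walks in $H$ from $x$ to $y$; equality of the two cardinalities then follows at once. First I would unwind the definitions. Since $P$ is the path $z_0 z_1 \dots z_k$, its vertex set is $\{z_0,\dots,z_k\}$ and, using the undirected convention for edges, its edge set is exactly $\{(z_{i-1},z_i)\mid i\in[k]\}$. Hence a function $\sigma\colon V(P)\to V(H)$ is a homomorphism from $P$ to $H$ precisely when $(\sigma(z_{i-1}),\sigma(z_i))\in E(H)$ for every $i\in[k]$, and it is a homomorphism from $(P,z_0,z_k)$ to $(H,x,y)$ precisely when in addition $\sigma(z_0)=x$ and $\sigma(z_k)=y$.

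Next I would define the map $\Phi$ sending $\sigma\in\Homs{(P,z_0,z_k)}{(H,x,y)}$ to the list $\Phi(\sigma)=\sigma(z_0)\,\sigma(z_1)\cdots\sigma(z_k)$. By the characterisation above, consecutive entries of this list are adjacent in $H$, so $\Phi(\sigma)$ is a $k$-walk in $H$ in the sense of Section~\ref{sec:prelim} (recall that walks may repeat vertices, so the fact that $\sigma$ need not be injective is irrelevant), and its first and last entries are $x$ and $y$. Thus $\Phi$ is a well-defined map into the set of $k$-walks in $H$ from $x$ to $y$. I would then check that $\Phi$ is a bijection: it is injective because a homomorphism from $P$ is a function on $V(P)=\{z_0,\dots,z_k\}$ and $\Phi(\sigma)$ records the value of $\sigma$ at each of these vertices, so $\Phi(\sigma)=\Phi(\sigma')$ forces $\sigma=\sigma'$; and it is surjective because, given any $k$-walk $v_0 v_1\cdots v_k$ in $H$ with $v_0=x$ and $v_k=y$, the function $\sigma\colon z_i\mapsto v_i$ is well defined (the $z_i$ are pairwise distinct) and satisfies $(\sigma(z_{i-1}),\sigma(z_i))=(v_{i-1},v_i)\in E(H)$ for all $i\in[k]$ together with $\sigma(z_0)=x$ and $\sigma(z_k)=y$, hence $\sigma\in\Homs{(P,z_0,z_k)}{(H,x,y)}$ and $\Phi(\sigma)$ is the given walk.

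I do not expect any genuine obstacle here: the whole argument is a formal unwinding of the definitions of \emph{path}, \emph{homomorphism}, \emph{graph with distinguished vertices}, and \emph{$k$-walk}. The only point worth stating with a little care is that the edge set of $P$ is precisely the set of consecutive pairs $(z_{i-1},z_i)$, so that checking the homomorphism condition reduces to checking adjacency of consecutive images; everything else is immediate, and the degenerate case $k=0$ (where both sides are nonempty iff $x=y$) is covered automatically by the same bijection.
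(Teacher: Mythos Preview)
Your argument is correct: the explicit bijection $\Phi(\sigma)=\sigma(z_0)\,\sigma(z_1)\cdots\sigma(z_k)$ does exactly what is needed, and your checks of well-definedness, injectivity, and surjectivity are complete.

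The paper proceeds differently. It argues by induction on~$k$, with base case $k=1$, and in the inductive step decomposes both sides according to the neighbour of~$x$ used first: on the walk side one has $W(x,y,k)=\sum_{u\in\Gamma_H(x)} W(u,y,k-1)$, and on the homomorphism side one has $|\Homs{(P,z_0,z_k)}{(H,x,y)}|=\sum_{u\in\Gamma_H(x)}|\Homs{(P',z_1,z_k)}{(H,u,y)}|$ where $P'=z_1\cdots z_k$; the two sums then match termwise by the inductive hypothesis. Your direct bijection is more economical and avoids the induction entirely; it also transparently covers the degenerate case $k=0$, which the paper's induction does not address (its base case is $k=1$). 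The inductive decomposition in the paper, on the other hand, makes explicit the recursive structure $W(x,y,k)=\sum_{u\in\Gamma_H(x)} W(u,y,k-1)$, which is in the spirit of how walk counts are later manipulated in the proof of Lemma~\ref{lem:tools_reduction_wBIS_to_Homs}; but that recursion is not needed to establish the present lemma, so your shorter route loses nothing.
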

\begin{proof}
	Let $W(x,y,k)$ denote the number of $k$-walks in $H$ between the vertices $x$ and $y$. We prove the lemma by induction on $k$.
	
	In the base case with $k=1$ the path $P$ consists only of the edge $(z_0,z_1)$. If $x$ is adjacent to $y$ in $H$, then there is only one homomorphism $\sigma: (P,z_0,z_k) \to (H,x,y)$ implying $W(x,y,1)=1=|\,\Homs{(P,z_0,z_1)}{(H,x,y)}|$. Otherwise, $x$ and $y$ are not adjacent. Hence, there cannot exist a homomorphism $\sigma: (P,z_0,z_1) \to (H,x,y)$ implying $W(x,y,1)=0=|\,\Homs{(P,z_0,z_k)}{(H,x,y)}|$. 

	Regarding the induction step, we assume $W(x,y,i)=|\,\Homs{(P,z_0,z_i)}{(H,x,y)}|$ holds for all paths $P$ of size $i < k$ and are going to show $W(x,y,k)=|\,\Homs{(P,z_0,z_k)}{(H,x,y)}|$. Let $W$ be a $k$-walk in $H$ from $x$ to $y$. In order to reach $x$ the walk $W$ must traverse a neighbour $u$ of $x$. Deleting the edge $(u,x)$ from $W$ yields a walk of length $k-1$ from $u$ to $y$. However, if for a neighbour $u'$ of $x$ there exists no $k$-walk from $x$ to $y$ traversing $u'$, then there is no $(k-1)$-walk from $u'$ to $y$. This yields 
	\begin{equation}\label{eq:sum_of_subwalks}
		W(x,y,k)=\sum_{u\in\Gamma_H(x)} W(u,y,k-1).
	\end{equation}
	
	Let $P'=z_1\dots z_{k}$ be the path obtained from $P$ by deleting the edge $(z_0, z_1)$. Since $z_1$ is adjacent to $z_0$ and every homomorphism $\sigma: (P,z_0,z_k) \to (H,x,y)$ maps $z_0$ to $x$, $z_1$ must be mapped to a neighbour of $x$. Hence, for every neighbour $u$ of $x$ a homomorphism $\sigma': (P',z_1,z_{k}) \to (H,u,y)$ yields a homomorphism from $(P,z_0,z_k)$ to $(H,x,y)$ and vice versa. We deduce
	\begin{equation}\label{eq:homs_sum_of_subwalks}
		|\,\Homs{(P,z_0,z_{k})}{(H,x,y)}|=\sum_{u\in\Gamma_H(x)}|\,\Homs{(P',z_1,z_{k})}{(H,u,y)}|.
	\end{equation}
	
	Finally, by \eqref{eq:homs_sum_of_subwalks}, the induction hypothesis and \eqref{eq:sum_of_subwalks} we obtain the desired \[
		|\,\Homs{(P,z_0, z_k)}{(H,x,y)}|=\sum_{u\in\Gamma_H(x)} W(u,y,k-1)=W(x,y,k).
	\qedhere
	\]
\end{proof}

\begin{corollary}\label{cor:distance}
	Let $G,H$ be graphs and let $u,v\in V(G)$. Then for every homomorphism $\sigma: G \to H$ holds $\dist_H(\sigma(u),\sigma(v))\leq\dist_G(u,v)$.
\end{corollary}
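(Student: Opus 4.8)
The plan is to read this off directly from Lemma~\ref{lem:walks}. First I would dispose of the degenerate case: if $u$ and $v$ lie in different connected components of $G$ then $\dist_G(u,v)$ is infinite (or undefined) and the asserted inequality holds vacuously. So from now on we may assume $u$ and $v$ are connected in $G$ and set $d=\dist_G(u,v)$.

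Next I would take a shortest path $P = z_0 z_1 \dots z_d$ in $G$ realising this distance, so $z_0 = u$ and $z_d = v$, and regard it as a graph with two distinguished vertices $(P, z_0, z_d)$. Since $\sigma\colon G\to H$ is a homomorphism, its restriction to $V(P)$ is a homomorphism from $(P, z_0, z_d)$ to $(H, \sigma(u), \sigma(v))$, so in particular $|\Homs{(P,z_0,z_d)}{(H,\sigma(u),\sigma(v))}| \geq 1$. Applying Lemma~\ref{lem:walks} with $x = \sigma(u)$, $y = \sigma(v)$ and $k = d$, this quantity equals the number of $d$-walks in $H$ from $\sigma(u)$ to $\sigma(v)$; hence at least one such $d$-walk exists.

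Finally I would invoke the elementary fact that a walk of length $d$ between two vertices contains a path between them of length at most $d$ (repeatedly excise the portion of the walk between two occurrences of the same vertex). Consequently $\dist_H(\sigma(u),\sigma(v)) \leq d = \dist_G(u,v)$, which is the claim. I do not anticipate a genuine obstacle here: the substantive combinatorics is already encapsulated in Lemma~\ref{lem:walks}, and the only points that merit a sentence each are the disconnected case and the passage from a $d$-walk to a path of length at most $d$ witnessing the distance bound in $H$.
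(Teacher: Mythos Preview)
Your proof is correct and follows essentially the same approach as the paper: both arguments hinge on Lemma~\ref{lem:walks} applied to (the restriction of $\sigma$ to) a shortest $u$--$v$ path in $G$. The paper phrases this by contradiction and leaves the passages ``no $k$-walk when $\dist_H>k$'' and ``$\sigma$ restricted to a shortest path yields a homomorphism'' implicit, whereas you argue directly and spell out the disconnected case and the walk-to-path step; these are cosmetic differences, not a different route.
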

\begin{proof}
	We assume towards a contradiction that there exists a homomorphism $\sigma$ from $G$ to $H$ with $\dist_G(u,v)<\dist_H(\sigma(u),\sigma(v))$. Let $k=\dist_G(u,v)$. Since the distance between $\sigma(u)$ and $\sigma(y)$ in $H$ is larger than $k$, there exists no $k$-walk in $H$ between $\sigma(u)$ and $\sigma(y)$. Therefore, by Lemma~\ref{lem:walks} $\sigma$ cannot exist.
\end{proof}


	In order to show that \phcol{} is \numpp-hard we are going to establish a reduction from 
\pbislr{\lweight}{\rweight} to \partlabphcol{}. That is, 
	given a graph $G$ input for \pbislr{\lweight}{\rweight}, we construct a partially labelled 
graph $J$, input for \partlabphcol{}, such that $\wISet{G}\equiv |\,\Homs{J}{H}| \pmod p$.
	The construction of $J$ as stated uses any path in $H$. When we define the actual reduction though, we will 
require that this path is an $(a,b,p)$-path in~$H$.
		
	Let $G=\bipG$ be the bipartite input graph of \pbislr{\lweight}{\rweight}, let $p$ be a prime and let~$H$ be a 
tree, target graph in \partlabphcol{}.
	Assume $H$ contains a path $\Path= x_0 \dots x_k$ and let $P_k = z_0 \dots z_k$ be the $k$-path of length $k$.
	For every edge $e \in E$, we take a copy of $P_k$ denoted $P_k^e = z_0^e \dots z_k^e$.
	Then, $J$ is constructed starting with $G$ by adding two vertices $\hat{u}$ and $\hat{v}$ and connecting them to 
every vertex in $\lpart$ and $\rpart$, respectively. 
Subsequently, every edge $e \in E$ is substituted with a the path $P_k^e$.
	Finally, the pinning function of $J$ maps~$\hat{u}$ to~$x_0$ as well as~$\hat{v}$ to~$x_k$.
	See Figure~\ref{fig:reduction_from_wBIS} for an example.
	Formally, we have the following definition.
	
	\begin{figure}[t]
		\centering
		\includegraphics[width=\textwidth]{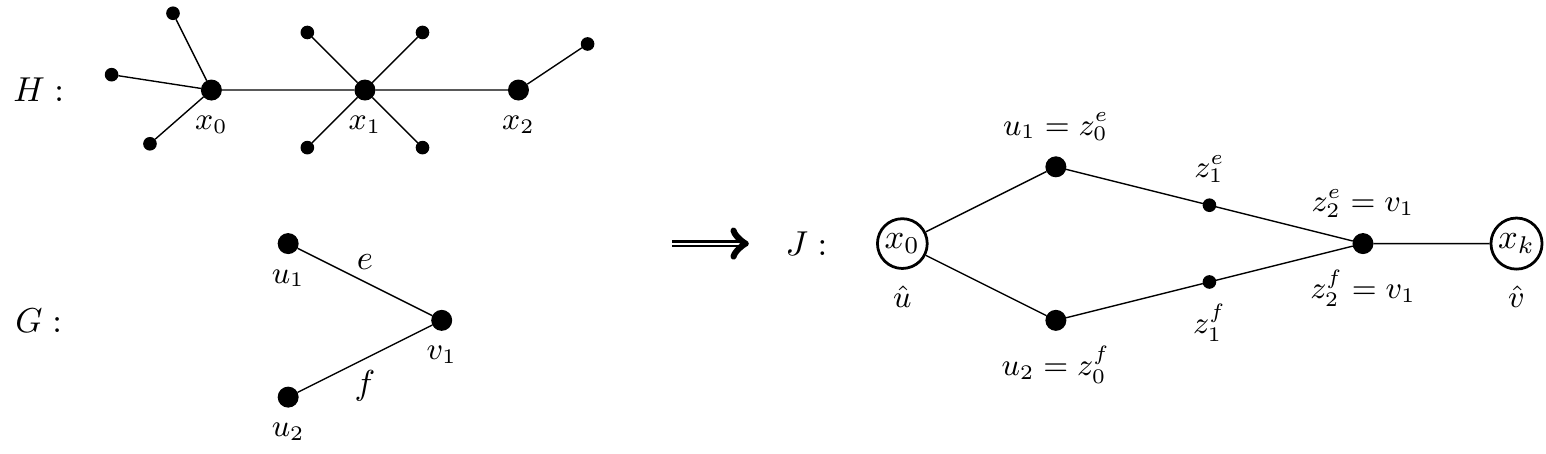}
		\caption{Constructive route for $J$ given $G$ and the $(4,2,5)$-path in $H$ for $p=5$.} \label{fig:reduction_from_wBIS}
	\end{figure}

\begin{definition}\label{def:gadget_wBIS_to_Homs}
	Let $p$ be a prime and $H$ be a graph containing the path $\Path= x_0 \dots x_k$.
	Given the $k$-path $P_k=z_0 \dots z_k$ and a bipartite graph $G=\bipG$. 
Then, $J$ is the partially labelled graph with vertex set 
\begin{align*}
V(G(J)) = \;&\{\,\hat{u}, \,\hat{v}\,\} \cup \lpart \cup \rpart \cup \{\,z_i^e \mid i \in [k-1], e \in E \,\}
\intertext{
and edge set
}
E(G(J)) = \;&\{\,(\hat{u}, u) \mid u \in \lpart \,\} \cup \{\,(z_j^e, z_{j+1}^{e}) \mid e \in E, j \in [k-2] \,\} \\
&\cup \{\, (u, z_1^e), (z_{k-1}^e, v) \mid e=(u, v) \in E \,\} \cup \{\,(v, \hat{v}) \mid v \in \rpart \,\} .
\end{align*}
	Finally, let $\tau(J)=\{\,\hat{u} \mapsto x_0,\,\hat{v}\mapsto x_k\,\}$ be the partial labelling from $G(J)$ to $H$.
\end{definition}

The following lemma studies the properties of $J$, which will help us establish the reduction.
In order to gain these properties, we do need $Q$ to be an $(a,b,p)$-path.

\begin{lemma}\label{lem:tools_reduction_wBIS_to_Homs}
	Let $p$ be a prime, $G=(V_L,V_R,E)$ a bipartite graph and $H$ be a tree.
	Assume there are $a,b\in\Z_p\setminus\{1\}$ such that $H$ contains an $(a,b,p)$-path 
$\Path[H][a,b,p]= x_0 \dots x_k$. We denote the diminished neighbourhoods of $x_0$ and $x_k$ by $\lNeigh = \Gamma_H(x_0)- x_1$ and $\rNeigh =\Gamma_H(x_k)-x_{k-1}$, respectively.
	Additionally, let $J$ be the partially labelled graph according to 
Definition~\ref{def:gadget_wBIS_to_Homs}. Then, for every homomorphism $\sigma$ from $J$ to $H$ the following hold.
	\begin{enumerate}
		\item Let $u \in \lpart$ and $v \in \rpart$, then $\sigma (u) \in \Gamma_H (x_0)$ and $\sigma (v) \in \Gamma_H (x_k)$, respectively;
		\item Let $\mathfrak{O}_\sigma=\{u\in \lpart \mid \sigma(u)=x_1 \} \cup \{v\in \rpart \mid \sigma(v)=x_{k-1}\}$ 
		and $\IO{I}_\sigma= (\lpart \cup \rpart ) \setminus \IO{O}_\sigma$. Given another homomorphism $\sigma'$ from $J$ to $H$, the relation $\rel{\sigma}{\sigma'}$ if $\IO{I}_\sigma = \IO{I}_{\sigma'}$ is an equivalence relation with equivalence class denoted $\IOclass{\cdot}$;
		\item Let $\sigma_1 , \dots , \sigma_\mu$ be representatives from each $\rel{}{}$\!\!~-equivalence class.
		Then, the set $\calI (G)$ of independent sets of $G$ is exactly the set $\{\, \IO{I}_{\sigma_i} \mid i \in [\mu] \,\}$.
		\item For the diminished neighbourhoods holds
		 $|\IOclass{\sigma}| \equiv |\lNeigh|^{|\IO{I}_\sigma \cap \lpart|} |\rNeigh|^{|\IO{I}_\sigma \cap \rpart|}\pmod p$.
	\end{enumerate}
\end{lemma}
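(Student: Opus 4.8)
The plan is to treat the four items in order, as each builds on the previous. Write $I_\sigma$ for $\IO{I}_\sigma$, and recall $\lNeigh=\Gamma_H(x_0)-x_1$, $\rNeigh=\Gamma_H(x_k)-x_{k-1}$; I will use throughout that $H$, being a tree, has a unique (hence shortest) path between any two of its vertices. Item~1 is immediate: $\hat u$ is pinned to $x_0$ and every $u\in\lpart$ is joined to $\hat u$ in $G(J)$, so $\sigma(u)\in\Gamma_H(x_0)$, and symmetrically $\sigma(v)\in\Gamma_H(x_k)$ for $v\in\rpart$. Item~2 holds because $\rel{\sigma}{\sigma'}$ is, by its very definition (it holds iff $I_\sigma=I_{\sigma'}$), an equivalence relation.

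For Item~3 I would prove the two inclusions of $\calI(G)=\{\,I_{\sigma_i}\mid i\in[\mu]\,\}$ separately. \emph{Every $I_\sigma$ is independent:} if $I_\sigma$ contained an edge, then, $G$ being bipartite, it would be some $e=(u,v)$ with $u\in\lpart$, $v\in\rpart$, $\sigma(u)\neq x_1$, $\sigma(v)\neq x_{k-1}$, hence (Item~1) $\sigma(u)\in\lNeigh$ and $\sigma(v)\in\rNeigh$; the unique $H$-path between them is $\sigma(u)\,x_0x_1\dots x_k\,\sigma(v)$ (vertices distinct since $H$ is acyclic), of length $k+2$, yet $e$ was replaced in $G(J)$ by a path of length $k$, so Lemma~\ref{lem:walks} would force a $k$-walk between vertices at distance $k+2$ --- impossible. \emph{Every $I\in\calI(G)$ occurs:} first $\lNeigh,\rNeigh\neq\varnothing$, since $\deg_H(x_0)\not\equiv1\pmod p$ and $\deg_H(x_0)\ge1$ give $\deg_H(x_0)\ge2$ (likewise for $x_k$); now fix $w_L^\ast\in\lNeigh$, $w_R^\ast\in\rNeigh$ and define $\sigma$ by $\hat u\mapsto x_0$, $\hat v\mapsto x_k$, $u\mapsto w_L^\ast$ for $u\in I\cap\lpart$ and $u\mapsto x_1$ for $u\in\lpart\setminus I$, and $v\mapsto w_R^\ast$ for $v\in I\cap\rpart$ and $v\mapsto x_{k-1}$ for $v\in\rpart\setminus I$. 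The edges at $\hat u,\hat v$ are respected ($\lNeigh\cup\{x_1\}\subseteq\Gamma_H(x_0)$, $\rNeigh\cup\{x_{k-1}\}\subseteq\Gamma_H(x_k)$), and for each $e=(u,v)\in E$ the independence of $I$ rules out $\{u,v\}\subseteq I$, so the two endpoint-images are joined by a walk of length exactly $k$ in $H$ (routine, using that $x_0x_1\dots x_k$ is a path and $w_L^\ast,w_R^\ast$ are extra neighbours of its ends), whence $\sigma$ extends over $P_k^e$ by Lemma~\ref{lem:walks}. As $x_1\notin\lNeigh$ and $x_{k-1}\notin\rNeigh$, this $\sigma$ satisfies $I_\sigma=I$; the two inclusions, with the $I_{\sigma_i}$ pairwise distinct, give Item~3 (and $\mu=|\calI(G)|$).

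For Item~4, fix $\sigma$ and set $I=I_\sigma$. A homomorphism $\sigma'$ lies in $\IOclass{\sigma}$ iff it is forced on $\hat u,\hat v$, sends each $u\in\lpart$ into $\lNeigh$ when $u\in I$ and to $x_1$ otherwise (symmetrically on $\rpart$, with $\rNeigh$ and $x_{k-1}$), and extends over every internally disjoint path $P_k^e$. By Lemma~\ref{lem:walks} the number of extensions over $P_k^e$ for a fixed pair of endpoint-images equals the number $N_k(\cdot,\cdot)$ of $k$-walks in $H$ between them, so $|\IOclass{\sigma}|=\sum_{\phi}\prod_{e=(u,v)\in E}N_k(\phi(u),\phi(v))$ over admissible endpoint-maps $\phi$. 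Since $I$ is independent (Item~3), each $e$ has an endpoint outside $I$; if exactly one endpoint of $e$ lies in $I$, its two images are at distance exactly $k$ (a sub-path of $Q$ followed by one edge into $\lNeigh$ or $\rNeigh$), so that factor is $N_k=1$, a walk of length equal to the distance being the unique shortest path. Hence every product is $N_k(x_1,x_{k-1})^{m_0}$, where $m_0$ --- the number of edges with both endpoints outside $I$ --- does not depend on $\phi$, and there are exactly $|\lNeigh|^{|I\cap\lpart|}\,|\rNeigh|^{|I\cap\rpart|}$ admissible $\phi$. So Item~4 reduces to $N_k(x_1,x_{k-1})\equiv1\pmod p$.

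This last congruence is the one substantive point, and I would get it from the elementary tree identity: in a tree $H$, the number of walks of length $d_H(u,v)+2$ from $u$ to $v$ equals $\bigl(\sum_{w}\deg_H(w)\bigr)-d_H(u,v)$, the sum over the vertices $w$ on the $u$--$v$ path (reading $\deg_H(u)$ when $u=v$). Indeed any such walk is the shortest $u$--$v$ path with a single ``go to a neighbour and return'' detour inserted at one of its $d_H(u,v)+1$ vertices, and the only coincidences among these descriptions are the $d_H(u,v)$ detours bouncing along a path-edge, each described twice. Taking $u=x_1$, $v=x_{k-1}$, so that the path is $x_1x_2\dots x_{k-1}$ with all vertices internal to $Q$ and thus of degree $\equiv1\pmod p$, gives $N_k(x_1,x_{k-1})\equiv(k-1)-(k-2)=1\pmod p$ (for $k=1$ simply $N_1(x_1,x_0)=1$). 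The step I expect to need the most care is this identity itself --- verifying that detour-insertion yields \emph{all} walks of that length and counting the overlaps correctly --- since everything else in Item~4 is bookkeeping.
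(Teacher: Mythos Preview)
Your proof is correct and follows essentially the same route as the paper's own argument: Items~1 and~2 are handled identically, Item~3 via the same distance contradiction (Corollary~\ref{cor:distance} in the paper, your appeal to Lemma~\ref{lem:walks}) together with an explicit construction of a homomorphism realising a given independent set, and Item~4 via the same tree identity $W(x_1,x_{k-1},k)=\sum_{i=1}^{k-1}\deg_H(x_i)-(k-2)\equiv 1\pmod p$. Your presentation is in fact slightly cleaner in one place --- you make explicit that the edge factors with exactly one endpoint in $I$ equal $1$ \emph{exactly} (not merely modulo~$p$), so that the whole product is $N_k(x_1,x_{k-1})^{m_0}$ independent of~$\phi$ --- whereas the paper argues all three cases yield factors $\equiv 1\pmod p$ and multiplies; but this is a cosmetic difference, and the substantive step (the detour-counting identity you flag) is argued the same way in both.
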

\begin{proof}
	We will prove each statement in order.
	\begin{enumerate}[wide, labelwidth=!, labelindent=0pt, parsep=0pt]
		\item 
		We observe that $\tau(J)(\hat{u})=x_0$ and $\hat{u}$ is adjacent to every vertex in $\lpart$. Therefore, $\sigma$ has to map each vertex $u \in \lpart$ to a vertex in the neighbourhood of $x_0$. The analogue argument shows the second result regarding $\hat{v}$ and the neighbourhood of $x_k$.
		\item 
		The statement follows from the observation that each class $\IOclass{\sigma}$ is uniquely determined by the set $\IO{I}_\sigma$.
		\item 
		We commence the proof with establishing that mapping $\sigma$ to $\IO{I}_\sigma$ defines a surjection from $\Homs{J}{H}$ to $\calI(G)$.
		Then we obtain a bijection from $\{\, \IOclass{\sigma_i} \mid i \in [\mu] \,\}$ to $\calI(G)$, as with $\rel{}{}$ we identify exactly the $\sigma$ and $\sigma'$, for which $\IO{I}_\sigma=\IO{I}_{\sigma'}$.
		
		We first argue, that $\IO{I}_\sigma$ is an independent set in $G$ for every $\sigma \in \Homs{J}{H}$.
		Assume towards a contradiction, that there exists $\sigma \in \Homs{J}{H}$ and a pair of vertices $u,v \in \IO{I}_{\sigma}$ with $(u,v) \in E$.
		Without loss of generality let $u \in \lpart$ and $v \in \rpart$.
		Due to property~$\mathit{1}$ and $u,v \in \IO{I}_{\sigma}$ we obtain that $\sigma(u) \in \lNeigh$ and $\sigma(v) \in \rNeigh$.
		Additionally, $H$ is a tree and the path $\sigma (u) x_0 \dots x_k \sigma(v)$ is the unique path connecting $\sigma(u)$ and $\sigma(v)$.
		Therefore, $\sigma(u)$ and $\sigma(v)$ have distance $k+2$ in $H$.
		However, by the construction of $J$ we have $d_{G(J)}(u,v) = k$, which contradicts the existence of $\IOclass{\sigma}$, due to Corollary~\ref{cor:distance}.	
			
		Regarding surjectivity, let $I \in \calI(G)$.
		We are going to define a mapping $\sigma_I$ that is a homomorphism from $J$ to $H$ with $\IO{I}_{\sigma_I}=I$.
		To do so, let $x_{-1} \in W_L$ and $x_{k+1} \in W_R$.
		This is possible, as $\Path[H]$ is a $(a,b,p)$-path and thus we have $W_L \neq \varnothing$ and $W_R \neq \varnothing$.
		Now, let $\sigma_I$ be defined as follows: 
		Because of the pinning we have to map $\hat u$ to $x_0$ and $\hat v$ to $x_k$.
		Further, if $u \in V_L$ is in $I$, then for every $e \in E$ starting with $u$, we map the path $z^e_0 \ldots z^e_k$ to the path $x_{-1}\ldots x_{k-1}$ in $H$.
		On the other hand, for every $v \in V_R \cap I$ and every edge $e=(u,v)$ in $G$, we map the path $z^e_0 \ldots z^e_k$ to $x_{1}\ldots x_{k+1}$.
		If $e \in E$ is such that neither $u$ nor $v$ are in $I$, then we map $z^e_0 \ldots z^e_k$ to $x_{1}\ldots x_{k-1}x_{k}x_{k-1}$.
		By the construction of $J$ it is easy to see, that $\sigma_I \in \Homs{J}{H}$ and $\IO{I}_{\sigma_I}=I$.
	
%
%
		
		\item 
		Let $\tau: J \to H$ be a homomorphism in $\IOclass{\sigma}$.
		We commence with proving that, for every edge $e\in E$,
		\begin{equation}\label{eq:number_homs_paths}
			|\,\Homs{(P_k^e,z^e_0,z^e_k)}{(H,\tau(z^e_0),\tau(z^e_k))}|\equiv 1\pmod p. 
		\end{equation}
		Let $r=|\,\Homs{(P_k^e,z^e_0,z^e_k)}{(H,\tau(z^e_0),\tau(z^e_k))}|$.
		Due to Lemma~\ref{lem:walks}, $r$ is equal to the number of $k$-walks in $H$ from $\tau(z^e_0)$ to $\tau(z^e_k)$.
		First, we observe that the assumption of $\tau(z^e_0)\in W_L$ and $\tau(z^e_k)\in W_R$ yields a contradiction as argued in the proof of property~$\mathit{3}$.
		Subsequently, we assume that $\tau(z^e_0)=x_1$ and $\tau(z^e_k)=x\in W_R$.
		Since $H$ is a tree, $x_1\dots x_k x$ is the only $k$-walk in $H$ between $x_1$ and $x$.
		Now, Lemma~\ref{lem:walks} yields $r=1$.
		Similarly, the assumption of $\tau(z^e_0)=x\in W_L$ and $\tau(z^e_k)=x_{k-1}$ yields $r=1$. 
		
		Finally, we assume $\tau(z^e_0)=x_1$ and $\tau(z^e_k)=x_{k-1}$ and consider the number of $k$-walks in $H$ between $x_1$ and $x_{k-1}$ denoted $W(x_1, x_{k-1}, k)$.
		We recall that $r = W(x_1, x_{k-1}, k)$.
		We denote by $\Path'=x_1 \dots x_{k-1}$ the subpath of $\Path[H][a,b,p]$ connecting $x_1$ and $x_{k-1}$.
		We  derive $\dist_H(x_1,x_{k-1})= k-2$, because $H$ is a tree and $Q'$ is the unique path in $H$ between $x_1$ and $x_{k-1}$.
		Furthermore, every $k$-walk in $H$ between $x_1$ and $x_{k-1}$ can be constructed from $\Path'$ by adding a walk of size $2$ to any vertex $x_i$ in $\Path'$.
		Therefore, every vertex $x_i$ yields one $k$-walk for every vertex in its neighbourhood.
		Since $H$ contains no cycles we only double-counted the walks entirely contained in $\Path'$.
		That is, for every vertex $x_i$ with $2 \leq i \leq k-1$ in $Q'$ the walk revisiting $x_{i-1}$ after reaching $x_i$.
		Removing every such walk once from the calculation yields
		\[
			W(x_1, x_{k-1}, k) = \sum_{i = 1}^{k-1} \deg_H (x_i) - (k - 2).
		\]
		Since $\Path[H][a,b,p]$ is an $(a,b,p)$-path, we obtain, for all $2 \leq i \leq k-1$, that $\deg_H(x_i)\equiv 1\pmod p$ yielding \eqref{eq:number_homs_paths}. 
		
		In order to show property~$\mathit{4}$, we note that also the set $\IO{O}_\sigma$ uniquely determines $\IOclass{\sigma}$.
		Therefore, for any homomorphism $\tau \in \IOclass{\sigma}$ the labelling of vertices in $\IO{O}_\sigma$ as well as $\hat{u}$ and $\hat{v}$ is fixed. 
		Concerning the vertices in $\IO{I}_\sigma$, due to property~$\mathit{1}$ $\tau$ maps a vertex $u \in \IO{I}_\sigma \cap \lpart$ to any vertex in $\lNeigh$ and a vertex $v \in \IO{I}_\sigma \cap \rpart$ to any vertex in $\rNeigh$.
		Due to Definition~\ref{def:gadget_wBIS_to_Homs} of $G(J)$ every vertex $z_0^e$ and $z_k^e$ is identified with a vertex in $\lpart$ and $\rpart$, respectively.
		Finally, due to \eqref{eq:number_homs_paths} once we have fixed a partial labelling $\tau$ of every vertex $z_0^e$ and $z_k^e$ the number of homomorphisms respecting $\tau$ from any path $P^e$ to $H$ is equivalent to $1$ modulo $p$. 
		This establishes the proof of
		\begin{equation*}
			|\IOclass{\sigma}| \equiv|\lNeigh|^{|\IO{I}_\sigma \cap \lpart|} |\rNeigh|^{|\IO{I}_\sigma \cap \rpart|}\pmod p. \qedhere
		\end{equation*}
%
	\end{enumerate}
\end{proof}

Finally, with the above properties at hand we show that the existence of an $(a,b,p)$-path in $H$ yields hardness for 
\partlabphcol{}.

\begin{lemma}\label{lem:Homs_trees_hardcase}
	Let $p$ be a prime and let $H$ be a graph with no automorphism of order~$p$.
	If there are $a, b \in \Z_p\setminus\{1\}$ such that $H$~has an $(a,b,p)$-path $\Path[H][a,b,p]$ then \phcol{} is $\numpp$-hard under Turing reductions.
\end{lemma}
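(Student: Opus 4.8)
The plan is to chain the two reductions announced in the introduction: first a reduction from $\pbislr{\lweight}{\rweight}$, for a carefully chosen pair of weights, to $\partlabphcol{}$ built from the gadget $J$ of Definition~\ref{def:gadget_wBIS_to_Homs} and analysed by Lemma~\ref{lem:tools_reduction_wBIS_to_Homs}, and then the pinning reduction of Theorem~\ref{thm:partlabcol} from $\partlabphcol{}$ to $\phcol{}$; hardness is then inherited from Theorem~\ref{thm:wbis-hardness}.

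First I would fix the weights. Let $\Path[H][a,b,p]=x_0\dots x_k$ be the $(a,b,p)$-path supplied by the hypothesis, and put $\lNeigh=\Gamma_H(x_0)-x_1$ and $\rNeigh=\Gamma_H(x_k)-x_{k-1}$, so that $|\lNeigh|=\deg_H(x_0)-1\equiv a-1\pmod p$ and $|\rNeigh|=\deg_H(x_k)-1\equiv b-1\pmod p$. Set $\lweight:=a-1$ and $\rweight:=b-1$ in $\zp$. Because an $(a,b,p)$-path requires $a,b\in\zp\setminus\{1\}$ (Definition~\ref{def:ab-paths}), both $\lweight$ and $\rweight$ are nonzero in $\zp$, so Theorem~\ref{thm:wbis-hardness} tells us that $\pbislr{\lweight}{\rweight}$ is $\numpp$-complete, and in particular $\numpp$-hard.

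Next I would give the reduction $\pbislr{\lweight}{\rweight}\to\partlabphcol{}$. Given a bipartite instance $G=\bipG$, build the partially $H$-labelled graph $J$ exactly as in Definition~\ref{def:gadget_wBIS_to_Homs} using $\Path[H][a,b,p]$; since $k$ depends only on the fixed graph $H$, the graph $J$ has $2+|\lpart|+|\rpart|+(k-1)|E(G)|$ vertices and is computable from $G$ in polynomial time. The claim is $|\Homs{J}{H}|\equiv\wISet{G}\pmod p$. To prove it, partition $\Homs{J}{H}$ into the classes of the relation identifying $\sigma,\sigma'$ when $\IO{I}_\sigma=\IO{I}_{\sigma'}$ (property~$\mathit{2}$ of Lemma~\ref{lem:tools_reduction_wBIS_to_Homs}) and choose representatives $\sigma_1,\dots,\sigma_\mu$. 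By property~$\mathit{3}$ we have $\{\IO{I}_{\sigma_i}\mid i\in[\mu]\}=\calI(G)$, with the $\IO{I}_{\sigma_i}$ pairwise distinct as they index distinct equivalence classes; by property~$\mathit{4}$ each class satisfies $|\IOclass{\sigma_i}|\equiv|\lNeigh|^{|\IO{I}_{\sigma_i}\cap\lpart|}\,|\rNeigh|^{|\IO{I}_{\sigma_i}\cap\rpart|}\pmod p$. Summing over the classes yields
\[
|\Homs{J}{H}|=\sum_{i=1}^{\mu}|\IOclass{\sigma_i}|
\equiv\sum_{I\in\calI(G)}|\lNeigh|^{|I\cap\lpart|}\,|\rNeigh|^{|I\cap\rpart|}
\equiv\sum_{I\in\calI(G)}\lweight^{|I\cap\lpart|}\,\rweight^{|I\cap\rpart|}
=\wISet{G}\pmod p,
\]
i.e.\ a polynomial-time (indeed parsimonious modulo~$p$, for fixed $H$) reduction to $\partlabphcol{}$. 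Composing with Theorem~\ref{thm:partlabcol} gives a polynomial-time Turing reduction from $\pbislr{\lweight}{\rweight}$ to $\phcol{}$, and since $\pbislr{\lweight}{\rweight}$ is $\numpp$-hard, so is $\phcol{}$.

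The substance of the argument lives in Lemma~\ref{lem:tools_reduction_wBIS_to_Homs}, which this proof only assembles; the only point requiring attention at this level is that the choice $\lweight=a-1$, $\rweight=b-1$ lands precisely in the hard régime of Theorem~\ref{thm:wbis-hardness}, which is exactly what the clause $a,b\ne1$ in the definition of an $(a,b,p)$-path secures. I would also expect a little care in matching conventions, since Lemma~\ref{lem:tools_reduction_wBIS_to_Homs} is stated for $H$ a tree: this is harmless here, both because the intended application of the present lemma is to trees and because the proof of that lemma only uses the uniqueness of the $(a,b,p)$-path together with Corollary~\ref{cor:distance} and the $k$-walk count of Lemma~\ref{lem:walks}.
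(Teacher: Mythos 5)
Your proposal is correct and follows essentially the same route as the paper: it fixes $\lweight=a-1$, $\rweight=b-1$, builds $J$ via Definition~\ref{def:gadget_wBIS_to_Homs}, sums over the $\sim_{\IO{I}}$-classes using properties $\mathit{2}$--$\mathit{4}$ of Lemma~\ref{lem:tools_reduction_wBIS_to_Homs} to get $|\Homs{J}{H}|\equiv\wISet{G}\pmod p$, and then composes with Theorem~\ref{thm:partlabcol} and Theorem~\ref{thm:wbis-hardness}. Your closing remark is also well-taken: the lemma is stated for an arbitrary graph $H$, but the paper's own proof invokes Lemma~\ref{lem:tools_reduction_wBIS_to_Homs}, whose hypothesis and proof use that $H$ is a tree (for unique-path and $k$-walk arguments), so the paper is implicitly relying on the same observation you make explicit.
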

\begin{proof}
	We will show that \pbislr{a-1}{b-1} reduces to \partlabphcol{} under 
	Turing reductions.
	Since $a,b\not\equiv 1\pmod p$, the lemma then results from the Theorems~\ref{thm:wbis-hardness} and \ref{thm:partlabcol}. 
	Let $\Path[H][a,b,p] = x_0 \dots x_k$ and the bipartite graph $G=(V_L,V_R,E)$ be the input for \pbislr{a-1}{b-1}.
	Additionally, let $J$ be the partially labelled graph constructed according to Definition~\ref{def:gadget_wBIS_to_Homs}.
	We observe that every condition of Lemma~\ref{lem:tools_reduction_wBIS_to_Homs} is satisfied.
	
	Let $\sigma_1 , \dots , \sigma_\mu$ be representatives from each $\rel{}{}$~-equivalence class as given by property~$\mathit{3}$ of Lemma~\ref{lem:tools_reduction_wBIS_to_Homs}.
	We obtain
	\begin{equation*}
		\Homs{J}{H} = \sum_{i=1}^\mu |\IOclass{\sigma_i}|, 
	\end{equation*}
	and by property~$\mathit{4}$, for every $i \in [\mu]$, $|\IOclass{\sigma_i}| \equiv|\lNeigh|^{|\IO{I}_ {\sigma_i}\cap \lpart|} |\rNeigh|^{|\IO{I}_{\sigma_i} \cap \rpart|}\pmod p$.
	Additionally, due to Definition~\ref{def:ab-paths} of an $(a,b,p)$-path $|\lNeigh| \equiv a-1 \pmod p$ and $|\rNeigh| \equiv b-1 \pmod p$.
	We deduce	
	\[
		\Homs{J}{H} \equiv  \sum_{i=1}^\mu (a-1)^{|\IO{I}_{\sigma_i} \cap \lpart|} (b-1)^{|\IO{I}_{\sigma_i} \cap \rpart|}\pmod p .
	\]
	Finally, we recall property~$\mathit{3}$ of Lemma~\ref{lem:tools_reduction_wBIS_to_Homs}, which yields the equality of the set $\{ \IO{I}_{\sigma_i} \mid i \in [\mu] \}$ with the set $\calI_G$ of independent sets of $G$.
	This yields
	\begin{align*}
		\Homs{J}{H} &\equiv \sum_{i=1}^\mu (a-1)^{|\IO{I}_{\sigma_i} \cap \lpart|} (b-1)^{|\IO{I}_{\sigma_i} \cap \rpart|}\pmod p \\
		&= \sum_{I \in \calI(G)}(a-1)^{|I \cap \lpart |}(b-1)^{|I \cap \rpart|}.
	\end{align*}
	The latter is exactly the definition of $\wISet{G}[a-1][b-1]$, which concludes the proof.
	\end{proof}

\section{Dichotomy theorems}
\label{sec:main-thm}
In this section we gather our results into the following dichotomy theorem.

{\renewcommand{\thetheorem}{\ref{thm:modp-trees}}
\begin{theorem}
Let $p$ be a prime and let $H$ be a graph, such that its order~$p$ reduced form~$H^{*p}$ is a tree. If $H^{*p}$ is a 
star, then \phcol{} is computable in polynomial time; otherwise, \phcol{} is $\numpp$-complete.
\end{theorem}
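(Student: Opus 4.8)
The plan is to reduce the whole statement to facts about the order~$p$ reduced form $H^{*p}$ and then quote the results established in the previous sections. The starting observation is that, applying Theorem~\ref{thm:involution-reduction} iteratively along a chain $H\isoto H_1\Rightarrow_p\cdots\Rightarrow_p H_t\isoto H^{*p}$ (which exists and terminates in a unique graph by Theorem~\ref{thm:involution-unique}), one has $|\Homs{G}{H}|\equiv|\Homs{G}{H^{*p}}|\pmod p$ for \emph{every} input graph $G$. Hence $\phcol{}$ and $\phcol[H^{*p}]$ compute the same function of $G$, so each is polynomial-time Turing reducible to the other; it therefore suffices to determine the complexity of $\phcol[H^{*p}]$. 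Moreover $H^{*p}$, being a tree, is loopless, and by Definition~\ref{defn:reduced-form} it has no automorphism of order~$p$.

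In the tractable case, $H^{*p}$ is a star $K_{1,k}$. A star is connected and complete bipartite (the degenerate case $k=0$ being a single vertex), so every component of $H^{*p}$ is complete bipartite and Corollary~\ref{cor:polyt-graphs} shows that $\phcol[H^{*p}]$, hence $\phcol{}$, is computable in polynomial time.

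In the hard case, $H^{*p}$ is a tree that is not a star. Since it has no automorphism of order~$p$, Lemma~\ref{lem:tree_star_or_hard} yields $a,b\in\Z_p\setminus\{1\}$ such that $H^{*p}$ contains an $(a,b,p)$-path. Applying Lemma~\ref{lem:Homs_trees_hardcase} to the tree $H^{*p}$ (whose hypotheses are met, since it is a tree with no order-$p$ automorphism containing an $(a,b,p)$-path) then gives that $\phcol[H^{*p}]$ is $\numpp$-hard under Turing reductions, and by the equivalence from the first paragraph the same holds for $\phcol{}$. For the upper bound, $|\Homs{G}{H}|$ is a $\shp$ function of $G$ with the tree $H$ fixed, so $\phcol{}\in\numpp$ by definition of the class; combined with hardness this gives $\numpp$-completeness. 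I do not expect a genuine obstacle in this final step: the substantive work --- the pinning reduction (Theorem~\ref{thm:partlabcol}), the weighted bipartite independent set dichotomy (Theorem~\ref{thm:wbis-hardness}), and the extraction of an $(a,b,p)$-path from a non-star tree (Lemma~\ref{lem:tree_star_or_hard}) --- is already done, and the only point deserving care is precisely the transfer of both tractability and hardness between $H$ and $H^{*p}$ through the Faben--Jerrum collapse, which is exactly why the reduced form is the correct object to analyse.
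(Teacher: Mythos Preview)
Your proposal is correct and follows essentially the same route as the paper: transfer the question to $H^{*p}$ via the Faben--Jerrum reduction (Theorem~\ref{thm:involution-reduction}), invoke Corollary~\ref{cor:polyt-graphs} for the star case, and combine Lemma~\ref{lem:tree_star_or_hard} with Lemma~\ref{lem:Homs_trees_hardcase} for the non-star case. If anything, your write-up is slightly more explicit than the paper's in justifying that the reduction transfers complexity in both directions and in noting membership in $\numpp$.
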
}
\begin{proof}
Let $p$ be a prime and $H$ be a graph, such that its order~$p$ reduced form~$H^{*p}$ is a tree. If $H^{*p}$ is a 
complete bipartite graph, then Corollary~\ref{cor:polyt-graphs} yields that $\kkhcol{p}{H^{*p}}$ can be computed in polynomial time.
We note that in this case, $H^{*p}$ has to be a star.
Otherwise, $H^{*p}$ is not a star and by Lemma~\ref{lem:tree_star_or_hard}, $H^{*p}$ contains an $(a,b,p)$-path. 
Lemma~\ref{lem:Homs_trees_hardcase} shows that $\kkhcol{p}{H^{*p}}$ is $\numpp$-hard. The theorem then follows from 
Theorem~\ref{thm:involution-reduction}.
\end{proof}

To justify our title, we use the following proposition showing that our dichotomy theorem holds for all trees. In \cite[Section~5.3]{FJ13} this was stated as an obvious fact, however for the sake of completeness we provide a formal proof.

\begin{proposition}\label{prop:trees_under_automorphisms}
	Let $H$ be a tree and $\varrho$ an automorphism of $H$. Then the subgraph $H^\varrho$ of $H$ induced by the fixed points of $\varrho$ is also a tree.
\end{proposition}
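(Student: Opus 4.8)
The plan is to show that $H^\varrho$ is connected and acyclic, since a tree is by definition a connected acyclic graph. Acyclicity is immediate: $H^\varrho$ is an induced subgraph of the tree $H$, and any cycle in $H^\varrho$ would be a cycle in $H$, contradicting that $H$ is a tree. So the real content is connectivity of $H^\varrho$. If $H^\varrho$ is empty this is vacuous (and in fact it cannot be empty for a finite tree, but we do not even need that), so assume it has at least one vertex.

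To prove connectivity, I would take two fixed points $u, v \in V(H^\varrho)$ and show that the unique path $P = w_0 w_1 \dots w_\ell$ between them in $H$ (unique because $H$ is a tree) consists entirely of fixed points of $\varrho$; this path then lies in the induced subgraph $H^\varrho$ and witnesses that $u$ and $v$ are connected there. The key observation is that $\varrho$ is a graph automorphism, so it maps paths to paths: $\varrho(P) = \varrho(w_0)\varrho(w_1)\dots\varrho(w_\ell)$ is again a path in $H$. Since $w_0 = u$ and $w_\ell = v$ are fixed, $\varrho(P)$ is a path from $u$ to $v$ of the same length $\ell$. By uniqueness of paths between two vertices in a tree, $\varrho(P) = P$ as a path, hence $\varrho(w_i) = w_i$ for every $i \in \{0,\dots,\ell\}$. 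Therefore every vertex of $P$ is a fixed point, so $P \subseteq H^\varrho$, which gives connectivity.

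One technical point to be careful about: a graph automorphism could in principle reverse the path, i.e. send $w_i \mapsto w_{\ell-i}$. But that would force $\varrho(w_0) = w_\ell$, i.e. $u = v$ (since $w_0 = u$ is fixed), contradicting $u \neq v$; if $u = v$ there is nothing to prove. So in the relevant case $u \neq v$ the orientation is preserved and the conclusion $\varrho(w_i) = w_i$ follows. I do not anticipate any serious obstacle here — the proof is essentially the remark that automorphisms preserve the (unique) tree-distance structure, combined with the observation that a length-preserving path between two fixed endpoints in a tree must be the path itself. The only thing worth spelling out carefully is the uniqueness-of-paths argument and the orientation subtlety just mentioned.
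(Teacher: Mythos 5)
Your proof is correct and, while resting on the same core fact (uniqueness of paths in a tree), takes a cleaner route than the paper's. Both arguments reduce the claim to showing that the unique $u$--$v$ path $P$ in $H$ lies entirely inside $H^\varrho$ whenever $u$ and $v$ are fixed points. The paper proceeds by contradiction: it assumes $H^\varrho$ is disconnected, locates a ``first failing'' pair of adjacent vertices $w,z$ along $P$ (where $w$ is fixed and connected to $u$ in $H^\varrho$ but $z$ is not), and then appeals to $\varrho$ preserving edges to force a contradiction; this last step is written rather tersely. Your argument is direct and more transparent: $\varrho(P)$ is a path of the same length from $\varrho(u)=u$ to $\varrho(v)=v$, so by uniqueness $\varrho(P)=P$, and since the endpoints are fixed (not swapped, as $u\neq v$), every vertex of $P$ is fixed, giving $P\subseteq H^\varrho$. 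Your handling of the reversal subtlety is fine, and in fact is already forced once one notes that $\varrho(w_0)=u=w_0$ pins down the orientation. The only inaccuracy is the parenthetical claim that $H^\varrho$ cannot be empty for a finite tree: this is false --- for instance the order-$2$ automorphism of $P_4$ swapping $w_1\leftrightarrow w_4$ and $w_2\leftrightarrow w_3$ has no fixed points, an example the paper itself uses in Section~\ref{sec:composites} --- but as you observe, the statement is vacuous in that case and the proof is unaffected.
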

\begin{proof}
	Let $H$ be a tree and $\varrho$ an automorphism of $H$. $H^\varrho$ is a subgraph of $H$, so it suffices to argue for the connectivity of $H^\varrho$.
	Towards a contradiction we assume that $H^\varrho$ is not connected.
	Thus, there	exist two vertices $u,v \in V(H)$, whose image $\varrho(u)$, $\varrho(v)$ are disconnected in $H^\varrho$.
	However, since $H^\varrho$ only contains the fixed points under $\varrho$ we obtain $\varrho(u) = u$ and $\varrho(v)=v$.
	Therefore, there have to exist adjacent vertices $w$, $z$ on the unique path $P$ in $H$ from $u$ to $v$, for which $\varrho(w)$ is connected to $\rho(u)$ but $\rho(z)$ is not.
	The assumption that $\varrho(z)$ is not connected to $u$ results into $\varrho(z)$ not being connected to $\varrho(w)$. This is a contradiction as $\varrho$ has to preserve edges.
\end{proof}

The claim implies that if $H$ is a tree, then its order~$p$ reduced form~$H^{*p}$ is also a tree. This 
yields the following corollary.

{\renewcommand{\thetheorem}{\ref{cor:modp-trees}}
\begin{corollary}
Let $p$ be a prime and let $H$ be a tree. If the order~$p$ reduced form~$H^{*p}$ of $H$ is a star, then \phcol{} is 
computable in polynomial time; otherwise, \phcol{} is $\numpp$-complete.
\end{corollary}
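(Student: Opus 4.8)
The plan is to derive this corollary directly from Theorem~\ref{thm:modp-trees}, whose hypothesis is precisely that the order~$p$ reduced form of the input graph is a tree. Consequently, the only thing that needs to be verified is that, when $H$ itself is a tree, its order~$p$ reduced form $H^{*p}$ is again a tree. Once that is established, Theorem~\ref{thm:modp-trees} applies verbatim and yields exactly the claimed dichotomy: \phcol{} is polynomial time computable when $H^{*p}$ is a star, and $\numpp$-complete otherwise.

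To see that $H^{*p}$ is a tree, I would unwind the definition of $H \Rightarrow_p^* H^{*p}$. By Theorem~\ref{thm:involution-unique} there is a finite chain $H \isoto H_1 \Rightarrow_p H_2 \Rightarrow_p \cdots \Rightarrow_p H_k \isoto H^{*p}$, where each step $H_i \Rightarrow_p H_{i+1}$ means that $H_{i+1} = H_i^\varrho$ for some automorphism $\varrho$ of $H_i$ of order~$p$. I would then argue by induction along this chain. For the base case, $H_1 \isoto H$ is a tree, since an isomorphism preserves connectivity and the absence of cycles. For the inductive step, if $H_i$ is a tree then so is $H_{i+1} = H_i^\varrho$, by Proposition~\ref{prop:trees_under_automorphisms}. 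Hence $H_k$, and therefore $H^{*p} \isoto H_k$, is a tree, and applying Theorem~\ref{thm:modp-trees} concludes the argument.

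There is no genuinely hard step here: the whole content is the observation that the class of trees is closed both under isomorphism and under the operation $H \mapsto H^\varrho$ for order~$p$ automorphisms $\varrho$, the latter being exactly Proposition~\ref{prop:trees_under_automorphisms}. The only point requiring a little care — and the closest thing to an obstacle — is that the reduction $H \Rightarrow_p^* H^{*p}$ terminates after finitely many steps, so that the induction along the chain is legitimate; this is guaranteed by Theorem~\ref{thm:involution-unique}, and can also be seen directly from the fact that each $\Rightarrow_p$ step strictly decreases the number of vertices whenever $\varrho$ is not the identity.
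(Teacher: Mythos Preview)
Your proposal is correct and follows essentially the same route as the paper: you observe that the class of trees is closed under $H \mapsto H^\varrho$ (this is Proposition~\ref{prop:trees_under_automorphisms}), conclude by induction along the reduction chain that $H^{*p}$ is a tree, and then invoke Theorem~\ref{thm:modp-trees}. The paper's argument is identical, only stated more tersely.
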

}

To deal with disconnected graphs, Faben and Jerrum~\cite[Theorem~6.1]{FJ13} show the following theorem.
\begin{theorem}[Faben and Jerrum]\label{thm:Fab_Jer_disconnected}
 Let $H$ be a graph that has no automorphism of order~2. If $H'$ is a connected component of $H$ and \parhcol[H'] is $\parp$-hard, then \parhcol{} is $\parp$-hard.
\end{theorem}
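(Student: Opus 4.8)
The plan is to exhibit a polynomial-time Turing reduction from \parhcol[H'] to \parhcol{}; together with the assumed $\parp$-hardness of \parhcol[H'] and transitivity of Turing reductions this gives the theorem. The driving observation is that pinning lets us confine a homomorphism to the component~$H'$: if $G$ is connected and $\sigma\colon G\to H$ sends some fixed vertex of~$G$ into~$V(H')$, then $\sigma(G)$ is connected and therefore lies entirely in the component~$H'$, so $\sigma$ is in fact a homomorphism from~$G$ to~$H'$.

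First I would reduce to connected inputs at no cost. Given an instance~$G$ of \parhcol[H'], compute its connected components $G_1,\dots,G_m$ in polynomial time; since a homomorphism to~$H'$ is an independent choice on each component, $|\Homs{G}{H'}|=\prod_{j=1}^{m}|\Homs{G_j}{H'}|$, so it suffices to compute each $|\Homs{G_j}{H'}|$ modulo~$2$ (the empty graph is handled trivially, with answer~$1$). For each $j$ fix a vertex $u_j\in V(G_j)$ and partition the homomorphisms from~$G_j$ to~$H'$ by the image of~$u_j$:
\[
|\Homs{G_j}{H'}|=\sum_{x\in V(H')}|\Homs{(G_j,u_j)}{(H',x)}|=\sum_{x\in V(H')}|\Homs{(G_j,u_j)}{(H,x)}|,
\]
the second equality being exactly the confinement observation above (using connectivity of~$G_j$). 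Each summand equals $|\Homs{J_{j,x}}{H}|$ for the partially $H$-labelled graph $J_{j,x}=(G_j,\{u_j\mapsto x\})$, i.e.\ an instance of \partlabparhcol{}.

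It then remains only to evaluate these \partlabparhcol{}-instances. Here I would invoke Theorem~\ref{thm:partlabcol} with $p=2$ and target graph~$H$ (its proof goes through the orbit/implementable-vector machinery of Section~\ref{sec:pinning}, which requires $H$ to have no automorphism of order~$2$ — precisely our hypothesis), obtaining a polynomial-time Turing reduction from \partlabparhcol{} to \parhcol{}. Chaining everything together, $|\Homs{G_j}{H'}|\equiv\sum_{x\in V(H')}|\Homs{J_{j,x}}{H}|\pmod 2$, each term computable from polynomially many queries to a \parhcol{} oracle, and finally $|\Homs{G}{H'}|\bmod 2$ is recovered as the product over~$j$. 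The whole procedure runs in polynomial time with a \parhcol{} oracle, so \parhcol[H'] reduces to \parhcol{} and the latter is $\parp$-hard.

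I expect the only genuine subtlety to be the invocation of Theorem~\ref{thm:partlabcol}: one must be sure that it may be applied with a \emph{disconnected} target~$H$ and that the hypothesis "no automorphism of order~$2$" — the one thing the pinning argument really needs — is indeed in force here. Everything else (the component decomposition, partitioning by the image of~$u_j$, and the confinement of connected images to a single component of~$H$) is routine, so the real content of the proof is the reuse of the pinning theorem already established in Section~\ref{sec:pinning}.
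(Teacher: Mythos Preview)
Your argument is correct and matches the approach the paper attributes to Faben and Jerrum: the paper does not reprove the theorem but merely cites it, remarking that ``the only part where the value~$2$ of the modulo is required is the application of their pinning theorem''---and that is precisely the ingredient your reduction hinges on, via Theorem~\ref{thm:partlabcol}. Your decomposition into connected components of the input, the confinement observation, and the partition by the image of a fixed vertex are exactly the routine steps one expects, so your reconstruction is on target.
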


The only part where the value 2 of the modulo is required, is the application of their pinning theorem~\cite[Theorem~4.7]{FJ13}. 
Since we have already shown the more general Theorem~\ref{thm:partlabcol}, we conclude that the theorem holds in the 
following form.

\begin{theorem}\label{thm:disconnected}
 Let $p$ be a prime and let $H$ be a graph that has no automorphism of order~$p$. If $H_1$ is a connected component of 
$H$ and \phcol[H_1] $\numpp$-hard, then \phcol{} is $\numpp$-hard.
\end{theorem}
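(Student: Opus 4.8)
The plan is to replay the proof of Faben and Jerrum's Theorem~\ref{thm:Fab_Jer_disconnected} (\cite[Theorem~6.1]{FJ13}) almost verbatim, the only change being that every appeal to their pinning theorem \cite[Theorem~4.7]{FJ13}, which is stated for the modulus~$2$ only, is replaced by an appeal to our Theorem~\ref{thm:partlabcol}, which holds for every prime. Concretely, it suffices to construct a polynomial-time Turing reduction from \phcol[H_1] to \phcol{}, and, since Theorem~\ref{thm:partlabcol} already reduces \partlabphcol{} to \phcol{}, it is enough to reduce \phcol[H_1] to \partlabphcol{}.

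First I would reduce \phcol[H_1] on arbitrary inputs to \phcol[H_1] on \emph{connected} inputs: if $G$ has connected components $C_1,\dots,C_m$, then $|\Homs{G}{H_1}|=\prod_{j=1}^m|\Homs{C_j}{H_1}|$, so the general case follows by handling each $C_j$ separately and taking a product modulo~$p$. For a connected input~$G$, fix an arbitrary vertex $u\in V(G)$ and, for each $v\in V(H_1)$, form the partially $H$-labelled graph $J_v=(G,\{u\mapsto v\})$. Since $G$ is connected and the homomorphic image of a connected graph is connected, every homomorphism $\sigma\colon G\to H$ with $\sigma(u)=v$ has image inside the component~$H_1$, so $|\Homs{J_v}{H}|=|\Homs{(G,u)}{(H_1,v)}|$. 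As every homomorphism from $G$ to~$H_1$ sends $u$ to exactly one vertex of~$H_1$,
\[
\sum_{v\in V(H_1)}|\Homs{J_v}{H}|=\sum_{v\in V(H_1)}|\Homs{(G,u)}{(H_1,v)}|=|\Homs{G}{H_1}|.
\]
Each summand $|\Homs{J_v}{H}|$ modulo~$p$ is computed by invoking the reduction of Theorem~\ref{thm:partlabcol}, which issues polynomially many \phcol{} queries; since $|V(H_1)|$ is a constant (as $H$ is fixed) and decomposing~$G$ into components is polynomial, the composite procedure is a polynomial-time Turing reduction. Hence $\numpp$-hardness of \phcol[H_1] transfers to \phcol{}.

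The hypothesis that $H$ has no automorphism of order~$p$ is inherited from the statement of \cite[Theorem~6.1]{FJ13}; in the argument above it is needed only where Theorem~\ref{thm:partlabcol} and the orbit-vector machinery of Section~\ref{sec:pinning} are invoked, and every other step is insensitive to the value of the modulus. Accordingly, the step deserving the most care is exactly this substitution: one must be sure that the pinning reduction of Section~\ref{sec:pinning}, unlike the mod-$2$ pinning theorem of~\cite{FJ13}, never relies on the modulus being~$2$ --- which is precisely what Theorem~\ref{thm:partlabcol} guarantees. Beyond connectivity of homomorphic images and this generalised pinning, no further ingredient is required.
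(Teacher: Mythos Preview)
Your proposal is correct and follows exactly the route the paper indicates: the paper does not spell out a fresh proof but simply notes that Faben and Jerrum's argument for Theorem~\ref{thm:Fab_Jer_disconnected} goes through once their mod-$2$ pinning theorem is replaced by Theorem~\ref{thm:partlabcol}, and you have written out what that argument actually is (decompose $G$ into connected components, pin a vertex of each component into~$H_1$, and sum). One small remark: Theorem~\ref{thm:partlabcol} as stated in this paper does not carry the hypothesis ``no automorphism of order~$p$'' in its statement, though its proof in Section~\ref{sec:pinning} does use it; your closing paragraph correctly identifies where that hypothesis enters.
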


The latter strengthens Theorem~\ref{thm:modp-trees} to the following version.

\begin{theorem}\label{thm:modp_forests}
 Let $H$ be a graph whose order~$p$ reduced form~$H^{*p}$ is a forest. If every component of $H^{*p}$ is a star, \phcol{} is computable in 
polynomial time, otherwise \phcol{} is $\numpp$-complete.
\end{theorem}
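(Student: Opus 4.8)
The plan is to reduce the statement to the already-established results on connected trees by passing to the order~$p$ reduced form and then splitting off a single bad component.

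For the tractable direction, suppose every component of $H^{*p}$ is a star. Since every star $K_{1,k}$ is a complete bipartite graph, every component of $H^{*p}$ is complete bipartite, so Corollary~\ref{cor:polyt-graphs} immediately gives that \phcol{} is computable in polynomial time. In all cases membership in $\numpp$ is clear, as $|\Homs{G}{H}|$ is an $\shp$ function.

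For the hardness direction, assume $H^{*p}$ is a forest with at least one component $C$ that is not a star. The first step I would take is the small observation that \emph{no} component of $H^{*p}$ has an automorphism of order~$p$: given an automorphism of order~$p$ of some component, extending it by the identity on the remaining vertices produces an automorphism of $H^{*p}$ of order~$p$, contradicting the defining property of the reduced form (Theorem~\ref{thm:involution-unique}). Hence $C$ is a tree, is not a star, and has no automorphism of order~$p$, so Lemma~\ref{lem:tree_star_or_hard} yields $a,b\in\zp\setminus\{1\}$ with $C$ containing an $(a,b,p)$-path. Lemma~\ref{lem:Homs_trees_hardcase} then shows that $\kkhcol{p}{C}$ is $\numpp$-hard, and since $H^{*p}$ has no automorphism of order~$p$ and $C$ is one of its connected components, Theorem~\ref{thm:disconnected} lifts this to $\numpp$-hardness of $\kkhcol{p}{H^{*p}}$.

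It then remains to transfer hardness from $H^{*p}$ to $H$. Since $H\Rightarrow_p^* H^{*p}$, there are graphs with $H\isoto H_1\Rightarrow_p\cdots\Rightarrow_p H_t\isoto H^{*p}$, and applying Theorem~\ref{thm:involution-reduction} once per step gives $|\Homs{G}{H}|\equiv|\Homs{G}{H^{*p}}|\pmod p$ for every input~$G$. Thus \phcol{} and $\kkhcol{p}{H^{*p}}$ are the same problem up to a trivial identity, so \phcol{} is $\numpp$-hard, and together with membership, $\numpp$-complete. I expect no real obstacle here: all of the substantial work --- the weighted bipartite independent set dichotomy, the pinning theorem, the $(a,b,p)$-path gadget, and the disconnected-graph reduction --- is already in place, and the only new ingredient is the elementary remark that order-$p$ automorphisms of $H^{*p}$ cannot hide inside a single component.
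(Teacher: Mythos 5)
Your proof is correct and follows essentially the same route the paper intends: apply Theorem~\ref{thm:disconnected} to lift the single-tree hardness (Corollary~\ref{cor:polyt-graphs}, Lemma~\ref{lem:tree_star_or_hard}, Lemma~\ref{lem:Homs_trees_hardcase}) from a non-star component to $H^{*p}$, then transfer to $H$ via Theorem~\ref{thm:involution-reduction}. The ``small observation'' you add --- that a component of $H^{*p}$ cannot itself carry an order-$p$ automorphism, since extending by the identity would contradict $H^{*p}$ being reduced --- is a detail the paper leaves implicit but which is genuinely needed before Lemma~\ref{lem:tree_star_or_hard} and Lemma~\ref{lem:Homs_trees_hardcase} can be applied to $C$, so making it explicit is a small improvement rather than a deviation.
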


\section{Composite Numbers}
\label{sec:composites}
We investigate counting homomorphisms modulo a composite integer $k$ and observe that we may restrict our attention to powers of primes.
With this arises the natural question, whether $\#_{p^r}\prb{HomsTo}H$ being computable in polynomial time is equivalent 
to $\#_{p}\prb{HomsTo}H$ being computable in polynomial time, where $p$ is a prime and $r$ a positive integer.
We answer this question negatively, by presenting a graph $H$ for which $\#_2\prb{HomsTo}H$ is computable in polynomial time, while $\#_4\prb{HomsTo}H$ is $\parp$-hard.
This contrasts results by Guo, Huang, Lu and Xia~\cite{GHLX11} on counting constraint satisfaction problems modulo an 
integer. Guo et al. observed that, for every prime $p$ and integer $r$, $\kcsp[{p^r}]$ is computable in polynomial 
time if an only if \pcsp{} is computable in polynomial time.

\medskip
In order to study the complexity of $\#_k\prb{HomsTo}H$ for composite integers $k$, we will use the Chinese 
remainder theorem. Recall that integers $k_1$ and $k_2$ are said to be \emph{relatively prime}, if their only common 
divisor is 1.

\begin{theorem}[Chinese remainder theorem]\label{thm:chinese-remainder}
Let $\{k_i\}_{i=1}^m$ be a pairwise relatively prime family of positive integers, and let $a_1,\dots,a_m$ be arbitrary integers.
Then there exists a solution $a\in\ints$ to the system of congruences \[a\equiv a_i \pmod {k_i}  \quad\quad\quad(i=1,\dots,m).\]
Moreover, any $a'\in\ints$ is a solution to this system of congruences if and only if $a\equiv a'\pmod k$, where $k=\prod_{i=1}^m k_i$.
\end{theorem}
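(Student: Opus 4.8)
The plan is to follow the classical two-step proof: first exhibit an explicit solution to the system, and then describe exactly which integers solve it.

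For existence, I would fix $i\in[m]$ and put $K_i = k/k_i = \prod_{j\neq i} k_j$. Because the family $\{k_j\}$ is pairwise relatively prime, $K_i$ and $k_i$ share no prime factor, so $\gcd(K_i,k_i)=1$; the step ``a product of integers each coprime to $k_i$ is itself coprime to $k_i$'' follows by a short induction from Euclid's lemma. Bézout's identity then supplies integers $u_i,v_i$ with $u_iK_i + v_ik_i = 1$, so $e_i := u_iK_i$ satisfies $e_i\equiv 1\pmod{k_i}$ and $e_i\equiv 0\pmod{k_j}$ for every $j\neq i$. I would then set $a := \sum_{i=1}^m a_i e_i$ and check that $a\equiv a_i e_i\equiv a_i\pmod{k_i}$ for each $i$; replacing $a$ by $a + Nk$ for $N$ large enough puts the solution in $\ints$.

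For the ``moreover'' part, suppose $a$ and $a'$ both solve the system. Then $k_i \mid (a - a')$ for all $i$, and I would promote this to $k\mid (a-a')$ via the elementary fact that if $b\mid n$, $c\mid n$ and $\gcd(b,c)=1$, then $bc\mid n$ (write $n = bt$; then $c\mid bt$ with $\gcd(b,c)=1$ forces $c\mid t$ by Euclid's lemma, hence $bc\mid n$). Iterating this over $i=1,\dots,m$ and using pairwise coprimality gives $k = \prod_i k_i \mid (a-a')$, i.e.\ $a\equiv a'\pmod k$. Conversely, if $a\equiv a'\pmod k$, then since $k_i\mid k$, reducing modulo each $k_i$ gives $a'\equiv a\equiv a_i\pmod{k_i}$, so $a'$ is again a solution.

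I do not expect a genuine obstacle: the whole argument is standard elementary number theory. The only point needing a little care is that the same lemma — coprime divisors may be multiplied together as divisors of a common multiple — is invoked twice, once to see $\gcd(K_i,k_i)=1$ and once to pass from the congruences modulo the individual $k_i$ to a single congruence modulo their product $k$; both instances reduce to Euclid's lemma together with an induction on $m$.
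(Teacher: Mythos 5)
The paper does not give its own proof of this statement: it cites the Chinese Remainder Theorem to a textbook (\cite[Theorem~17, Chapter~7]{AlgebraBook}) and moves on, since it is a classical fact used only as a tool in Lemma~\ref{lem:compositeintegers}. Your proposal is a correct and complete rendering of the standard classical argument, via the idempotent-like elements $e_i = u_i K_i$ coming from B\'ezout, together with the coprime-divisor lemma for the uniqueness clause. Two small points are worth noting. First, you rightly spotted that the paper's macro $\ints$ is defined as $\mathbb{N}$, so shifting by a multiple of $k$ is genuinely needed to land in the stated codomain; that detail is easy to overlook. Second, the coprime-divisor lemma you invoke twice (once to get $\gcd(K_i,k_i)=1$, once to combine divisibilities) does need the induction on $m$ you mention, since Euclid's lemma alone only handles two factors at a time; you have flagged this correctly. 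There is no gap.
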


For a proof, see, e.g., \cite[Theorem~17, Chapter~7]{AlgebraBook}.

\begin{lemma}\label{lem:compositeintegers}
Let $k\in\Z_{>0}$ be an integer and $\prod_{i=1}^m k_i$ with $k_i=p_i^{r_i}$ its prime factorisation with primes $p_1,\ldots,p_m$ and positive integers $r_1,\ldots,r_m \in \Z_{>0}$.
If \khcol{} can be solved in polynomial time, then for each $i\in[m]$, $\#_{k_i}\prb{HomsTo}H$ can also be solved in polynomial time.
\end{lemma}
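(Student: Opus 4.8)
The plan is to exhibit, for each fixed $i\in[m]$, a polynomial-time algorithm for $\#_{k_i}\prb{HomsTo}H$ that makes a single call to the hypothesised polynomial-time algorithm for \khcol{}. The only number-theoretic input needed is the elementary fact that each $k_i$ divides $k$; in particular the Chinese remainder theorem (Theorem~\ref{thm:chinese-remainder}) plays no role in this direction and will be invoked only for the converse.

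Concretely, I would proceed as follows. Fix $i\in[m]$ and let $G$ be an arbitrary instance of $\#_{k_i}\prb{HomsTo}H$. Note that $G$ is also a legal instance of \khcol{}. Running the assumed polynomial-time algorithm for \khcol{} on $G$ returns an integer $N$ with $0\le N<k$ and $N\equiv |\Homs{G}{H}|\pmod k$. Since $k=\prod_{j=1}^m k_j$ we have $k_i\mid k$, hence $k_i$ divides $N-|\Homs{G}{H}|$, i.e. $N\equiv|\Homs{G}{H}|\pmod{k_i}$. Therefore the algorithm that runs the \khcol{} subroutine and then outputs $N\bmod k_i$ is correct for $\#_{k_i}\prb{HomsTo}H$.

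It remains only to observe that this runs in polynomial time: the call to the \khcol{} algorithm is polynomial by hypothesis, and the final reduction of the $O(\log k)=O(1)$-bit integer $N$ modulo $k_i$ takes constant time, since $k$ (and hence each $k_i$) is a fixed constant independent of the input. There is essentially no obstacle here --- this is exactly the observation flagged in Section~\ref{sec:intro-composites}, and the entire content of the proof is the divisibility $k_i\mid k$ together with the fact that a congruence modulo $k$ descends to a congruence modulo any divisor of $k$. The substance of Section~\ref{sec:composites} lies instead in the \emph{failure} of the reverse implication, which is treated afterwards.
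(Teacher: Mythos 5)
Your proof is correct and takes essentially the same approach as the paper: the forward direction is just the observation that $k_i\mid k$, so the output of the \khcol{} algorithm reduced modulo $k_i$ is the answer. You are also right that the Chinese remainder theorem is only used for the converse remark, not for the statement as written.
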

\begin{proof}
Since $k_i$ is a factor of $k$ we take the solution of \khcol{} modulo~$k_i$ and obtain a solution for $\#_{k_i}\prb{HomsTo}H$.
From the Chinese remainder theorem, (Theorem~\ref{thm:chinese-remainder}) the converse is also true:
if for each $i\in[m]$ we can solve $\#_{k_i}\prb{HomsTo}H$ in polynomial time, then we can also solve \khcol{} in polynomial time.
\end{proof}

With this lemma in mind, the subsequent question is, whether $\#_{p^r}\prb{HomsTo}H$ is computable in polynomial time if and only if $\#_{p}\prb{HomsTo}H$ is computable in polynomial time.
Clearly, the first argument in the proof of Lemma~\ref{lem:compositeintegers} shows that if \khcol{} is computable 
in polynomial time then so is \phcol{}, as we can apply the modulo~$p$ operation to a solution of an instance of 
\khcol{}. We will show, by counterexample, that the reverse implication does not hold. Namely we show that for the 
$4$-path $P_4$, $\#_2\prb{HomsTo}P_4$ is computable in polynomial time, while $\#_4\prb{HomsTo}P_4$ is 
$\parp$-hard.
 
\begin{lemma}
Let $P_4$ denote the path $w_1w_2w_3w_4$.
Then $\#_2\prb{HomsTo}P_4$ is computable in polynomial time.
\end{lemma}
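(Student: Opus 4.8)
The plan is to exploit the order-$2$ automorphism of $P_4$ together with the Faben--Jerrum reduction. First I would record that, by its degree sequence ($1,2,2,1$), the path $P_4=w_1w_2w_3w_4$ has automorphism group $\Aut(P_4)=\{\mathrm{id},\varrho\}$, where $\varrho$ is the reflection with $\varrho(w_1)=w_4$, $\varrho(w_2)=w_3$, $\varrho(w_3)=w_2$, $\varrho(w_4)=w_1$. This $\varrho$ is an automorphism of order $2$ and it has no fixed point, so $P_4^{\varrho}$ is the empty graph (on zero vertices), which trivially has no automorphism of order~$2$. Hence $P_4 \Rightarrow_2 P_4^{\varrho}$, and the order~$2$ reduced form $P_4^{*2}$ is the empty graph.

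Next I would apply Theorem~\ref{thm:involution-reduction}: for every input graph $G$ we get $|\Homs{G}{P_4}| \equiv |\Homs{G}{P_4^{\varrho}}| \pmod 2$. Since $P_4^{\varrho}$ has no vertices, $|\Homs{G}{P_4^{\varrho}}|$ equals $1$ when $V(G)=\varnothing$ and $0$ otherwise, which is computable in constant time; so $\#_2\prb{HomsTo}P_4$ is computable in polynomial time. Equivalently, one may simply invoke Corollary~\ref{cor:polyt-graphs}, whose hypothesis (``every component of $P_4^{*2}$ is a complete bipartite graph'') holds vacuously for the empty graph, to reach the same conclusion.

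There is essentially no obstacle here. The only point requiring (minimal) care is to confirm that the reduced form is genuinely the empty graph and not $K_1$ --- that is, that the unique nontrivial automorphism of $P_4$ fixes no vertex --- and to note that the Faben--Jerrum machinery applies when the reduced form has zero vertices, which it does. The substance of this section lies entirely in the companion claim that $\#_4\prb{HomsTo}P_4$ is $\parp$-hard, where the modulo-$2$ cancellation supplied by $\varrho$ is no longer available.
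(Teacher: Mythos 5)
Your proof is correct and follows the same route as the paper: identify the fixed-point-free reflection $\varrho$ of order $2$, conclude that $P_4^{*2}$ is the empty graph, and invoke the Faben--Jerrum reduction (equivalently Corollary~\ref{cor:polyt-graphs}). The extra remark distinguishing the empty graph from $K_1$ is a nice sanity check but is not needed beyond what the paper already states.
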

\begin{proof}
The function $\rho=\{\,w_1\mapsto w_4, w_4\mapsto w_1, w_2\mapsto w_3, w_3\mapsto w_2\,\}$ is an automorphism of order~2 for $P_4$ without fixed points, so $P_4^{*2}$ is the empty graph. Trivially, for any non-empty input graph~$G$, $\parhcol[P_4^{*2}]$ is always zero.
Thus, $\#_2\prb{HomsTo}P_4$ is computable in polynomial time by Corollary~\ref{cor:polyt-graphs}.
\end{proof}

In the hardness proof of $\#_4\prb{HomsTo}{P_4}$, we will use the following problem as an intermediate stop in our 
chain of reductions.

\probpar{\conbis{k}.}
{Positive integer $k$.}
{Connected graph $G$.}
{$|\calI(G)|\pmod k$.}

Recall Theorem~\ref{thm:bis} showing that \kbis{k} is \nkp{}-complete for all integers~$k$. The next lemma shows that 
\conbis{k} is also hard for all positive integers.

\begin{lemma}
For all integers~$k$, $\conbis{k}$ is $\nkp$-complete.
\end{lemma}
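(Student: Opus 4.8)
The plan is to establish the two directions of $\nkp$-completeness separately. Membership in $\nkp$ is immediate: the map $G \mapsto |\calI(G)|$ is a function in $\shp$, and $\conbis{k}$ merely asks for its residue modulo~$k$. The substance is the hardness, for which I would give a polynomial-time Turing reduction from $\kbis{k}$, which is $\nkp$-complete by Theorem~\ref{thm:bis}.

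The reduction rests on the elementary observation that the number of independent sets is multiplicative over connected components: if $G$ has connected components $G_1,\dots,G_c$, then a subset of $V(G)$ is independent if and only if its restriction to each $V(G_i)$ is independent, so $|\calI(G)| = \prod_{i=1}^{c}|\calI(G_i)|$. Concretely, given an instance $G$ of $\kbis{k}$ (a bipartite graph, possibly disconnected), first compute its connected components $G_1,\dots,G_c$ in polynomial time. Each $G_i$ is a connected graph, hence a legal instance of $\conbis{k}$; querying the oracle on each returns $|\calI(G_i)| \pmod k$. Taking the product of these $c \le |V(G)|$ residues modulo~$k$ yields $|\calI(G)| \pmod k$, which is exactly the output demanded for the $\kbis{k}$ instance. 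The whole procedure runs in polynomial time and makes polynomially many oracle calls.

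There is no real obstacle here; the only points worth a moment's care are bookkeeping ones. One should note that a \emph{Turing} reduction is genuinely needed: a single oracle call to $\conbis{k}$ cannot inspect a disconnected graph, so one cannot expect a parsimonious reduction without, say, introducing a connecting gadget that multiplies the count by a unit modulo~$k$; the component-splitting Turing reduction sidesteps this entirely. It is also harmless that $\conbis{k}$ does not restrict its input to bipartite graphs: components of a bipartite graph are themselves bipartite, so the reduction would work verbatim under that restriction, and it certainly works without it. The trivial edge case of an empty input graph can be answered directly with the value~$1$.
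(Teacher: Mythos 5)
Your proof is correct, but it takes a genuinely different route from the paper's. The paper constructs a single connected instance: it first arranges (harmlessly, by re-sorting isolated vertices into one side of the bipartition) that all isolated vertices of $G$ lie in $V_L$, then adds a fresh apex vertex $v_0$ adjacent to every vertex of $V_L$. The resulting $G'$ is connected, and splitting $\calI(G')$ according to whether $v_0$ is in the independent set yields $|\calI(G')| = |\calI(G)| + 2^{|V_R|}$; a single oracle call on $G'$ followed by subtracting $2^{|V_R|}\pmod k$ recovers the answer. Your argument instead exploits the multiplicativity of $|\calI(\cdot)|$ over connected components: decompose $G$, query the oracle on each component, and multiply the residues modulo $k$. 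Both are valid polynomial-time Turing reductions from \kbis{k}. Yours is more elementary — no gadget, no correction term, and it works verbatim even if $\conbis{k}$'s inputs are required to be bipartite — at the cost of linearly many oracle calls; the paper's apex construction buys a one-query reduction, a marginally stronger form of reducibility that plays no further role here. Your remark that a parsimonious reduction would need a connecting gadget multiplying the count by a unit is exactly the right instinct, and the paper's $v_0$-gadget is in fact such a device, except that it adds $2^{|V_R|}$ rather than multiplying, which is why the paper also needs only a Turing reduction.
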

\begin{proof}
We give a Turing reduction from \kbis{k}, then the lemma follows from Theorem~\ref{thm:bis}.
Let $G$ be a bipartite graph, input for $\kbis{k}$.
Assume, without loss of generality, that in the bipartition~$V_L, V_R$ of $V(G)$, all the isolated vertices of $G$ are 
contained in $V_L$. We construct an instance $G'$ for \conbis{k} by adding an extra vertex $v_0$ to a copy of $G$ and 
connecting $v_0$ with all the vertices in $V_L$. 
That is, $V(G')=V(G)\cup\{v_0\}$ and $E(G')=E\cup\{\,(v,v_0), (v_0,v)\mid v\in V_L\,\}$.

We claim that $|\calI(G)|+2^{|V_2|}=|\calI(G')|$.
Let $\calI_1(G')=\{I\in\calI(G')\mid v_0\in I\}$ and let $\calI_2(G')=\{I\in\calI(G')\mid v_0\notin I\}$.
$\calI_1(G')$ and $\calI_2(G')$ partition $\calI(G')$.
For every $I\in\calI_1(G')$, it must be the case that $I\cap V_1=\varnothing$, as every vertex in $V_1$ is adjacent to $v_0$ in $G'$. 
Any subset of $V_2$ can be an independent set in $\calI_1(G')$, hence $|\calI_1(G)|=2^{|V_2|}$.
To conclude the proof of the claim, we will show that $|\calI_2(G')|=|\calI(G)|$.
Since $v_0$ is not in any independent set in $|\calI_2(G')|$, every independent set of $G$ is an independent set in $\calI_2(G')$ and vice versa.
The lemma follows.
\end{proof}

We can now show our claimed hardness result.

\begin{proposition} \label{prop:primepowers}
Let $P_4$ be the path $w_1w_2w_3w_4$. Then $\#_4\prb{HomsTo}P_4$ is $\parp$-hard.
\end{proposition}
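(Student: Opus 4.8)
The plan is to give a polynomial-time Turing reduction from \kbis{2}, counting independent sets of a bipartite graph modulo~$2$, which is \parp-complete by Theorem~\ref{thm:bis} (case $k=2$). The heart of the argument is the identity $|\Homs{G}{P_4}| = 2\,|\calI(G)|$ for every connected bipartite graph $G$ with at least one edge. To establish it, write the bipartition of $G$ as $(A,B)$, where both parts are nonempty since $G$ has an edge, and note that $P_4=w_1w_2w_3w_4$ is bipartite with parts $\{w_1,w_3\}$ and $\{w_2,w_4\}$. As $G$ is connected, every homomorphism $\sigma\colon G\to P_4$ either maps $A$ into $\{w_1,w_3\}$ and $B$ into $\{w_2,w_4\}$, or maps $A$ into $\{w_2,w_4\}$ and $B$ into $\{w_1,w_3\}$; since $A\neq\emptyset$ these two options are mutually exclusive. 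In the first option the only pair of values that is \emph{not} an edge of $P_4$ is $(w_1,w_4)$, so $\sigma$ is a homomorphism precisely when no edge of $G$ joins $\sigma^{-1}(w_1)\subseteq A$ to $\sigma^{-1}(w_4)\subseteq B$; since $A$ and $B$ are themselves independent, $\sigma\mapsto\sigma^{-1}(w_1)\cup\sigma^{-1}(w_4)$ is a bijection from these homomorphisms onto $\calI(G)$. The second option is symmetric, with forbidden pair $(w_4,w_1)$, and likewise contributes $|\calI(G)|$, so $|\Homs{G}{P_4}|=2\,|\calI(G)|$. In particular this number is even, so its residue modulo~$4$ lies in $\{0,2\}$ and equals $2\cdot(|\calI(G)|\bmod 2)$.

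Granting the identity, the reduction is routine. Given a bipartite graph $G$, input to \kbis{2}, I would first decompose it into its connected components $G_1,\dots,G_c$ in polynomial time; each $G_i$ is bipartite and $|\calI(G)|=\prod_{i=1}^c|\calI(G_i)|$, so it suffices to determine each $|\calI(G_i)|\bmod 2$. If $G_i$ is a single vertex then $|\calI(G_i)|=2\equiv0\pmod 2$; otherwise $G_i$ is connected bipartite with an edge, and I would query the $\#_4\prb{HomsTo}P_4$ oracle on $G_i$ to obtain $r_i=|\Homs{G_i}{P_4}|\bmod 4\in\{0,2\}$, whence $|\calI(G_i)|\equiv r_i/2\pmod 2$ by the identity. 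Multiplying the residues $r_i/2$ (and $0$ for the singleton components) yields $|\calI(G)|\bmod 2$. This uses at most $|V(G)|$ oracle calls and polynomial post-processing, so $\#_4\prb{HomsTo}P_4$ is \parp-hard under Turing reductions.

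The real content — and the one step that needs a careful argument — is the identity $|\Homs{G}{P_4}|=2\,|\calI(G)|$: one must verify that a homomorphism of a \emph{connected} bipartite graph to $P_4$ encodes exactly an independent set on each of the two admissible orientations of the parts, so that the total is $2|\calI(G)|$ and not the unrelated quantity $2^{|A|}+2^{|B|}$ that one obtains for $\Homs{G}{P_3}$. Connectivity is essential here: for a disconnected graph the homomorphism count is the product over components, and a product of two or more even numbers vanishes modulo~$4$, destroying the information — which is why the reduction is applied componentwise, and why the connected variant \conbis{2} is the natural problem to keep in mind. It is worth checking the boundary cases of the bipartition argument against small examples, e.g.\ $G=K_2$, where $|\Homs{G}{P_4}|=6=2\cdot3=2|\calI(G)|$, and $G=P_3$, where $|\Homs{G}{P_4}|=10=2\cdot5=2|\calI(P_3)|$.
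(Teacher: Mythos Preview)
Your proof is correct and follows essentially the same approach as the paper: both establish the identity $|\Homs{G}{P_4}|=2\,|\calI(G)|$ for connected bipartite $G$ via the same bijection (the paper writes it as $I\mapsto\sigma_I$ together with the order-$2$ automorphism $\rho$ of $P_4$, which is exactly your two ``options''), and then read off $|\calI(G)|\bmod 2$ from $|\Homs{G}{P_4}|\bmod 4$. The only organisational difference is that the paper first proves \conbis{2} is \parp-complete (by attaching a universal vertex to force connectivity) and then reduces from that, whereas you reduce directly from \kbis{2} by decomposing the input into connected components and querying the oracle componentwise; both routes are valid Turing reductions.
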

\begin{proof}
We are going to show that $\conbis{2}$ reduces to $\#_4\prb{HomsTo}P_4$.
Let $G=(V_L,V_R,E)$ be a non-empty instance of $\conbis{2}$ and let $\calI(G)$ be the set of independent sets of $G$.
We proceed by showing $2|\calI(G)|=|\,\Homs{G}{P_4}|$.

Let $I\in\calI(G)$.
We define $\sigma_I:V(G)\rightarrow V(H)$ to be the following mapping
\[
\sigma_I(v)=
\begin{cases}
 w_1,& \textrm{if }v\in V_L\cap I\\
 w_2,& \textrm{if }v\in V_R\setminus I\\
 w_3,& \textrm{if }v\in V_L\setminus I\\
 w_4,& \textrm{if }v\in V_R\cap I.
\end{cases}
\]

To observe that $\sigma_I$ is a homomorphism, let $(v_1,v_2)\in E$.
We show $(\sigma_I(v_1),\sigma_I(v_2))\in E(P_4)$.
Without loss of generality, assume $v_1$ is in $V_L$, then $\sigma_I(v_1)\in\{w_1,w_3\}$ and since $v_2\in V_R$, we have $\sigma_I(v_2)\in\{w_2,w_4\}$ by the definition of $\sigma_I$.
Assume towards a contradiction $\sigma_I(v_1)=w_1$ and $\sigma_I(v_2)=w_4$.
For the latter to hold, $v_1$ and $v_2$ must both lie in $I$, which is not possible since $I$ is an independent set and $(v_1,v_2)\in E$.
With this we obtain $(\sigma_I(v_1),\sigma_I(v_2))\in E(P_4)$, and therefore $\sigma_I\in\Homs{G}{P_4}$.

Let $\rho=\{w_1\mapsto w_4, w_4\mapsto w_1, w_2\mapsto w_3, w_3 \mapsto w_2\}$ be the automorphism of order~2 of $P_4$.
Clearly, $\rho\circ\sigma_I$ is a homomorphism different from $\sigma_I$, as they differ on all $v \in V(G)$.
Thus, every $I$ yields the two homomorphisms $\sigma_I, \rho\circ\sigma_I \in \Homs{G}{P_4}$.

Let $I, I' \in \calI(G)$, $I \neq I'$, be different independent sets in $G$.
Without loss of generality there is $v\in I\setminus I'$.
For this $v$ all four values $\sigma_I(v)$, $(\rho\circ\sigma_I)(v)$, $\sigma_{I'}(v)$ and $(\rho\circ\sigma_{I'})(v)$ are different, thus $\sigma_I$, $\rho\circ\sigma_I$, $\sigma_{I'}$ and $\rho\circ\sigma_{I'}$ are four different elements of $\Homs{G}{P_4}$.

It remains to argue that for every $\sigma\in\Homs{G}{P_4}$ there is some $I\in\calI(G)$, such that $\sigma=\sigma_I$ or $\sigma=\rho\circ\sigma_I$.
To this end, let $\sigma\in\Homs{G}{P_4}$.
We argue that
$$I_\sigma=\{\,v\in V(G)\mid \sigma(v)\in\{w_1,w_4\}\,\}$$
is an independent set of $G$.
Let $v_1,v_2\in I_\sigma$.
The definition of $I_\sigma$ yields $(\sigma(v_1),\sigma(v_2))\notin E$.
As $\sigma$ is a homomorphism, there can be no edge $(v_1,v_2)\in I_\sigma$, so $I_\sigma$ is an independent set of $G$.
We conclude the proof by showing that $\sigma=\sigma_{I_\sigma}$ or $\sigma=\rho\circ\sigma_{I_\sigma}$.
For, let $v\in V_L\cap I_\sigma$.
If $\sigma(v)=w_1$, then $\sigma=\sigma_{I_\sigma}$, as $G$ is connected.
On the other hand $\sigma(v)=w_4$ implies $\sigma=\rho\circ\sigma_{I_\sigma}$ and the proposition follows.
\end{proof}

\section{Acknowledgements}
The first author would like to thank Leslie Ann Goldberg and David Richerby for fruitful discussions during the early 
stages of this work.

\newpage
\bibliographystyle{plain}
\bibliography{bibliography}
	
\end{document}